\documentclass[11pt, a4paper]{article}

\usepackage{amsmath}
\usepackage{amssymb}
\usepackage{amsfonts}
\usepackage{amsthm}
    \newtheorem{theorem}{Theorem}
    
    \newtheorem{lemma}[theorem]{Lemma}
    \newtheorem{corollary}[theorem]{Corollary}
    
    \newtheorem*{fact*}{Fact}
    
    \newtheorem*{observation*}{Observation}
    \newtheorem{claim}{Claim}
    \newtheorem*{claim*}{Claim}
    \theoremstyle{definition}
    \newtheorem{definition}{Definition}
    
    \newtheorem*{remark*}{Remark}
\usepackage{graphicx}
\usepackage{fullpage}
\usepackage{hyperref}
\usepackage{cite}
\usepackage{algorithmic}
\usepackage{algorithm}
\usepackage{comment}
\usepackage{enumitem}

\newcommand{\cre}{\mathsf{cr}}
\newcommand{\pcr}{\mathsf{cr'}}
\newcommand{\cost}{\mathsf{cost}}
\newcommand{\pcost}{\mathsf{cost'}}
\newcommand{\opt}{\mathsf{opt}}

\title{An Approximation Algorithm for 2-Vertex-Connectivity \\via Cycle-Restricted 2-Edge-Covers
\thanks{%
This work was partially supported by the joint project of Kyoto University and Toyota Motor Corporation,
titled ``Advanced Mathematical Science for Mobility Society'', 
by JSPS KAKENHI Grant Numbers JP22H05001 and JP24K02901, by JST SPRING, Grant Number JPMJSP2110, and by JST ERATO Grant Number JPMJER2301. 
}
}

\author{
Afrouz Jabal Ameli\thanks{Department of Computer Science, Utrecht University.
E-mail: a.jabalameli@uu.nl}
\and
Yusuke Kobayashi\thanks{Research Institute for Mathematical Sciences, Kyoto University.
E-mail: \{yusuke, tnoguchi\}@kurims.kyoto-u.ac.jp}
\and Takashi Noguchi\footnotemark[3]
}
\date{}

\begin{document}

\maketitle

\begin{abstract}
In the 2-Vertex-Connected Spanning Subgraph problem (2-VCSS), we are given an undirected graph $G$, and the objective is to find a 2-vertex-connected spanning subgraph $S$ of $G$ with the minimum number of edges.
In the context of survivable network design, 2-VCSS is one of the most fundamental and well-studied problems. 
There has been active research on improving the approximation ratio of algorithms, and the current best ratio is $\frac{4}{3}$, achieved by Bosch-Calvo, Grandoni, and Jabal Ameli.
In this paper, we improve the approximation ratio to $\frac{21}{16}+\varepsilon$ ($<1.313$).
The key idea in our algorithm is to introduce a 2-edge-cover without certain cycle components, and use it as an initial solution.
\end{abstract}

\section{Introduction}
    When designing networks in the real world, it is required not only that they be connected, but also that this connectivity be robust against the failure of some nodes or links.
    The survivable network design problem aims to minimize the network cost while ensuring robustness against such failures.
    One of the most fundamental models of robustness is vertex-connectivity.
    A graph $G=(V,E)$ is said to be 2-vertex-connected (2VC) if $G-\{v\}$ is connected for any vertex $v\in V$. 
    In the 2-Vertex-Connected Spanning Subgraph problem (2-VCSS), given a graph $G=(V, E)$, the objective is to find an edge subset $S\subseteq E$ of minimum cardinality such that the subgraph $(V,S)$ is 2VC, if one exists.
    
    It is easy to see that 2-VCSS is NP-hard, since a Hamiltonian cycle is an optimal solution whenever the input graph contains one.
    In addition, Czumaj and Lingas~\cite{CL} have shown that 2-VCSS is APX-hard, hence there is no PTAS for 2-VCSS unless $\text{P}=\text{NP}$.
    A $2$-approximation for 2-VCSS can be obtained by removing trivial ears from an open ear decomposition of the input graph, and various other approaches are also known.
    Khuller and Vishkin~\cite{KV} presented the first non-trivial $5/3$-approximation algorithm for 2-VCSS.
    This approximation ratio has been improved to $3/2$ by Garg, Vempala, and Singla~\cite{GVS}.
    Cheriyan and Thurimella~\cite{CT} also achieved $3/2$ approximation by using different approaches; they used, as a lower bound on the optimal solution for 2-VCSS, a set of edges such that each vertex is incident to at least two edges (called a 2-edge-cover), and this technique has become the standard approach today.
    Heeger and Vygen~\cite{HV} further improved this approximation ratio to $10/7$.
    The current best approximation ratio is $4/3$, achieved by Bosch-Calvo, Grandoni, and Jabal Ameli~\cite{BGJ} through a refinement of a reduction of the input graph.
    Our main result is to improve the approximation ratio to $\frac{21}{16}+\varepsilon$ ($< 1.313)$.
\begin{theorem}\label{thm:main}
    For any constant $\varepsilon >0$, there is a polynomial-time $\big(\frac{21}{16}+\varepsilon\big)$-approximation algorithm for 2-VCSS.
\end{theorem}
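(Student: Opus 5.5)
The plan follows the now-standard framework of Cheriyan--Thurimella, Heeger--Vygen and Bosch-Calvo--Grandoni--Jabal Ameli. We first reduce to a structured instance and then build a solution from a 2-edge-cover via credit-based gluing. The main novelty is to start from a \emph{cycle-restricted} 2-edge-cover rather than an arbitrary minimum one.

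\textbf{Preprocessing.} As in \cite{BGJ}, we reduce to instances with no small separators, no irrelevant edges, and no small ``contractible'' 2VC subgraphs. Any subgraph with at most $O(1/\varepsilon)$ vertices whose removal structure allows it is solved exactly by brute force. This is what costs the additive $\varepsilon$: each reduction loses at most $\varepsilon\cdot\opt$ in the approximation, by an inductive argument on the reduced instance. After reduction, every 2VC subgraph of $G$ has many vertices relative to the constants we care about, and every feasible solution must have at least $n$ edges.

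\textbf{Initial solution.} Instead of a minimum 2-edge-cover $D$ (which satisfies $|D|\le\opt$), we compute in polynomial time a 2-edge-cover $D$ whose cycle components avoid certain short lengths (say no triangles, or no cycles of length at most some small $k$). We need $|D|\le\opt$ up to a $(1+\varepsilon)$ factor, or an inequality mixing $|D|$ with $\opt$. The argument goes through matching/$b$-matching techniques with forbidden short cycles, analogous to triangle-free 2-matchings. In the reduced instance, the optimal 2VC solution itself contains such a restricted 2-edge-cover: any 2VC spanning subgraph on many vertices has a 2-edge-cover in which no component is a short cycle, because short cycles in $\opt$ either do not exist after reduction or can be merged.

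\textbf{Credits and gluing.} We assign each component of $D$ credits totalling $\frac{5}{16}|D|$ (i.e.\ $\frac{21}{16}$ minus 1 per edge), distributing them over cycles, bridges and blocks. We then (i) convert $D$ to a 2-edge-connected-component structure by bridge covering, paying with credits on bridges and leaves; (ii) merge components into a single 2-edge-connected spanning subgraph by repeatedly finding cheap merges (cycles in the component graph or $3$-matchings), each paid for by the credits of the merged components; and (iii) fix cut vertices by adding edges, using the absence of small separators after preprocessing.

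The crux is why credit $\frac{5}{16}$ per edge suffices: short cycles (triangles, $C_4$) were the tight cases forcing $\frac13$ in \cite{BGJ}. Excluding them means every cycle component has length at least some $\ell$, so its credit $\frac{5}{16}\ell$ is large enough to pay for the one or two edges needed when merging it with a neighbour.

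\textbf{Main obstacle.} I expect the hardest step to be the interplay between the restricted 2-edge-cover computation and the gluing analysis. The restricted cover must be computable in polynomial time and still bounded by $\opt$. This likely needs a relaxation where short cycles of $D$ are allowed only if they carry extra charge, or a local-search improvement with $(1+\varepsilon)$ loss. The case analysis in the gluing, especially merging remaining $C_5$/$C_6$-type components and handling cut vertices in 2EC-but-not-2VC pieces, must also exactly meet the $\frac{5}{16}$ budget. I would first try to show that each merge step adding $a$ edges removes at least $\frac{16}{5}a$ worth of credit, checking the worst configurations (two long cycles joined via a pair of edges) directly.
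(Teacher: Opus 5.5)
Your skeleton (reduce to structured graphs, start from a better 2-edge-cover, charge $\frac{5}{16}$ per edge, then glue as in \cite{BGJ}) matches the paper. However, the step you defer as the ``main obstacle'' is the actual content of the proof, and the concrete choices you suggest fail in both directions.

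\textbf{If your forbidden cycle lengths include $6$, you have no algorithm.} The paper notes it is not known how to compute an almost-minimum 2-edge-cover avoiding even triangle, 4-cycle and 6-cycle components. (The related $C_{\le k}$-free 2-matching problem is open for $k=4$ and NP-hard for $k\ge5$.) The fact that $\opt$ contains such a cover only gives a lower bound, not an algorithm.

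\textbf{If they stop below $6$ (e.g.\ only triangles), 6-cycle components break the budget.} Giving a 6-cycle the large-component credit $2$ already exceeds $6\cdot\frac{5}{16}=\frac{15}{8}$. Giving it $\frac{15}{8}$ fails on a 6-cycle $C=(u_1,\dots,u_6)$ with $N(u_1)\cup N(u_3)\cup N(u_5)\subseteq\{u_2,u_4,u_6\}$. Such a $C$ has no shortcut pair. The pair would have to lie in $\{u_2,u_4,u_6\}$, and a Hamiltonian path of $G[V(C)]$ between two of these needs six path-edges at $u_1,u_3,u_5$, all ending in $\{u_2,u_4,u_6\}$, whose path-degrees sum to only four. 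So $C$ cannot be replaced by a path plus a 2-matching at net cost one edge. It must be attached with two new edges as in \cite{BGJ}, and $\frac{15}{8}$ credits do not pay for that. Your heuristic that a cycle's credit covers ``the one or two edges needed when merging'' is exactly what fails: every merge must be net $+1$, and that is what shortcut pairs provide.

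\textbf{The missing idea} is to forbid exactly triangles, 4-cycles with an isolated pair, and 6-cycles with an isolated triple (the last imposed as the cut condition $|\delta(V(C))\cap S|\ge2$). This class is both computable and sufficient.
\begin{enumerate}
\item[(a)] Computable: contract a maximal family of disjoint isolated-triple 6-cycles, and replace each isolated pair of a suitable maximal family by a claw gadget attached to its three neighbours. Up to a fixed additive shift, this turns the problem into a minimum $\mathcal{T}$-free 2-edge-cover, which has a PTAS.
\item[(b)] Sufficient: every 6-cycle component then has a shortcut pair. Hence 6-edge components can be forced to be cycles (strongly canonical) and merged at net cost one edge. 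The isolated-pair exclusion, with the 3-matching lemma, lets a 4-cycle adjacent to a single large component be absorbed within its $4\cdot\frac{5}{16}$ credits.
\end{enumerate}
Once no small components remain, $\cost$ coincides with $\pcost$ from \cite{BGJ}, and their transformation finishes the job. No new cut-vertex analysis is needed.

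\textbf{Your preprocessing is also off.} Contracting small 2VC subgraphs is a 2-ECSS reduction that does not preserve 2-vertex-connectivity. That is precisely why the paper moves such low-freedom structures into the definition of the initial cover instead. The reduction to structured graphs is lossless: it only needs solutions of size at most $\max\{\opt(G),\alpha\,\opt(G)-2\}$. The $\varepsilon$ comes solely from the PTAS for $\mathcal{T}$-free 2-matchings.
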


    Before describing our approach, we introduce a natural edge-connectivity version of the 2-VCSS, which is called the 2-Edge-Connected Spanning Subgraph problem (2-ECSS).
    In 2-ECSS, given a graph $G=(V, E)$, the objective is to find a minimum cardinality edge subset $S$ of $G$ such that the subgraph $(V,S\setminus\{e\})$ is connected for any edge $e\in E$; that is, $(V,S)$ is 2-edge-connected (2EC).
    As with 2-VCSS, 2-ECSS is APX-hard~\cite{CL,CristG}, and no PTAS exists unless $\text{P}=\text{NP}$.
    Khuller and Vishkin~\cite{KV} gave the first non-trivial $3/2$-approximation algorithm for 2-ECSS, and various studies~\cite{CSS, HVV,SV,GGA,KN2023,BGGHAL,HLL} have improved the approximation ratio.
    In recent years, competition to improve the approximation ratio has intensified. 
    After Garg, Grandoni, and Jabal Ameli~\cite{GGA} achieved a $1.326$-approximation in 2023, Kobayashi and Noguchi~\cite{KN2023} presented a $(1.3+\varepsilon)$-approximation algorithm that uses an algorithm for computing a maximum triangle-free 2-matching~\cite{R-HartD, KN2025,BGA}. 
    Then, Bosch-Calvo et al.~\cite{BGGHAL} gave a 1.25-approximation algorithm, and Hommelsheim, Lindermayr, and Liu~\cite{HLL} further improved this ratio by a small margin, which currently stands as the best-known approximation ratio.

    The approximation approaches for 2-VCSS and 2-ECSS are very similar, and the standard method is transforming a minimum 2-edge-cover into a feasible solution.
    Recall that a 2-edge-cover is a set of edges in which each vertex is incident to at least two edges.
    Since any feasible solution to 2-VCSS or 2-ECSS is clearly a 2-edge-cover, the size of a minimum 2-edge-cover can be used as a lower bound on their optimal values.
    By analyzing the number of edges added when transforming a minimum 2-edge-cover into a feasible solution, we can guarantee the approximation ratio of the algorithm.

    Among several factors, two major contributors to recent progress in approximation algorithms for 2-ECSS are the introduction of triangle-free 2-edge-covers and smart reductions of the input graph.
    
    A {\em triangle-free 2-edge-cover} is a 2-edge-cover that does not contain a triangle as a component.
    Since a triangle-free 2-edge-cover is closer to a feasible solution than a 2-edge-cover, by using a minimum triangle-free 2-edge-cover instead of a minimum 2-edge-cover as the initial solution, the approximation ratio for 2-ECSS is improved~\cite{KN2023}.
    
    Reductions of the input graph aim to reduce the increase in the number of edges when transforming an initial solution into a feasible solution, by enlarging the flexibility in selecting edges during the transformation.
    The main reductions that have been used so far are the following:  
    
    \begin{enumerate}
        \item Decomposing the input graph so that each resulting graph is nearly 3-vertex-connected~\cite{GGA,BGGHAL}.
        \item Contracting those 2EC subgraphs of the input graph in which many edges are already forced to be included in the solution (that is, subgraphs with little freedom in edge selection)~\cite{HVV,GGA}.
    \end{enumerate}
    
    \noindent The first reduction for 2-ECSS, i.e., the decomposition into nearly 3-vertex-connected graphs has also been applied to 2-VCSS~\cite{BGJ} (see also Subsection~\ref{sec:structured}) and has contributed to improving the approximation ratio.
    
\paragraph*{Technical Challenges and Key Ideas}
    In this paper, we aim to improve the approximation ratio for 2-VCSS by using triangle-free 2-edge-covers and applying the second reduction above originally developed for 2-ECSS.
    However, contracting 2EC (or 2VC) subgraphs as a reduction of the input graph cannot be directly applied to 2-VCSS.
    This is because, unlike 2-edge-connectivity, 2-vertex-connectivity is not preserved under graph contraction, which presents the main difficulty in dealing with vertex-connectivity. 

    We address this issue by handling subgraphs with little freedom in edge selection at the stage of constructing the initial solution, rather than the reduction step.
    Specifically, we impose new constraints on the initial 2-edge-cover, which we call a \emph{cycle-restricted 2-edge-cover}.
    A cycle-restricted 2-edge-cover is a 2-edge-cover in which there is no component that consists of a triangle, a certain 4-cycle (\emph{4-cycle with an isolated pair}), or a certain 6-cycle (\emph{6-cycle with an isolated triple}).
    If the initial solution contains such components, merging them into different components requires many additional edges, so imposing constraints to avoid them in advance helps improve the approximation ratio.

    In this paper, our technical contributions are twofold: first, we introduce a cycle-restricted 2-edge-cover and show that an almost-minimum cycle-restricted 2-edge-cover can be computed in polynomial time (see Section~\ref{sec:initial}); second, we demonstrate that the properties of cycle-restricted 2-edge-covers are effective in transforming it into a feasible solution of 2-VCSS (see Section~\ref{sec:remove}).
    It is worth noting that it remains unclear whether we can compute an almost-minimum 2-edge-cover in which no component forms a triangle, 4-cycle, or 6-cycle. Therefore, focusing on 4-cycles with an isolated pair and 6-cycles with an isolated triple is key to our argument.

    As in~\cite{BGJ}, the number of edges added when transforming the initial solution into a 2-edge-connected graph is analyzed using a {\em credit-based argument}.
    We first assign $\frac{5}{16}$ credits to each edge of the initial solution $S$. 
    Then, by using these credits to pay for any increase in the number of edges of $S$, the size of $S$ remains within $1+\frac{5}{16}=\frac{21}{16}$ times the initial size (see Sections~\ref{sec:credit} and~\ref{sec:remove} for details).

\paragraph*{Related Work}
    Approximation algorithms have also been studied for the $k$-Edge-Connected Spanning Subgraph problem ($k$-ECSS). 
    This problem generalizes 2-ECSS and asks for a spanning subgraph with a minimum number of edges such that the subgraph remains connected after the removal of any set of at most $k-1$ edges (see \cite{CT,GG,GGTW}).
    We can also consider the weighted version of 2-ECSS: given an edge-weighted graph $G$, the objective is to find a 2EC spanning subgraph of $G$ with minimum total weight.
    Jain~\cite{Jain} presented a 2-approximation algorithm for a more general problem, and for Weighted 2ECSS, it remains a major open problem whether there exists an approximation algorithm with a ratio strictly better than~2.
    The case in which edge weights are restricted to $\{0,1\}$ is known as the Forest Augmentation problem, for which Grandoni, Jabal Ameli, and Traub~\cite{GAT} gave a 1.9973-approximation algorithm.
    The case in which the edges of weight $0$ form a connected graph is called the (Weighted) Tree Augmentation problem, and the currently best known approximation ratio is $1.5+\varepsilon$ for any $\varepsilon > 0$, achieved in a series of works by Traub and Zenklusen~\cite{TZ21,TZ23}.  

    Our algorithm for 2-VCSS uses a PTAS for 
    a generalization of the triangle-free 2-matching problem.
    A {\em triangle-free 2-matching} is a 2-matching that contains no cycles of length three, and a complicated exact algorithm for finding a maximum triangle-free 2-matching was announced by Hartvigsen~\cite{HartD} in 1984; an improved version was recently published in \cite{R-HartD}.
    A simple PTAS for the triangle-free 2-matching problem was proposed by Bosch-Calvo, Grandoni, and Jabal Ameli~\cite{BGA}, and a simple analysis for this was given by Kobayashi and Noguchi~\cite{KN2025}.
    The triangle-free 2-matching problem corresponds to the case where $k=3$ in the $C_{\le k}$-free 2-matching problem, which asks for a 2-matching of maximum size that does not contain cycles of length $k$ or less.
    The $C_{\le k}$-free 2-matching problem has been studied due to its close relationship with Hamiltonian cycles.
    It has been shown that the $ C_{\le k} $-free 2-matching problem is NP-hard for the case of $k\ge 5$ by Papadimitriou (described by Cornuéjols and Pulleyblank~\cite{CP}).
    Currently, the triangle-free 2-matching problem and its generalization, the $C_{\le k}$-free 2-matching problem, are suggested to be related to 2-ECSS and 2-VCSS~\cite{HLL,BGA,KN2025}, and from this viewpoint as well, determining the complexity for the case $k=4$ remains an important open question.

\section{Preliminaries}
A graph $G=(V,E)$ is said to be \emph{2-vertex-connected} (abbreviated as \emph{2VC}) if $G-\{v\}$ is connected for every $v\in V$ and $|V|>3$.
In the 2-Vertex-Connected Spanning Subgraph problem (2-VCSS), given a graph $G=(V, E)$, the objective is to find an edge subset $S\subseteq E$ of minimum cardinality such that the subgraph $(V,S)$ is 2VC, if one exists.
We denote by $\opt (G)$ the optimal value of 2-VCSS on $G$.

In what follows, when the vertex set $V$ is clear from the context, we will often identify an edge subset $S\subseteq E$ with the induced subgraph $(V,S)$.
For example, a connected component (simply called a \emph{component}) of $S$ refers to a connected component of the subgraph $(V,S)$.
A \emph{path} $P$ (of length $k$) is defined as a sequence of distinct consecutive edges $u_1 u_2, u_2 u_3, \dots , u_k u_{k+1}$ such that $u_i$ and $u_j$ are distinct for any $i, j$. 
We sometimes identify a path $P$ with the set of edges it contains, ignoring their order.
We call a cycle $C=(v_1,v_2,\dots,v_k)$ of length $k$ a \emph{$k$-cycle}, and in particular, when its length is three, we call it a \emph{triangle}.
For a component or a cycle $C$, we denote its vertex set by $V(C)$ and its edge set by $E(C)$.
For a vertex $v\in V$, we denote the set of vertices adjacent to $v$ by $N(v)$.
For a vertex subset $W\subseteq V$, we denote the edge set between $W$ and $V\setminus W$ by $\delta(W)$. 
For a vertex subset $W\subseteq V$, let $G[W]$ denote the subgraph of $G$ induced by $W$.

\subsection{\texorpdfstring{$\mathcal{T}$}{T}-Free 2-Matching/2-Edge-Cover}
\label{sec:Tfree}

In this subsection, we introduce the concept of a $\mathcal{T}$-free 2-edge-cover, which is required for constructing the initial solution.
Let $G=(V,E)$ be a graph and $\mathcal{T}$ be a subset of triangles in $G$. 
An edge set $F$ is \emph{$\mathcal{T}$-free} if no component of $F$ is a triangle in $\mathcal{T}$.
An edge set $F$ is a \emph{2-matching} if each vertex is incident to at most two edges in $F$.
An edge set $F$ is a \emph{2-edge-cover} if each vertex is incident to at least two edges in $F$. 
For a graph $G$ and a set of triangles $\mathcal{T}$ in $G$, we denote the maximum size of a $\mathcal{T}$-free 2-matching in $G$ by $\nu (G,\mathcal{T})$, and the minimum size of a $\mathcal{T}$-free 2-edge-cover in $G$ by $\rho(G,\mathcal{T})$.  
Kobayashi and Noguchi~\cite{KN2025} proposed a PTAS for maximizing the size of a $\mathcal{T}$-free 2-matching. 

\begin{lemma}[\mbox{\cite[Corollary 4.1]{KN2025}}]\label{lem:Tfree-2match}
    Let $\varepsilon>0$. 
    Let $G$ be a graph that may contain parallel edges and let $\mathcal{T}$ be a set of triangles in $G$ such that no edge of any triangle in $\mathcal{T}$ has a parallel edge.\footnote{In the original statement of \cite[Corollary 4.1]{KN2025}, it is assumed that $G$ has no parallel edges. However, the argument also applies to graphs with parallel edges, provided that no edge of any triangle in $\mathcal{T}$ has a parallel edge; see Appendix~\ref{sec:remarkparallel}.}
    Then, one can compute a $\mathcal{T}$-free 2-matching of size at least $(1-\varepsilon)\nu(G,\mathcal{T})$ in polynomial time.
\end{lemma}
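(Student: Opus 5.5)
The lemma is quoted from \cite[Corollary 4.1]{KN2025}, so the plan is to re-run that argument and check where simplicity of $G$ is used; the footnote claims the only needed assumption is that triangle edges have no parallel copies.

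The PTAS in \cite{KN2025} is a local search. We keep a $\mathcal{T}$-free 2-matching $F$ and repeatedly replace it by $F \triangle A$ whenever $A$ is an alternating structure of size at most a constant $L=L(\varepsilon)$, the result is still a $\mathcal{T}$-free 2-matching, and the size strictly increases. Since $|E|$ is polynomial and each improvement raises $|F|$ by at least one, there are at most $|V|$ improvements. Each step enumerates edge subsets of size at most $L$, which takes $|E|^{O(L)}$ time. So the running time is polynomial even with parallel edges, because $|E|$ is part of the input.

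For the approximation guarantee, we compare a locally optimal $F$ with a maximum $\mathcal{T}$-free 2-matching $M$. Consider $F \triangle M$. In the degree-bounded 2-matching sense, $F \triangle M$ splits into alternating paths and cycles. The long components are cut into pieces of length about $L$, losing only an $O(1/L)$ fraction. Every short piece that gains size must be blocked, and it can only be blocked because applying it creates a triangle in $\mathcal{T}$ as a component. A charging argument then assigns each such blocked piece to a triangle of $F$ or $M$ near it, and shows each triangle is charged only $O(1)$ times. This gives $|F| \ge (1-O(1/L))|M|$, and choosing $L = \Theta(1/\varepsilon)$ finishes the proof.

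The main point to verify is where parallel edges could break the charging. Triangles in $\mathcal{T}$ contain no parallel edges, so whenever a piece would create a triangle $T\in\mathcal{T}$, the edges of $T$ are uniquely determined by their endpoints. Hence the identification of ``the triangle formed on vertex set $\{a,b,c\}$'' that the charging uses is still well defined. Alternating cycles of length $2$, formed by two parallel edges $e\in F$ and $e'\in M$, have no effect on size and need no special treatment: they never close a $\mathcal{T}$-triangle, because neither edge can lie in a triangle of $\mathcal{T}$. I expect the hardest part to be going through the case analysis in \cite{KN2025} to check that no step silently uses simplicity, for instance by bounding the number of edges between two vertices or by treating two edges with the same endpoints as equal. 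Wherever such a step appears, the fix is to apply it only to edges of triangles in $\mathcal{T}$, where the hypothesis makes those edges unique.
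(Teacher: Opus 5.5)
Your strategy is the same as the paper's: import the PTAS of \cite{KN2025} and check that simplicity of $G$ is only needed for edges of triangles in $\mathcal{T}$. However, the proposal stops exactly where the paper's justification starts. You never locate the step of the cited proof that uses simplicity, and you never verify it; you defer this as ``the hardest part.'' That verification is the whole content of Appendix~\ref{sec:remarkparallel}.

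The only use of simplicity in \cite{KN2025} is its Observation~1: if a 2-matching $M$ contains exactly two edges $e_1=uv$, $e_2=vw$ of a triangle $T=(u,v,w)$, then $M$ contains no triangle $T'\in\mathcal{T}$ with $E(T')\cap E(T)\neq\emptyset$. Here $T$ is arbitrary (possibly $T\notin\mathcal{T}$), and its third edge $e_3=wu\notin M$ may have parallel copies. With unrestricted parallel edges the statement is false: $M$ could contain $e_1$, $e_2$ and a second copy of $wu$, and these three edges could form a triangle of $\mathcal{T}$.

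The missing argument is short:
\begin{itemize}
\item The shared edge lies in $M$, so it is $e_1$ or $e_2$, and therefore $v\in V(T')$.
\item Since $M$ has only two edges at $v$, we get $\{e_1,e_2\}\subseteq E(T')$.
\item Hence $T'$ has an edge $e_3'$ joining $w$ and $u$. By hypothesis $e_3'$ has no parallel copy, so $e_3'=e_3\notin M$, a contradiction.
\end{itemize}
Your slogan that edges of $\mathcal{T}$-triangles are determined by their endpoints is the right ingredient. But it must be applied to $T'$, not to the triangle $T$ that $M$ half-covers, and it must be applied to an identified statement. As written, the proposal asserts rather than shows that the cited proof survives.

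Separately, your sketch of the approximation analysis is a reconstruction that you neither prove nor need. That sketch cuts long alternating components into pieces of length about $L$ and charges each blocked piece to a nearby triangle with $O(1)$ load. Cutting pieces of the symmetric difference so that each piece can still be applied without creating a $\mathcal{T}$-triangle component is the nontrivial core of any such analysis. The claim that each triangle is charged only $O(1)$ times is unsupported.

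Since the lemma is imported, you only need to point to the single step of the cited proof that uses simplicity and reprove it under the weaker hypothesis, as above. Your remark that alternating 2-cycles formed by two parallel copies cannot interact with $\mathcal{T}$-triangles is correct but plays no role.
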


Moreover, in the concluding remarks of~\cite{KN2023}, Kobayashi and Noguchi proposed an algorithm that constructs a $(1+\varepsilon)$-approximate minimum triangle-free $2$-edge-cover from a $(1-\varepsilon)$-approximate maximum triangle-free $2$-matching, where ``triangle-free'' corresponds to the case in which $\mathcal{T}$ is the set of all triangles in the input graph.
In fact, the same statement holds even when $\mathcal{T}$ is an arbitrary subset of triangles.
Indeed, by replacing ``triangle'' with ``triangle in $\mathcal{T}$'' and ``triangle-free'' with ``$\mathcal{T}$-free'' in the proofs in~\cite{KN2023}, 
we obtain the following. 
    
\begin{lemma}[\mbox{\cite[Proposition 10]{KN2023}}]\label{lem:trifree-2cover}
    For any $\varepsilon>0$, given a $\mathcal{T}$-free 2-matching in $G$ of size at least $(1-\varepsilon)\nu(G,\mathcal{T})$, one can compute a $\mathcal{T}$-free 2-edge-cover in $G$ of size at most $(1+\varepsilon)\rho(G,\mathcal{T})$ in polynomial time.
\end{lemma}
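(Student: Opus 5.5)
The plan is to follow the standard reduction from a maximum (restricted) 2-matching to a minimum (restricted) 2-edge-cover, in the spirit of the classical relation between maximum matchings and minimum edge covers. It rests on the identity $\rho(G,\mathcal{T}) = 2|V| - \nu(G,\mathcal{T})$, up to feasibility assumptions (every vertex has degree at least two, and a $\mathcal{T}$-free 2-edge-cover exists).

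For $\rho \le 2|V|-\nu$, take a maximum $\mathcal{T}$-free 2-matching $M$. Let the deficiency of a vertex $v$ be $2-\deg_M(v)$, so the total deficiency is $2|V|-2|M|$. Augment $M$ greedily: for each deficient vertex $v$, add an edge of $E\setminus M$ incident to $v$. Each added edge reduces total deficiency by at least one, so at most $2|V|-2|M|$ edges are added, giving size at most $2|V|-|M|$. We must keep the result $\mathcal{T}$-free. An added edge only enlarges components, and a component of a 2-edge-cover that is a triangle must already have been a triangle component of $M$ or be created by adding edges inside a path. We therefore choose added edges that avoid closing a path on three vertices into a triangle in $\mathcal{T}$; when this is forced, a maximality/exchange argument on $M$ rules it out or allows a local swap keeping the count.

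For $\nu \ge 2|V|-\rho$, take a minimum $\mathcal{T}$-free 2-edge-cover $F$. Repeatedly delete an edge both of whose endpoints have degree greater than two, while keeping $\mathcal{T}$-freeness. In a minimal cover such edges do not exist, so every edge has an endpoint of degree exactly two. From here, a counting argument (remove for each vertex of excess degree its surplus edges) yields a 2-matching of size at least $2|V|-|F|$. Deleting edges can create triangle components only if the remainder is a triangle in $\mathcal{T}$. In that case one of the removed edges attached to that triangle is kept instead and a different edge is dropped; this case analysis is the delicate point.

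The algorithmic statement then follows. Given $M$ with $|M|\ge(1-\varepsilon)\nu$, the construction of the second paragraph yields a cover of size at most $2|V|-|M| \le 2|V|-\nu+\varepsilon\nu = \rho+\varepsilon\nu$. Since $\nu\le|V|\le\rho$ (every 2-edge-cover has at least $|V|$ edges), this is at most $(1+\varepsilon)\rho$.

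The main obstacle is the $\mathcal{T}$-freeness during augmentation. If $M$ is only approximately maximum, the exchange arguments that appeal to maximality are unavailable. I would therefore first make $M$ locally improved: repeatedly apply polynomial-time local improvements that increase $|M|$ or rearrange triangle components, until every deficient vertex can be fixed without creating a triangle in $\mathcal{T}$. Crucially, the replacement ``triangle $\to$ triangle in $\mathcal{T}$'' does not break any step of the original proof in \cite{KN2023}, since each step only inspects whether a specific triangle is forbidden.
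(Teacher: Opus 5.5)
Your closing remark is essentially the paper's own justification, which proves nothing beyond citing \cite{KN2023}. Your architecture also matches: the identity $\rho(G,\mathcal{T})=2|V|-\nu(G,\mathcal{T})$, a procedure turning the \emph{given} $M$ into a $\mathcal{T}$-free 2-edge-cover of size at most $2|V|-|M|$, and the final estimate via $\nu\le|V|\le\rho$, which you carry out correctly. However, the reconstruction you give leaves unproved the one step with real content: keeping the augmented set $\mathcal{T}$-free within the budget $2|V|-|M|$.

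Maximality does not rule out the forced case. Take a triangle $T=uvw$ with $\deg_G(u)=\deg_G(v)=2$, an edge $wz$, and a 4-cycle $zabc$, with $\mathcal{T}=\{T\}$. Then $M=\{uw,wv,za,ab,bc,cz\}$ is a maximum $\mathcal{T}$-free 2-matching, since a 2-factor would contain $T$ as a component. Yet the only edge available at the deficient vertex $u$ is $uv$, which makes $T$ a component. Your fallback, unspecified ``local improvements'' of $M$ with no termination or size guarantee, is not an argument, and no modification of $M$ is needed.

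The missing idea is a slack count.
\begin{itemize}
\item Add edges greedily at deficient vertices. Each lowers $\sum_v(2-\deg(v))^+=2|V|-2|M|$ by one or two, so if $k_2$ of them lower it by two, the resulting cover $F_0$ has $|F_0|=2|V|-|M|-k_2$.
\item If $T\in\mathcal{T}$ is a component of $F_0$, its vertices have degree exactly $2$, so $E(T)\not\subseteq M$. The last edge of $E(T)$ to be added found both endpoints at degree $1$, so it is one of the $k_2$ edges.
\item For each such $T$, add one edge of $\delta(V(T))$. This set is nonempty because a $\mathcal{T}$-free 2-edge-cover exists and edges of $T$ have no parallel copies. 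The edge joins $T$ to another component, creates no new triangle component, and keeps $|F|\le 2|V|-|M|$.
\end{itemize}
In the example this gives $M\cup\{uv,wz\}$, with $8=\rho$ edges. The argument works for every $\mathcal{T}$-free 2-matching, approximate or not.

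The converse inequality $\nu\ge 2|V|-\rho$, which your final estimate also uses, needs repair too. A minimal $\mathcal{T}$-free 2-edge-cover can contain an edge $xy$ with both endpoints of degree at least three, for instance when $x$ is the attachment vertex of a pendant triangle in $\mathcal{T}$. So ``in a minimal cover such edges do not exist'' is false, and your triangle repair is left as an unresolved case analysis.

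Minimality is unnecessary. Starting from a $\mathcal{T}$-free 2-edge-cover $F$, repeatedly delete, at a vertex of current degree at least three, an edge whose removal leaves no component in $\mathcal{T}$. Each deletion lowers $\sum_v(\deg(v)-2)^+=2|F|-2|V|$ by at least one, which yields a $\mathcal{T}$-free 2-matching with at least $2|V|-|F|$ edges.

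Such a safe edge always exists. Only a bridge whose removal leaves a triangle of $\mathcal{T}$ as a component is unsafe. Suppose all edges at $x$ are unsafe. None of them can cut off a triangle through $x$, since the two triangle edges at $x$ would then be safe. So some bridge $xz$ cuts off a triangle through $z$; then $z$ has degree $3$, and its two triangle edges are safe.
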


From Lemmas~\ref{lem:Tfree-2match} and~\ref{lem:trifree-2cover}, we obtain a PTAS for computing a $\mathcal{T}$-free 2-edge-cover of minimum size.

\begin{corollary}\label{cor:Tfree-2cover}
    Let $\varepsilon>0$. 
    Let $G$ be a graph that may contain parallel edges and let $\mathcal{T}$ be a set of triangles in $G$ such that no edge of any triangle in $\mathcal{T}$ has a parallel edge. 
    Then, one can compute a $\mathcal{T}$-free 2-edge-cover in $G$ of size at most $(1+\varepsilon)\rho(G,\mathcal{T})$ in polynomial time.
\end{corollary}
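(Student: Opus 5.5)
The corollary follows by composing Lemma~\ref{lem:Tfree-2match} with Lemma~\ref{lem:trifree-2cover}, using the same $\varepsilon$ in both. Given $\varepsilon>0$, $G$ (possibly with parallel edges), and $\mathcal{T}$ satisfying the hypothesis that no edge of a triangle in $\mathcal{T}$ has a parallel edge, the plan has two steps.

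First, run the algorithm of Lemma~\ref{lem:Tfree-2match}. The hypotheses of that lemma are exactly those of the corollary, so it applies directly. It returns, in polynomial time, a $\mathcal{T}$-free 2-matching $M$ with $|M|\ge(1-\varepsilon)\nu(G,\mathcal{T})$.

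Second, feed $M$ into the procedure of Lemma~\ref{lem:trifree-2cover} with the same $\varepsilon$. It returns, in polynomial time, a $\mathcal{T}$-free 2-edge-cover $F$ with $|F|\le(1+\varepsilon)\rho(G,\mathcal{T})$. The composition of two polynomial-time procedures is polynomial, since $\varepsilon$ is a fixed constant and the intermediate output $M$ has polynomial size. This gives the claim.

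The one point needing care is that Lemma~\ref{lem:trifree-2cover} is stated without explicitly mentioning parallel edges. I would check that its argument (adapted from \cite{KN2023}) does not rely on simplicity of $G$. That argument extends a near-maximum 2-matching to a 2-edge-cover by adding edges at deficient vertices and comparing against the Gallai-type identity linking $\nu$ and $\rho$. Parallel edges do not affect the degree counting there. The $\mathcal{T}$-freeness is also preserved: the only components that can become problematic are triangles in $\mathcal{T}$, and their edges are assumed simple, so the local fixes in that proof go through verbatim.

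If there is any doubt, I would instead reduce to the simple-graph case. No edge of a triangle in $\mathcal{T}$ is ever involved in parallelism, so a parallel class touches no triangle in $\mathcal{T}$. Each parallel class therefore only matters for degree purposes, up to two copies, which both lemmas handle uniformly. Beyond this routine check, the proof is just the chaining of the two lemmas.
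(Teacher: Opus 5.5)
Your proposal is correct and matches the paper's argument, which also obtains the corollary by chaining Lemma~\ref{lem:Tfree-2match} with Lemma~\ref{lem:trifree-2cover} for the same $\varepsilon$. Your check that Lemma~\ref{lem:trifree-2cover} tolerates parallel edges is a reasonable caution that the paper leaves implicit. The vague fallback of reducing to the simple-graph case is unnecessary, and it is not straightforward as sketched, because parallel copies genuinely matter for the degree constraints in the graph $G'$ to which the corollary is later applied.
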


\subsection{Structured Graph}\label{sec:structured}
In the algorithm for 2-VCSS by Bosch-Calvo, Grandoni, and Jabal Ameli~\cite{BGJ}, they first reduce the problem to the case when the input graph is nearly 3-vertex-connected, where such a graph is called a \emph{structured graph}.

\begin{definition}[\mbox{structured graph}]
    A 2VC simple graph $G=(V,E)$ is \emph{structured} if it does not contain the following structures:
    \begin{itemize}
        \item (irrelevant edge) an edge $e=uv\in E$ such that $G-\{u,v\}$ is not connected (i.e., $\{u,v\}$ is a 2-vertex-cut of $G$);
        \item (non-isolating 2-vertex-cut) a 2-vertex-cut $\{u,v\}\subseteq V$ such that $G-\{u,v\}$ has either at least three components or exactly two components, each containing at least two vertices;
        \item (removable 5-cycle) a 5-cycle $C$ of $G$ with at least two vertices of degree $2$ in $G$.
    \end{itemize}
\end{definition}

\begin{lemma}[\mbox{Bosch-Calvo, Grandoni, and Jabal Ameli~\cite[Lemma 2.8]{BGJ}}]\label{lem:red}
    Given a constant $1<\alpha \le \frac{3}{2}$, if there exists a polynomial-time algorithm for 2-VCSS in a structured graph $G$ that returns a solution of size at most $\max\{\opt (G), \alpha \cdot \opt (G)-2 \}$, then there exists a polynomial-time $\alpha$-approximation algorithm for 2-VCSS.
\end{lemma}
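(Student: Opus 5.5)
The plan is to argue by induction on $|V|+|E|$. We reduce any 2VC input graph $G$ either to a constant-size instance, which we solve exactly by brute force, or to a structured graph, where we apply the assumed algorithm, or to one or two strictly smaller instances whose solutions we recombine. The invariant to maintain is that the recursive procedure returns a solution of size at most $\max\{\opt(G),\alpha\cdot\opt(G)-2\}$ on \emph{every} 2VC simple graph $G$. This immediately gives an $\alpha$-approximation. The additive slack $-2$ is what pays for the constant overhead of gluing two solutions, so the invariant must keep this stronger form. Parallel edges can be removed at the start, since they are never needed for 2-vertex-connectivity. If $G$ has at most some constant number $c(\alpha)$ of vertices, we enumerate all subsets of edges.

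\textbf{Irrelevant edge.} Suppose $e=uv$ and $\{u,v\}$ is a 2-vertex-cut. Every component of $G-\{u,v\}$ is adjacent to both $u$ and $v$, so $G-e$ is still 2VC. We would show $\opt(G-e)=\opt(G)$ by an exchange argument. Take an optimal $S$ containing $e$. In $S$, some component of $G-\{u,v\}$ that $S$ also separates yields a $u$--$v$ path avoiding $e$. If $e$ were still needed, we could replace it by an edge from that side and remove a redundant edge, which contradicts minimality or gives an equal-size solution without $e$. We then recurse on $G-e$.

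\textbf{Non-isolating 2-vertex-cut $\{u,v\}$.} Split $V\setminus\{u,v\}$ into two parts $A$ and $B$, unions of components of $G-\{u,v\}$, each of size at least $2$, and chosen so that both sides are nontrivial. Form $G_1=G[A\cup\{u,v\}]+uv$ and $G_2=G[B\cup\{u,v\}]+uv$, where $uv$ is a virtual edge; both are 2VC and smaller. The key inequality is
\[
\opt(G)\ \ge\ \opt(G_1)+\opt(G_2)-2 .
\]
To obtain it, restrict an optimal $S$ of $G$ to each side and add the virtual edge (at most one per side). For the reverse direction, take recursive solutions $S_1$ and $S_2$ and drop the virtual copies of $uv$. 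If both $S_1$ and $S_2$ used the virtual edge, we add one real $u$--$v$ connection. Since each side contains a $u$--$v$ path through its interior, the union is 2VC, and a case analysis shows the total is at most $|S_1|+|S_2|$. Applying the induction hypothesis then gives a bound of at most
\[
\alpha\bigl(\opt(G_1)+\opt(G_2)\bigr)-4\ \le\ \alpha(\opt(G)+2)-4\ \le\ \alpha\,\opt(G)-2,
\]
where the last step uses $\alpha\le 3/2$, or more precisely $2\alpha\le 2+\ldots$ after a finer accounting. The case where some $S_i$ is exact, i.e.\ the $\max$ is attained by $\opt(G_i)$, must be handled separately, using the slack of the other side.

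\textbf{Removable 5-cycle.} Let $C$ be a 5-cycle with two degree-2 vertices $x$ and $y$. All edges at $x$ and $y$ are forced into any solution. We replace $C$ by a smaller gadget, for instance by contracting it appropriately, or by observing that $C$ must appear essentially intact in any solution. We recurse on the resulting graph and charge the constant difference to the $-2$ slack; here $\alpha\le 3/2$ guarantees that adding back at most a constant number of edges per unit of $\opt$ saved stays within budget.

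I expect the main obstacle to be the 2-vertex-cut case. Its accounting must cover every combination of whether $uv\in E$ and whether each recursive solution uses the virtual edge. It must also cover the situation where one subinstance is solved exactly rather than with the $-2$ slack. The inequality $\alpha\le 3/2$ and the requirement that both sides have at least two vertices, which makes the $\opt(G_i)$ large enough, should be exactly what makes all cases close. Each reduction strictly decreases the size, so the recursion is polynomial.
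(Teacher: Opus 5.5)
The paper does not reprove this lemma (it imports it from \cite{BGJ}), so I judged your argument on its own. Its central step, the non-isolating 2-vertex-cut, does not close. The last inequality in your chain, $\alpha(\opt(G)+2)-4\le\alpha\,\opt(G)-2$, is equivalent to $\alpha\le 1$; with $\alpha\le\frac32$ you only reach $\alpha\,\opt(G)-1$. Counting the virtual copies you discard does not rescue it. Suppose $t\in\{0,1,2\}$ of the two recursive solutions contain $uv$. Then the glued solution has $|S_1|+|S_2|-t$ edges. (When both contain $uv$, $(S_1\setminus\{uv\})\cup(S_2\setminus\{uv\})$ is already 2VC, so no ``real $u$--$v$ connection'' is needed; in any case $uv\notin E$ once irrelevant edges are gone.) This gives $\alpha\,\opt(G)-2+(2\alpha-2-t)$, which is fine for $t\ge1$ but too large for $t=0$, and the induction hypothesis gives you no control over $t$. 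The mixed exact/approximate case fails for the same reason. The missing idea is to \emph{force} the connector: build $G_i$ from $G[V_i\cup\{u,v\}]$ by adding a new vertex $w_i$ adjacent only to $u$ and $v$. Every solution of $G_i$ then contains $uw_i$ and $w_iv$, and the following accounting goes through:
\begin{itemize}
\item Subdividing the virtual edge in your restriction argument gives $\opt(G_1)+\opt(G_2)\le\opt(G)+4$.
\item The union of the two recursive solutions with $w_1,w_2$ deleted is a 2VC subgraph of $G$ with exactly four fewer edges.
\item $\alpha(\opt(G)+4)-8\le\alpha\,\opt(G)-2$ holds precisely because $\alpha\le\frac32$.
\item The mixed case becomes $\alpha\,\opt(G)-2+(\alpha-1)(4-\opt(G_1))$, which is fine since $\opt(G_1)\ge5$.
\end{itemize}
This is also where non-isolation is really used: $|V(G_i)|=|V_i|+3<|V|$ needs $|V_{3-i}|\ge2$. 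The only exception is a cut leaving three singletons, where $|V|=5$ and brute force applies. Finally, ``each reduction strictly decreases the size'' does not make a branching recursion polynomial. You need an additive potential such as $|V|-4$, which this split preserves.

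The removable 5-cycle step is not yet an argument: you give no reduced instance $G'$, no bound $\opt(G')\le\opt(G)-k$, and no lifting. The accounting you sketch also cannot work as stated. The recursive call returns at most $\alpha\,\opt(G')-2$ and you must return at most $\alpha\,\opt(G)-2$, so nothing can be ``charged to the $-2$ slack.'' A lifting adding $k'$ edges needs $k'\le\alpha k$, which for $\alpha$ close to $1$ forces $k'\le k$. An upper bound on $\alpha$ can never pay for a lossy lifting. So you must exhibit a concrete smaller simple 2VC instance with such a lifting. You can use the structure available for $|V|\ge6$ once irrelevant edges and non-isolating cuts are gone:
\begin{itemize}
\item the two degree-2 vertices are non-adjacent, say $v_1,v_3$, so the path $v_5v_1v_2v_3v_4$ lies in every solution;
\item $v_2v_4,v_2v_5\notin E$;
\item $v_2$ has a neighbor outside $C$.
\end{itemize}
Your irrelevant-edge argument reaches the right conclusion by a muddled route. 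The clean reason is that $\{u,v\}$ is a 2-vertex-cut of every spanning subgraph $S$. So if $S$ is 2VC, then $S\setminus\{uv\}$ is too, and no minimal solution contains $uv$.
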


One advantage of considering structured graphs is that they satisfy the following property, known as the \emph{3-matching lemma}.

\begin{lemma}[\mbox{\cite[Lemma 3.1]{BGJ}}]\label{lem:3match}
    Let $G=(V,E)$ be a structured graph and $\{V_1,V_2\}$ be a partition of $V$ such that $|V_1|\ge|V_2|\ge 4$.
    There is a matching of size $3$ between $V_1$ and $V_2$.
\end{lemma}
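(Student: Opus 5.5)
The plan is a direct argument by contradiction using König's theorem. Let $H$ be the bipartite graph with vertex set $V$ whose edges are exactly the edges of $G$ between $V_1$ and $V_2$. Suppose $H$ has no matching of size $3$. By König's theorem, the maximum matching size in a bipartite graph equals the minimum vertex cover size. Hence $H$ has a vertex cover $C\subseteq V$ with $|C|\le 2$. In other words, every edge of $G$ between $V_1$ and $V_2$ has an endpoint in $C$.

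The next step is to observe that $C$ separates $G$. Since $|V_1|\ge|V_2|\ge 4$ and $|C|\le 2$, both $V_1\setminus C$ and $V_2\setminus C$ have at least two vertices. By the choice of $C$, there is no edge of $G-C$ between $V_1\setminus C$ and $V_2\setminus C$. So every component of $G-C$ lies entirely inside $V_1\setminus C$ or entirely inside $V_2\setminus C$. In particular $G-C$ is disconnected.

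Now compare with the definition of a structured graph. If $|C|\le 1$, this contradicts $G$ being 2VC: either $G$ itself or $G-\{v\}$ is disconnected. So $|C|=2$, say $C=\{u,v\}$, and $\{u,v\}$ is a 2-vertex-cut. Consider the components of $G-\{u,v\}$.
\begin{itemize}
\item If there are at least three, $\{u,v\}$ is a non-isolating 2-vertex-cut.
\item If there are exactly two, each side $V_1\setminus C$ and $V_2\setminus C$ is a nonempty union of components, so each side is exactly one component. Both have at least two vertices, so again $\{u,v\}$ is a non-isolating 2-vertex-cut.
\end{itemize}
Either case contradicts the assumption that $G$ is structured, so $H$ has a matching of size $3$, as claimed.

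There is no real obstacle here; the only points needing care are these. First, the cover vertices may lie on the same side of the partition, which the counting above already handles. Second, "exactly two components" must be matched to the two sides $V_1\setminus C$ and $V_2\setminus C$, using that neither side is empty.
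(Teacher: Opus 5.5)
Your proof is correct: the vertex cover $C$ with $|C|\le 2$ given by K\"onig's theorem leaves at least two vertices on each side, so it would be a cut vertex (or $G$ would be disconnected) or a non-isolating 2-vertex-cut, all of which a structured graph excludes. The paper does not reprove this lemma but cites it from \cite{BGJ}, and your vertex-cover-as-separator argument is the natural proof of it.
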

Here, a \textit{matching} between $V_1$ and $V_2$ is a set of edges in which one endpoint lies in $V_1$, the other lies in $V_2$, and no two edges share an endpoint.

When constructing an algorithm in structured graphs, it suffices to consider only the case where the number of vertices in the input graphs is at least 20 (and hence $\opt (G)\ge 20$); otherwise, we can compute an optimal solution by brute force.
Since $\max\{\opt (G), \alpha \cdot \opt (G)-2\}=\alpha \cdot \opt (G)-2$ for $\opt (G)\ge 20$ and for $\alpha\ge\frac{21}{16}$, it suffices to show the following to obtain Theorem~\ref{thm:main}.

\begin{lemma}\label{lem:main}
    Let $\varepsilon>0$. Given a structured graph $G$ with at least $20$ vertices, in polynomial time one can compute a feasible solution for 2-VCSS of size at most $\big(\frac{21}{16}+\varepsilon\big)\opt (G)-2$.
\end{lemma}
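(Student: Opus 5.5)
The plan has two phases. Phase one computes a cheap initial solution $S$ that is a lower bound on $\opt(G)$ up to $1+\varepsilon$. Phase two turns $S$ into a 2VC spanning subgraph while paying for every added edge with credits.

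\textbf{Phase one: the initial solution.} Define a cycle-restricted 2-edge-cover as a 2-edge-cover with no component that is a triangle, a 4-cycle with an isolated pair, or a 6-cycle with an isolated triple. Any 2VC spanning subgraph of a structured graph with at least $20$ vertices is connected and has more than $6$ vertices. It is therefore a cycle-restricted 2-edge-cover, so the minimum size $\rho^*$ of such a cover satisfies $\rho^*\le \opt(G)$.

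To compute an almost-minimum one, I would reduce to Corollary~\ref{cor:Tfree-2cover}. The idea is to build an auxiliary graph $G'$, possibly with parallel edges, in which each forbidden 4-cycle or 6-cycle is replaced by a gadget that becomes a triangle. For example, contract the isolated pair or triple of degree-2 vertices into a single vertex or edge. Then let $\mathcal{T}$ be the set of gadget triangles together with the original triangles.

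Two things must be checked. First, $\mathcal{T}$-free 2-edge-covers of $G'$ and cycle-restricted 2-edge-covers of $G$ correspond to each other with a fixed additive size shift. This shift should be exactly linear, so that a $(1+\varepsilon)$ factor is preserved. Second, no edge of a triangle in $\mathcal{T}$ has a parallel edge. This is where the structured assumptions enter: there are no removable 5-cycles and no irrelevant edges, and the isolated vertices have degree 2. Together these should ensure the gadgets do not overlap in bad ways. The output is a cycle-restricted 2-edge-cover $S$ with $|S|\le(1+\varepsilon)\opt(G)$.

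\textbf{Phase two: credits.} Assign $\frac{5}{16}$ credit to each edge of $S$. Following~\cite{BGJ}, first make every component of $S$ 2VC (a "block"), or bridgeless with credit attached to blocks and bridges. Then repeatedly merge components until a single 2VC graph remains.

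The invariant is that each component retains enough credit. A 2VC component with few edges needs a fixed amount of credit, and a large component needs at least $2$. Each augmentation (adding edges $F$ and removing edges $R$) must satisfy that $|F|-|R|$ is at most the credit released. At the end, the final single component keeps credit at least $2$ unused. This yields $|S_{\text{final}}|\le \frac{21}{16}|S|-2\le(\frac{21}{16}+\varepsilon')\opt(G)-2$ after rescaling $\varepsilon$.

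The merging steps use the 3-matching lemma (Lemma~\ref{lem:3match}) to find several disjoint edges leaving a component. With these one can form an augmenting cycle over components, or open ears that attach a component through two distinct vertices, so that 2-vertex-connectivity is preserved.

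\textbf{Main obstacle.} The hardest step is the case analysis for small cycle components. Triangles, 4-cycles, and 6-cycles carry only $\frac{15}{16}$, $\frac{20}{16}$, and $\frac{30}{16}$ credit, yet each needs at least two new edges to become 2VC-attached.

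The restriction removes the worst cases. A 4-cycle without an isolated pair has a chord-free neighborhood structure that allows attaching with an edge swap: add two edges and drop one cycle edge. Likewise, a 6-cycle without an isolated triple admits such a swap. I would first try to prove, for each short cycle, a local lemma of this form: there exist two edges to other components at non-adjacent vertices, or one edge whose addition plus deletion of a cycle edge yields a path-merge. I would then show by a charging argument that merging a chain of such cycles into a large component stays within budget. Verifying that the exact constant $\frac{5}{16}$ suffices in every subcase, especially chains of 4-cycles and components meeting at a cut vertex, is where I expect most of the work.
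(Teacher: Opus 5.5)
\textbf{Gap 1: Phase one does not yet give a working reduction.} Your overall architecture matches the paper's: the lower bound $\rho_c(G)\le\opt(G)$, a PTAS via $\mathcal{T}$-free 2-edge-covers, $\frac{5}{16}$ credit per edge, BGJ-style merging, and two leftover credits on the final 2VC component giving the $-2$. However, the two steps that carry the content are either wrong as sketched or not carried out.

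The gadget you suggest does not give an exact correspondence.
\begin{enumerate}
\item[(i)] Isolated vertices need not have degree $2$; Definition~\ref{def:isolated} allows degree $3$.
\item[(ii)] Contracting an isolated pair $\{u_1,u_2\}$ to one vertex $u$ merges two degree-$\ge 2$ constraints into one. The size shift then depends on the cover rather than being constant.
\item[(iii)] The image of a 4-cycle $(u_1,v_2,u_2,v_3)$ is a pair of doubled edges $uv_2,uv_3$, not a triangle. Any triangle through $uv_j$ would also violate the no-parallel-edge hypothesis of Corollary~\ref{cor:Tfree-2cover}.
\item[(iv)] The structured-graph axioms do not prevent forbidden cycles from overlapping. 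If $u_1,u_3,u_5$ are each adjacent to exactly $u_2,u_4,u_6$, there are several 6-cycles with an isolated triple on the same six vertices, and 4-cycles with an isolated pair, such as $(u_1,u_2,u_3,u_4)$, inside them.
\item[(v)] With your definition, which forbids only 6-cycle \emph{components}, even plain contraction fails. $E(C)$ plus one edge leaving $V(C)$ is allowed, yet it contracts to a degree-$1$ vertex.
\end{enumerate}

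The paper avoids these problems as follows.
\begin{enumerate}
\item[(a)] It demands $|\delta(V(C))\cap S|\ge 2$ for every 6-cycle $C$ with an isolated triple. This still holds for any 2VC solution, and it is exactly the degree constraint at $v_C$ after contracting a maximal vertex-disjoint family $\mathcal{C}$ of such cycles. The shift is $-6$ per cycle, because the triple alone forces six edges inside $V(C)$.
\item[(b)] It takes a maximal family $\mathcal{P}$ of disjoint, pairwise nonadjacent isolated pairs avoiding $\mathcal{C}$. Each pair is deleted; each $v_i'$ is joined to $v_i$ by parallel edges, and a star $wv_1',wv_2',wv_3'$ is added. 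This creates no triangle, but it forces every $v_i$ to be adjacent to the pair. The shift is $+2$ per pair, costing only $\varepsilon|\mathcal{P}|\le\varepsilon\rho_c(G)/2$.
\item[(c)] It lets $\mathcal{T}$ be the triangles avoiding contracted vertices.
\item[(d)] It uses maximality of the two families to show that forbidden configurations outside them are excluded automatically in the lifted cover.
\end{enumerate}

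\textbf{Gap 2: Phase two states the invariant but none of the lemmas that make $\frac{5}{16}$ suffice.}

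To bound the initial credit of a component/block/bridge scheme by $\frac{5}{16}|S|$, you need the strongly canonical form of Lemma~\ref{lem:strongly-canonical} together with minimality, as in Lemma~\ref{lem:initial-cost}.
\begin{enumerate}
\item[(i)] A six-edge component treated as large (a 6-cycle, $K_{2,3}$, a bowtie) would need at least $2>\frac{30}{16}$ credits. Such components must therefore be exchanged into cycles and counted as small, which is exactly why a separate 6-cycle analysis is needed.
\item[(ii)] Leaf-blocks must have at least five vertices, and no block may be a triangle.
\end{enumerate}

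The merging itself rests on shortcut pairs. These cost a net $+1$ edge (two matching edges in, one cycle edge out), not two new edges.
\begin{enumerate}
\item[(iii)] The genuinely new ingredient is Lemma~\ref{lem:6-cycle}: every 6-cycle component has a shortcut pair. This follows from the 3-matching lemma and the absence of an isolated triple, using a chord of $C$ when the extra neighbour lies on $C$.
\item[(iv)] The no-isolated-pair property is what closes the last 4-cycle subcase of Lemma~\ref{lem:no6_onead}.
\item[(v)] 6-cycles whose shortcut pairs reach only other 6-cycles need the cycle-edge-chain argument of Lemma~\ref{lem::annoying6cycles}.
\item[(vi)] Once no small components remain, one observes $\cost=\pcost$ and invokes Lemma~\ref{lem:bgjtrans}, rather than re-deriving the large-component merging.
\end{enumerate}

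Since you explicitly leave open whether the $\frac{5}{16}$ budget suffices in every subcase, the proposal is a plan rather than a proof of Lemma~\ref{lem:main}.
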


Therefore, in the following sections, we consider the input graph $G=(V,E)$ to be a structured graph with at least $20$ vertices.

\section{Construction of an Initial Solution}\label{sec:initial}

In this section, we construct an initial solution for the 2-VCSS that is better than those used in previous algorithms.
In Subsection~\ref{subsec:cycle-restricted}, we introduce a \emph{cycle-restricted} 2-edge-cover, whose minimum size provides a new lower bound on the optimal value for 2-VCSS.
A cycle-restricted 2-edge-cover is defined as a 2-edge-cover that does not contain certain cycle components, and so any 2VC spanning subgraph is obviously cycle-restricted.
We prove that a $(1+\varepsilon)$-approximate solution of the minimum cycle-restricted 2-edge-cover can be computed in polynomial time, which means that we can construct an initial solution that is closer to the feasible solutions than a minimum 2-edge-cover.
In Subsection~\ref{subsec:strongly-canonical}, following the ideas in~\cite{BGJ}, we transform a cycle-restricted 2-edge-cover into a so-called ``canonical'' form without increasing the number of edges. Since we impose stronger constraints than those in~\cite{BGJ}, we refer to such a 2-edge-cover as \emph{strongly canonical}.

\subsection{Cycle-Restricted 2-Edge-Cover}\label{subsec:cycle-restricted}
The structures that are forbidden to be contained in cycle-restricted 2-edge-covers are triangles as well as certain 4-cycles and 6-cycles.
These 4-cycles and 6-cycles are referred to as 4-cycles with an \emph{isolated pair} and 6-cycles with an \emph{isolated triple}, respectively.

\begin{definition}[\mbox{isolated pair/triple}]
\label{def:isolated}
A vertex $u\in V$ is \emph{isolated} by a vertex triple $\{v_1,v_2,v_3\}\subseteq V\setminus\{u\}$ if $N(u)\subseteq \{v_1, v_2, v_3\}$.
A pair or a triple of vertices $W\subseteq V$ is \emph{isolated} if each vertex in $W$ is isolated by the same vertex triple $\{v_1,v_2,v_3\}\subseteq V\setminus W$.
\end{definition}

\noindent
Note that, for a pair $\{u_1, u_2\}$ isolated by $\{v_1, v_2, v_3\}$ in a structured graph $G$, 
$N(u_1) \cup N(u_2) = \{v_1, v_2, v_3\}$; otherwise, $\{v_1,v_2,v_3\}$ would contain either a cut vertex or a non-isolating 2-vertex-cut.
We call a 4-cycle $C$ a \emph{4-cycle with an isolated pair} if $V(C)$ contains an isolated pair, and we call a 6-cycle $C$ a \emph{6-cycle with an isolated triple} if $V(C)$ contains an isolated triple (see Figure~\ref{fig:pair_triple} for examples).
Since no two vertices in an isolated triple are adjacent, we can represent a 6-cycle with an isolated triple as $C=(u_1,\dots , u_6)$, where $\{u_1,u_3,u_5\}$ is isolated by $\{u_2,u_4,u_6\}$.

We now formally define a cycle-restricted 2-edge-cover.

\begin{figure}
    \centering
    \includegraphics[width=90mm]{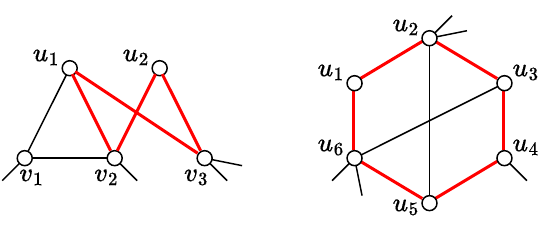}
    \caption{A 4-cycle with an isolated pair $\{u_1,u_2\}$ and a 6-cycle with an isolated triple $\{u_1,u_3,u_5\}$.}
    \label{fig:pair_triple}
\end{figure}

\begin{definition}[\mbox{cycle-restricted 2-edge-cover}]
A 2-edge-cover $S\subseteq E$ is \emph{cycle-restricted} if $S$ satisfies the following conditions: 
\begin{itemize}
    \item No component of $S$ is a triangle. 
    \item No component of $S$ is a 4-cycle with an isolated pair in $G$.
    \item For every 6-cycle $C$ with an isolated triple in $G$, we have $|\delta(V(C))\cap S|\ge 2$.
\end{itemize}
\end{definition}

\noindent
It is clear that the edge set of any 2VC graph with at least seven vertices forms a cycle-restricted 2-edge-cover.
Thus, the size of a minimum cycle-restricted 2-edge-cover in $G$ is a lower bound on $\opt (G)$.
Let $\rho_c(G)$ denote the minimum size of a cycle-restricted 2-edge-cover in $G$.

One of our technical contributions is to present an algorithm for computing an almost-minimum cycle-restricted 2-edge-cover.

\begin{lemma}\label{lem:cycle-restricted}
    Let $\varepsilon>0$. Given a structured graph $G=(V,E)$, one can construct a cycle-restricted 2-edge-cover in $G$ of size at most $(1+\varepsilon)\rho_c(G)$ in polynomial time.
\end{lemma}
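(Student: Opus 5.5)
We reduce Lemma~\ref{lem:cycle-restricted} to Corollary~\ref{cor:Tfree-2cover} by building an auxiliary graph $G'$ (possibly with parallel edges) and a set $\mathcal{T}$ of triangles in $G'$. The construction will guarantee that $\mathcal{T}$-free 2-edge-covers of $G'$ correspond to cycle-restricted 2-edge-covers of $G$ and that sizes are preserved up to an additive correction we control exactly.

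\textbf{Forbidden structures are disjoint.} First we show that in a structured graph the isolated pairs and isolated triples form vertex-disjoint gadgets. Every vertex of an isolated pair or triple has degree at most $3$ with neighbours inside the isolating triple. Using the forbidden irrelevant edges and non-isolating 2-vertex-cuts, and the 3-matching lemma (Lemma~\ref{lem:3match}) together with $|V|\ge 20$, we show two things. First, each vertex $u$ lies in at most one maximal isolated set. Second, such a set $W$ with isolating triple $Z$ satisfies $|W|\le 3$, since $W\cup Z$ with $|W|\ge 4$ would give a 3-vertex cut violating Lemma~\ref{lem:3match}. We then enumerate all 4-cycles with an isolated pair and all 6-cycles with an isolated triple; there are only polynomially many.

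\textbf{Gadget replacement.} Triangles are handled directly by putting all triangles of $G$ into $\mathcal{T}$. For each isolated pair $\{u_1,u_2\}$ with triple $\{v_1,v_2,v_3\}$, the forbidden 4-cycles are $(u_1,v_i,u_2,v_j)$. We contract $u_1,u_2$ into one new vertex $u$ carrying edges to the $v_i$. A 4-cycle component on $\{u_1,u_2,v_i,v_j\}$ then maps to a triangle-like structure $u v_i v_j$ with doubled edges. This is where the parallel-edge care in Corollary~\ref{cor:Tfree-2cover} matters: we want the image to be a genuine triangle in $\mathcal{T}$ whose edges have no parallels.

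A cleaner alternative is to simulate each pair by a local subgraph. In any 2-edge-cover, the vertices $u_1,u_2$ use exactly their two or three incident edges, so we can charge those edges separately and encode the rest by a single vertex of the required degree. For 6-cycles with an isolated triple, the constraint is $|\delta(V(C))\cap S|\ge 2$, so the reduction must force at least two edges leaving $V(C)$. We would contract $\{u_1,\dots,u_6\}$ into a small gadget, for instance a triangle whose three vertices stand for $u_2,u_4,u_6$. The triangle is placed in $\mathcal{T}$, so that a component equal to this triangle corresponds exactly to covering the gadget with at most one outgoing edge.

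\textbf{Correspondence and approximation.} We prove a two-way correspondence:
\begin{itemize}
\item every cycle-restricted 2-edge-cover $S$ of $G$ maps to a $\mathcal{T}$-free 2-edge-cover $S'$ of $G'$ with $|S'|=|S|-k$, where $k$ is a fixed offset counting the edges internal to the gadgets;
\item conversely, every $\mathcal{T}$-free 2-edge-cover of $G'$ lifts in polynomial time to a cycle-restricted one of $G$ of size at most $|S'|+k$.
\end{itemize}
Together these give $\rho(G',\mathcal{T})=\rho_c(G)-k$. Since $k\le\rho_c(G)$, we have $k\ge0$ and $(1+\varepsilon)(\rho_c-k)+k\le(1+\varepsilon)\rho_c$, so applying Corollary~\ref{cor:Tfree-2cover} and lifting proves the lemma.

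\textbf{Main obstacle.} The hard step is designing the 6-cycle gadget and the 4-cycle contraction so that the lifting is exact. The lifting must never create a new triangle or a new forbidden 4-cycle or 6-cycle component. It must also keep every triangle in $\mathcal{T}$ free of parallel edges, and it must handle overlaps between different gadgets, which the disjointness argument above is meant to rule out. I would first try the choice of replacing each isolated set $W$ by a single vertex adjacent to $Z$, and verify the correspondence by a case analysis on the few possible neighbourhoods ($N(u)\subseteq Z$ with $|N(u)|\in\{2,3\}$).
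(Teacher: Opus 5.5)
\textbf{This proposal has genuine gaps.} Your overall strategy is the right one: build gadgets, prove a two-way size correspondence, then apply Corollary~\ref{cor:Tfree-2cover}. But the steps you commit to fail, and the step that carries the proof is left open.

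\textbf{Disjointness is false.} Take any 3-connected simple graph containing $z_1,z_2,z_3$, and add vertices $w_1,\dots,w_4$, each adjacent to exactly $z_1,z_2,z_3$. The result is 3-connected with minimum degree $3$, hence structured. Yet all four $w_i$ are isolated by $Z=\{z_1,z_2,z_3\}$, so $|W|=4$. Lemma~\ref{lem:3match} is not contradicted, because $w_1z_1,w_2z_2,w_3z_3$ is a 3-matching leaving $W$. The 6-cycles $(w_1,z_1,w_2,z_2,w_3,z_3)$ and $(w_1,z_1,w_2,z_2,w_4,z_3)$ both have isolated triples and share five vertices. The 4-cycle $(w_1,z_1,w_2,z_2)$ with isolated pair $\{w_1,w_2\}$ overlaps both. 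So you cannot install one gadget per forbidden cycle.

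The paper instead proceeds as follows:
\begin{itemize}
\item It fixes a maximal vertex-disjoint family $\mathcal{C}$ of 6-cycles with an isolated triple.
\item It then fixes a maximal family of 4-cycles with an isolated pair that avoid $\bigcup_{C\in\mathcal{C}}V(C)$ and have pairwise nonadjacent pairs.
\item It handles every other forbidden cycle by maximality in the lifting step. An unchosen 6-cycle $C$ meets some $C'\in\mathcal{C}$; the lift contains all of $E(C')$, so either $V(C)=V(C')$ or $C'$ crosses $\delta(V(C))$ at least twice. An unchosen 4-cycle meets a chosen gadget, which ends up in a component with at least five vertices.
\end{itemize}

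\textbf{Your gadgets do not encode the constraints.} A triangle in $\mathcal{T}$ on $\{u_2,u_4,u_6\}$ only forbids the gadget from being a whole component, i.e.\ zero edges leaving $V(C)$. A cover with exactly one leaving edge is $\mathcal{T}$-free but violates $|\delta(V(C))\cap S|\ge 2$. Your claim that the forbidden triangle component corresponds to ``at most one outgoing edge'' is therefore incorrect.

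The missing idea is to contract all of $V(C)$ to a single vertex $v_C$. The degree requirement at $v_C$ is then exactly $|\delta(V(C))\cap S|\ge 2$. Triangles through $v_C$ are simply omitted from $\mathcal{T}$; they lift harmlessly, and this also protects the parallel-edge hypothesis. The offset is exactly $6$: every edge at the isolated triple lies inside $V(C)$, there are at least six such edges, and lifting re-inserts $E(C)$.

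For pairs, contracting $u_1,u_2$ (or replacing them by one vertex) turns the forbidden 4-cycle component into a doubled path $v_i u v_j$. This is not a triangle, since $v_iv_j\notin E$ in general, so $\mathcal{T}$-freeness says nothing about it. The paper's gadget works differently:
\begin{itemize}
\item It deletes $u_1,u_2$.
\item It attaches a new vertex $v_i'$ to each $v_i$ by as many parallel edges as $v_i$ has to $\{u_1,u_2\}$.
\item It joins $v_1',v_2',v_3'$ to a new center $w$.
\end{itemize}
The correspondence then needs an exchange argument. In a cycle-restricted cover one may assume the pair is adjacent to all three $v_i$: if it touches only $v_2,v_3$, some $v_i$ has degree at least $3$ because the 4-cycle is not a component, so $u_jv_i$ can be swapped for $u_jv_1$. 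After this, two claw edges suffice, giving offset $-2$ per pair.

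\textbf{The bookkeeping step is also wrong.} Your deduction ``$k\le\rho_c(G)$, hence $k\ge 0$'' is a non sequitur. With the correct gadgets the offset $6|\mathcal{C}|-2|\mathcal{P}|$ can be negative. The paper handles this by running the PTAS with $\varepsilon/2$ and using $|\mathcal{P}|\le |V|/2\le\rho_c(G)/2$.
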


\begin{proof}
    We reduce the problem to finding a minimum $\mathcal{T}$-free 2-edge-cover in a graph $G'$ obtained from $G$ as follows.
    
    Take an inclusion-wise maximal set $\mathcal{C}$ of vertex-disjoint 6-cycles with an isolated triple in $G$. 
    Then, take an inclusion-wise maximal set $\mathcal{D}$ of vertex-disjoint 4-cycles with an isolated pair in $G$ such that 
    \begin{itemize}
         \item no 4-cycle in $\mathcal{D}$ contains a vertex in $\bigcup_{C \in \mathcal{C}} V(C)$; and 
         \item for any distinct 4-cycles $C, C' \in \mathcal{D}$ with isolated pairs $\{u_1, u_2\}$ and $\{u'_1, u'_2\}$, respectively, there is no edge in $G$ between $\{u_1, u_2\}$ and $\{u'_1, u'_2\}$.
    \end{itemize}
    Let $\mathcal{P}$ be the set of isolated pairs, each corresponding to a 4-cycle in $\mathcal{D}$. 
    For convenience, in what follows we mainly use $\mathcal{P}$ instead of $\mathcal{D}$. 
    We transform the given graph $G$ as follows (see Figure~\ref{fig:red_isolated} for Step 2):

    \begin{enumerate}
        \item Contract each 6-cycle $C\in\mathcal{C}$ into a single vertex $v_C$, removing any resulting self-loops but keeping parallel edges. 
        \item For each isolated pair $\{u_1, u_2\}\in\mathcal{P}$ and its adjacent vertices $\{v_1, v_2, v_3\}$,\footnote{Note that one of $\{v_1, v_2, v_3\}$ is not contained in the 4-cycle in $\mathcal{D}$ corresponding to the isolated pair $\{u_1, u_2\}$. Such a vertex may be involved in a contraction in Step 1 or may belong to another 4-cycle in $\mathcal{D}$, but this does not affect the argument that follows.}
        \begin{enumerate}
            \item remove $u_1$ and $u_2$, and edges incident to them,
            \item add new vertices $v'_1, v'_2, v'_3$, and $w$,
            \item add at most two parallel edges between $v_i$ and $v'_i$ to preserve the degree of $v_i$ for each $i\in \{1,2,3\}$, and
            \item add new edges $v'_1w, v'_2w$, and $v'_3w$.
        \end{enumerate} 
    \end{enumerate}
    \begin{figure}
    \centering
    \includegraphics[width=110mm]{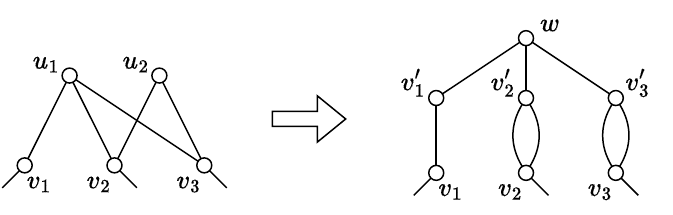}
    \caption{Construction of $G'$.}
    \label{fig:red_isolated}
    \end{figure}
    
    \noindent Let $G'=(V',E')$ be the resulting graph, and let $\mathcal{T}$ be the set of triangles in $G'$ that do not contain the vertex $v_C$ for any $C\in \mathcal{C}$. 
    Since $G$ is simple, any parallel edges contained in $G'$ have at least one endpoint that is either a vertex obtained by the contraction in Step~1 or a vertex added in Step~2b; therefore, none of these edges belongs to any triangle in $\mathcal{T}$.
    Hence, $G'$ and $\mathcal{T}$ satisfy the condition in Corollary~\ref{cor:Tfree-2cover}.  
    
    We show that finding a minimum cycle-restricted 2-edge-cover in $G$ is equivalent to finding a minimum $\mathcal{T}$-free 2-edge-cover in $G'$ by establishing the following two claims.
    \begin{claim}\label{clm:upper}
        Given a cycle-restricted 2-edge-cover $F$ in $G$,  a $\mathcal{T}$-free 2-edge-cover in $G'$ of size at most $|F|-6|\mathcal{C}|+2|\mathcal{P}|$ can be computed in polynomial-time.
        In particular, $\rho(G',\mathcal{T})\le \rho_c(G)-6|\mathcal{C}|+2|\mathcal{P}|$.
    \end{claim}
    \begin{proof}
        Suppose that for some isolated pair $\{u_1, u_2\}\in\mathcal{P}$ and its adjacent vertices $\{v_1, v_2, v_3\}$, both $u_1$ and $u_2$ are adjacent to only the same two vertices in $F$, namely $v_2$ and $v_3$.
        Since $F$ is cycle-restricted, no component of $F$ is a 4-cycle with an isolated pair; hence, $v_i$ has degree at least three in $F$ for some $i \in \{2, 3\}$.
        Let $u_j$ be a vertex adjacent to $v_1$ in $G$. 
        Then, $(F\setminus\{u_jv_i\})\cup\{u_j v_1\}$ is also a cycle-restricted 2-edge-cover of the same size as $F$ (see Figure~\ref{fig:min_isolated} left).
        Therefore, by repeating this operation, we may assume that 
        every isolated pair $\{u_1, u_2\}\in\mathcal{P}$ is adjacent to three vertices in $F$.
        Note that, since the isolated pairs in $\mathcal{P}$ are vertex-disjoint and nonadjacent by construction, 
        replacing $F$ with $(F\setminus\{u_jv_i\})\cup\{u_jv_1\}$ does not create a new pair in $\mathcal{P}$ that is adjacent to only two vertices in $F$.

    \begin{figure}
        \centering
        \includegraphics[width=130mm]{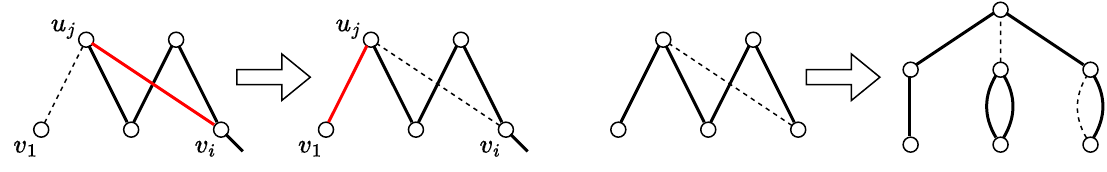}
        \caption{Transformation of a cycle-restricted 2-edge-cover $F$ in $G$ into a $\mathcal{T}$-free 2-edge-cover $F'$ in $G'$.}
        \label{fig:min_isolated}
    \end{figure}

        After this preprocessing, we construct a 2-edge-cover $F'$ in $G'$ from $F$ using the following procedure
        (see Figure~\ref{fig:min_isolated} right for (ii)):
        
    \begin{description}
        \item[(i)] Set $F':=F \cap E'$. 
        \item[(ii)] For each isolated pair $\{u_1, u_2\}\in\mathcal{P}$ and its adjacent vertices $\{v_1, v_2, v_3\}$, apply the following procedure:
        \begin{description}
            \item[(ii-a)] for each $i\in \{1,2,3\}$, add $t_i$ parallel edges between $v_i$ and $v'_i$ to $F'$, 
            where $t_i$ is the number of edges in $F$ between $v_i$ and $\{u_1, u_2\}$; and 
            \item[(ii-b)] add two edges chosen from $v'_1w$, $v'_2w$, and $v'_3w$ to $F'$ so that each of $v'_1, v'_2, v'_3$, and $w$ has degree at least two in $F'$.
        \end{description}
    \end{description}

    \noindent
    We show that (ii-b) is feasible.
    Consider the edge set $F'$ immediately after (ii-a). 
    Since each of $v_1$, $v_2$, and $v_3$ is adjacent to $\{u_1,u_2\}$ by at least one edge in $F$ due to the preprocessing, each of $v'_1$, $v'_2$, and $v'_3$ has degree at least one in $F'$.
    In addition, the number of edges added in (ii-a) is equal to the number of edges between $\{u_1,u_2\}$ and $\{v_1,v_2,v_3\}$ in $F$.
    Thus, $F'$ contains at least four edges between $\{v_1,v_2,v_3\}$ and $\{v'_1,v'_2,v'_3\}$.
    Therefore, at most two vertices among $\{v'_1,v'_2,v'_3\}$ have degree exactly one in $F'$.
    Consequently, by adding two edges chosen from $v'_1w$, $v'_2w$, and $v'_3w$, we can make the degree of each vertex at least two.
    
    Let us show that the resulting edge set $F'$ is a $\mathcal{T}$-free 2-edge-cover by using the cycle-restricted property of $F$.
    For each 6-cycle $C\in \mathcal{C}$, since at least two edges are incident to $V(C)$ in $F$, the vertex $v_C$ is incident to at least two edges in $F'$. 
    Since $F$ does not have a triangle component and $G'$ does not contain a triangle involving the edges added in Step 2 of its construction, $F'$ is a $\mathcal{T}$-free 2-edge-cover. 

    We evaluate the size of $F'$ as follows.
    For a 6-cycle $C=(u_1,\dots,u_6)\in \mathcal{C}$ with an isolated triple $\{u_1,u_3,u_5\}$, $F$ contains at least six edges incident to $u_1, u_3$, or $u_5$, and these edges are removed in (i).
    Taking into account that two edges were added for each pair in $\mathcal{P}$ in (ii), we have $|F'|\le|F|-6|\mathcal{C}|+2|\mathcal{P}|$.

    By applying this procedure to a minimum cycle-restricted 2-edge-cover in $G$, we obtain a $\mathcal{T}$-free 2-edge-cover of size at most $\rho_c(G)-6|\mathcal{C}|+2|\mathcal{P}|$.
    Hence, we have $\rho(G',\mathcal{T})\le \rho_c(G)-6|\mathcal{C}|+2|\mathcal{P}|$.
    \end{proof}
    
    \begin{claim}\label{clm:lower}
        Given a $\mathcal{T}$-free 2-edge-cover $F'$ in $G'$, a cycle-restricted 2-edge-cover in $G$ of size at most $|F'|+6|\mathcal{C}|-2|\mathcal{P}|$ can be computed in polynomial-time.
        In particular, $\rho_c(G)\le \rho(G',\mathcal{T})+6|\mathcal{C}|-2|\mathcal{P}|$.
    \end{claim}
    \begin{proof}
        For each $w\in V'$ added in Step~2b of the transformation from $G$ to $G'$, if $w$ is incident to three edges in $F'$, 
        then choose $i \in \{1, 2, 3\}$ such that $G'$ has parallel edges between $v'_i$ and $v_i$ (such $i$ exists since there are at least four edges between $\{v_1,v_2,v_3\}$ and $\{v'_1,v'_2,v'_3\}$) 
        and remove $v'_i w$ from $F'$. 
        If this removal makes the degree of $v'_i$ equal to one, then add one edge $v'_i v_i$ to $F'$.
        This operation does not increase the size of $F'$ and it also preserves the property that $F'$ is a $\mathcal{T}$-free 2-edge-cover (see Figure~\ref{fig:fix_isolated} left).
        This operation ensures that $F'$ has exactly two edges incident to $w$, and 
        at least four edges between $\{v_1',v_2',v_3'\}$ and $\{v_1, v_2,v_3\}$.
    \begin{figure}
        \centering
        \includegraphics[width=130mm]{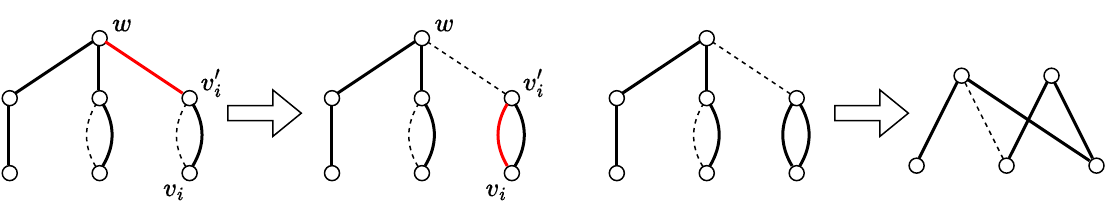}
        \caption{Transformation of $\mathcal{T}$-free 2-edge-cover $F'$ in $G'$ to a cycle-restricted 2-edge-cover $F$ in $G$.}
        \label{fig:fix_isolated}
    \end{figure}
    
    After this preprocessing, we construct a 2-edge-cover $F$ in $G$ from $F'$ using the following procedure (see Figure~\ref{fig:fix_isolated} right for (iii)):
    
    \begin{description}
        \item[(i)] Set $F:=F' \cap E$.
        \item[(ii)] For each $C\in \mathcal{C}$, add the edge set of $C$ to $F$.
        \item[(iii)] For each isolated pair $\{u_1, u_2\}\in\mathcal{P}$ and its adjacent vertices $\{v_1, v_2, v_3\}$, apply the following operation so that both $u_1$ and $u_2$ have degree at least two in $F$. 
        For each $i \in \{1,2,3\}$,
         \begin{description}
             \item[(iii-a)] if $F'$ has two parallel edges between $v'_i$ and $v_i$, then add $u_1v_i$ and $u_2v_i$ to $F$, and
             \item[(iii-b)]\label{a} if $F'$ has exactly one edge between $v'_i$ and $v_i$, then add either $u_1v_i$ or $u_2v_i$ to $F$. 
          \end{description}
    \end{description}

    \noindent
    We show that (iii) is feasible. 
    Note that $F'$ has at least one edge between $v'_i$ and $v_i$ for each $i$, because at least two edges in $F'$ are incident to $v'_i$. 
    The only nontrivial condition is that both $u_1$ and $u_2$ have degree at least two in $F$. 
    Since $F'$ has at least four edges between $\{v_1',v_2',v_3'\}$ and $\{v_1, v_2,v_3\}$ due to the preprocessing, 
    it contains two parallel edges between $v'_i$ and $v_i$ for some $i\in\{1,2,3\}$. 
    Without loss of generality, assume $i=3$; hence,  $u_1v_3$ and $u_2v_3$ are added to $F$ in (iii-a). 
    Since at least one of $v_1$ and $v_2$ (say $v_2$) is contained in the 4-cycle corresponding to $\{u_1, u_2\}$, 
    $v_2$ is adjacent to both $u_1$ and $u_2$ in $G$. 
    Therefore, if $u_j v_1$ is not added in $F$ for $j \in \{1, 2\}$, then we can add $u_j v_2$ to $F$ in (iii-a) or (iii-b). 
    Then, the degrees of both $u_1$ and $u_2$ can be made at least two in $F$.
    
    Let us show that the resulting 2-edge-cover $F$ is cycle-restricted.
    Assume to the contrary that $F$ contains a triangle component $C$. 
    Since $F'$ is $\mathcal{T}$-free, by the definition of $\mathcal{T}$, $C$ must contain a vertex in some 6-cycle in $\mathcal{C}$ or some isolated pair in $\mathcal{P}$. 
    However, any vertex in a 6-cycle in $\mathcal{C}$ or in an isolated pair in $\mathcal{P}$ lies in a component with at least five vertices in $F$, which is a contradiction.
    Thus, $F$ contains no triangle components.

    For every 6-cycle $C\in \mathcal{C}$, since the vertex obtained by contracting $C$ is incident to at least two edges in $F'$, we have $|\delta(V(C))\cap F|\ge 2$.
    Moreover, for every 6-cycle $C\notin\mathcal{C}$ with an isolated triple, since $C$ shares a vertex with some $C'\in \mathcal{C}$ by the maximality of $\mathcal{C}$, we have $|\delta(V(C))\cap F|\ge|\delta(V(C))\cap E(C')|\ge 2$ if $V(C)\ne V(C')$, and $|\delta(V(C))\cap F|=|\delta(V(C'))\cap F|\ge 2$ if $V(C)=V(C')$.
    Therefore, $|\delta(V(C))\cap F|\ge 2$ holds for every 6-cycle $C$ with an isolated triple. 

    For every isolated pair in $\mathcal{P}$, the component containing the pair has at least five vertices (the pair itself and its three adjacent vertices).
    This implies that no 4-cycle $C \in \mathcal{D}$ forms a component of $F$. 
    For every 4-cycle $C \not\in \mathcal{D}$ with an isolated pair, the maximality of $\mathcal{D}$ implies that
    there exists a 6-cycle $C' \in \mathcal{C}$ with $V(C) \cap V(C') \neq \emptyset$, or 
    there exists an isolated pair $\{u_1, u_2\}\in \mathcal{P}$ with its adjacent vertices $\{v_1, v_2, v_3\}$ such that 
    $V(C) \cap \{u_1, u_2, v_1, v_2, v_3\} \neq \emptyset$. 
    Since both $V(C')$ and $\{u_1, u_2, v_1, v_2, v_3\}$ are contained in components with at least five vertices,  
    $C$ does not form a single component in $F$ in either case. 
    Taken together, $F$ is a cycle-restricted 2-edge-cover.

    The size of $F$ obtained by this operation is at most $|F'|+6|\mathcal{C}|-2|\mathcal{P}|$.
    By applying this procedure to a minimum $\mathcal{T}$-free 2-edge-cover in $G'$, we obtain a cycle-restricted 2-edge-cover of size at most $\rho(G',\mathcal{T})+6|\mathcal{C}|-2|\mathcal{P}|$.
    Hence, we have $\rho_c(G)\le \rho(G',\mathcal{T})+6|\mathcal{C}|-2|\mathcal{P}|$.
    \end{proof}

    Combining the two inequalities in Claims~\ref{clm:upper} and \ref{clm:lower}, we obtain $\rho_c(G)= \rho(G',\mathcal{T})+6|\mathcal{C}|-2|\mathcal{P}|$.
    Moreover, these two claims imply that once either a minimum cycle-restricted 2-edge-cover in $G$ or a minimum $\mathcal{T}$-free 2-edge-cover in $G'$ is obtained, the other can be computed in polynomial time.

    We find a $\mathcal{T}$-free 2-edge-cover $F'\subseteq E'$ of size at most $(1+\frac{\varepsilon}{2})\rho(G',\mathcal{T})$ by Corollary~\ref{cor:Tfree-2cover}.  Then, we transform $F'$ into a cycle-restricted 2-edge-cover $F$ of size at most $|F'|+6|\mathcal{C}|-2|\mathcal{P}|$ by Claim~\ref{clm:lower}.
    The size of $F$ can be evaluated as
    \begin{align*}
    |F| 
    &\le|F'|+6|\mathcal{C}|-2|\mathcal{P}|\\
    &\le\Big(1+\frac{\varepsilon}{2}\Big)\rho(G',\mathcal{T})+6|\mathcal{C}|-2|\mathcal{P}|\\
    &=\Big(1+\frac{\varepsilon}{2}\Big)\big(\rho_c(G)-6|\mathcal{C}|+2|\mathcal{P}|\big)+6|\mathcal{C}|-2|\mathcal{P}|\\
    &\le\Big(1+\frac{\varepsilon}{2}\Big)\rho_c(G)+\varepsilon|\mathcal{P}|\\
    &\le(1+\varepsilon)\rho_c(G),
    \end{align*}
    where we use $|\mathcal{P}|\le\frac{|V|}{2}\le \frac{\rho_c(G)}{2}$ in the last inequality. 
\end{proof}

\subsection{Strongly Canonical 2-Edge-Cover}\label{subsec:strongly-canonical}
    In this subsection, we transform a cycle-restricted 2-edge-cover and impose additional structural restrictions to make it easier to handle, without increasing the number of edges.
    
    We begin with some notation.
    In a 2-edge-cover, we call a component with at least seven edges \emph{large}, and a component with at most six edges \emph{small}.
    In a component $C$, we call a maximal 2VC subgraph with at least three vertices a \emph{block}, and an edge whose removal disconnects $C$ a \emph{bridge}.
    Note that every edge in $C$ is either a bridge or belongs to a block.
    If $C$ is not 2VC, we call it \emph{complex}.
    In a complex component $C$, we call a block with only one cut vertex a \emph{leaf-block}.

    We now introduce a \emph{strongly canonical 2-edge-cover}, which is in the same spirit as a \emph{canonical 2-edge-cover} in~\cite{BGJ} but is subject to more constraints than it.

\begin{definition}[\mbox{canonical 2-edge-cover~\cite{BGJ}}]
    A 2-edge-cover $F$ is \emph{canonical} if, in $F$, (1) every component with at most five edges is a cycle, and (2) each leaf-block in a complex component contains at least five vertices.
\end{definition}

\begin{definition}[\mbox{strongly canonical 2-edge-cover}]\label{def:strongly-canonical}
    A cycle-restricted 2-edge-cover $F$ is \emph{strongly canonical} if, in $F$, (1) every small component is a cycle, and (2) each leaf-block in a complex component contains at least five vertices.
\end{definition}

The difference between these two concepts lies in components with six edges: 
in this paper, such components are treated as small components and restricted to cycles, whereas in~\cite{BGJ} they are treated as large components and are not necessarily cycles.
Although this difference is technical, it is essential for achieving an approximation ratio better than $4/3$.
In what follows, whenever we say that $F$ is a strongly canonical 2-edge-cover, it satisfies the cycle-restricted property as well.
It is clear that a strongly canonical 2-edge-cover is also canonical.

We show how a strongly canonical 2-edge-cover can be obtained from a cycle-restricted 2-edge-cover without increasing its size; similar transformations have also been shown in~\cite{GGA}.

\begin{lemma}\label{lem:strongly-canonical}
    Given a cycle-restricted 2-edge-cover $F$ in a structured graph $G=(V,E)$, in polynomial time one can compute a strongly canonical 2-edge-cover $S$ in $G$ with $|S|= |F|$.
\end{lemma}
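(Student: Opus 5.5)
The proof is a local-improvement argument driven by a potential function. Starting from $S:=F$, we repeatedly apply edge swaps of the form $S:=(S\setminus\{e\})\cup\{e'\}$, or remove an edge that is redundant (both endpoints have degree at least three in $S$). Such moves never increase $|S|$. Each move must preserve three things: $S$ is a 2-edge-cover, $S$ is cycle-restricted, and a potential strictly improves. For the potential I would use the lexicographic vector (number of components, then number of bridges, then number of edges). Every move either merges two components or reduces the number of bridges and leaves the component count unchanged, so only polynomially many moves occur. If a redundant edge ever appears we delete it and add an arbitrary edge of $G$ incident to some component, which merges components, so $|S|=|F|$ is kept. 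We may therefore assume $S$ is minimal with respect to deleting edges. Then every vertex of degree at least three in $S$ has all incident $S$-edges ending at degree-2 vertices.

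First handle small non-cycle components, i.e. condition (1). By minimality, a small component with at most six edges that is not a cycle is either a ``theta''-like 2VC graph or a complex graph with two cycles joined at a vertex or by a bridge. With at most six edges, two edge-disjoint cycles would each be triangles, so the second case is a bowtie on five vertices. A 2VC non-cycle component with at most six edges spans at most five vertices and contains a vertex of degree at least three. Since $G$ is structured (2VC, with no irrelevant edges and no non-isolating 2-cuts) and $|V|\ge 20$, some edge of $G$ leaves the component from a vertex $x$. In the usual way, we remove an $S$-edge at a degree-$\ge3$ vertex near $x$ and add that outgoing edge. This merges two components, so the potential decreases.

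Next handle leaf-blocks with fewer than five vertices, i.e. condition (2). Such a leaf-block is a triangle or a 4-cycle (possibly with a chord, ruled out by minimality) attached at a cut vertex $c$. Because $\{c\}$ is not a cut of the 2VC graph $G$, and pairs $\{c,y\}$ are not irrelevant cuts, some non-cut vertex $y$ of the leaf-block has a $G$-edge $yz$ leaving the block. We add $yz$ and delete an $S$-edge of the block incident to $c$, chosen so that every degree stays at least two. If $z$ lies in another component, components merge. If $z$ lies in the same component, the block absorbs a bridge or merges blocks, so the number of bridges decreases or the leaf-block grows. I would check that in either case the potential strictly decreases.

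The main obstacle is preserving the cycle-restricted property during these moves. A move can split off or create a new component that is a triangle or a 4-cycle with an isolated pair. It can also drop $|\delta(V(C))\cap S|$ below $2$ for a 6-cycle $C$ with an isolated triple. My plan is to choose each swap so that the newly created or modified component always contains the merged structure, hence has at least five vertices or is large. Components are never split, only merged or restructured, so no new small cycle component arises. For the 6-cycle condition I would note two things. A swap removes one edge, and that edge lies inside a component. If $V(C)$ was crossed by at least two $S$-edges before the move, we can choose the removed edge not to be one of the crossing edges unless the added edge is also a crossing edge. This uses the isolated-triple structure: the degree-3 vertices $u_1,u_3,u_5$ have all their neighbors in $C$, which limits which edges can cross $\delta(V(C))$. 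I would verify this case by case, and it is where I expect most of the technical work.
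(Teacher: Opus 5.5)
Your two moves are the paper's, and so is your potential $(\mathrm{comp}(S),\mathrm{br}(S))$: you exchange an edge of a small non-cycle component for an edge leaving it, and you exchange the edge $wv_1$ of a leaf-block with at most four vertices for an edge $wz$ leaving the block. The problems lie in what you add around this core.

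\textbf{The minimality preprocessing fails, and it is not needed.} Deleting an edge whose endpoints both have $S$-degree at least three can split a component. Take a component made of two triangles joined by an edge. That edge is ``redundant''; deleting it leaves two triangle components, and re-adding one edge either recreates the same configuration or leaves a triangle component. There are two further failures:
\begin{itemize}
\item If $S$ is connected, there is no second component to merge with.
\item When the deleted edge is a bridge, the swap leaves both the component count and the bridge count unchanged.
\end{itemize}
This step also contradicts your later claim that components are never split. The paper uses no minimality here. A chord $e$ of a small component is exchanged for any $f\in\delta(V(C))$, and a leaf-block with chords is handled through a Hamiltonian cycle $(v_1,\dots,v_k)$ of $B$.

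\textbf{You must say where the leaving edge starts,} because the deleted edge has to be incident to that vertex:
\begin{itemize}
\item For a bowtie, at a non-center vertex, since the center is not a cut vertex of $G$.
\item For a $K_{2,3}$, on the 3-side, since the 2-side is not a non-isolating 2-vertex-cut.
\item For a 4-vertex leaf-block, at $v_2$ or $v_4$, since $\{v_1,v_3\}$ is not a non-isolating 2-vertex-cut. ``Irrelevant cuts'' is not the property that gives this.
\end{itemize}

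\textbf{Your tie-breaker does not work.} When $z$ lies in the same component and is reached from $v_1$ through blocks only (e.g.\ $z$ in another block containing $v_1$), the move merges blocks without destroying any bridge. Since $|S|$ is constant, your third coordinate never moves. Use the number of blocks instead. The paper's pair $(\mathrm{comp}(S),\mathrm{br}(S))$ has the same slip.

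\textbf{The $6$-cycle condition is the real gap.} You rightly single out preserving $|\delta(V(C))\cap S|\ge 2$ for every $6$-cycle $C$ with an isolated triple. The paper only asserts that its moves preserve the cycle-restricted property. Your proposed mechanism is that the removed edge can be chosen off $\delta(V(C))$ unless the added edge lies on it. This is false in general:
\begin{itemize}
\item Let $C=(u_1,\dots,u_6)$ with $\{u_1,u_3,u_5\}$ isolated by $\{u_2,u_4,u_6\}$.
\item Let $S$ contain a $K_{2,3}$ component $D$ with sides $\{u_2,u_4\}$ and $\{u_1,u_3,u_5\}$.
\item Let $S$ also contain a $K_{2,3}$ component $Q$ with sides $\{a_1,a_2\}$ and $\{u_6,b_2,b_3\}$, where $N(b_2),N(b_3)\subseteq\{a_1,a_2,u_6\}$ and $N(u_6)\subseteq V(Q)\cup V(C)$. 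This is compatible with $G$ being 3-connected, hence structured.
\end{itemize}
Then $\delta(V(C))\cap S=\{u_6a_1,u_6a_2\}$. Every single swap at $Q$ that adds an edge leaving $V(Q)$ and keeps all degrees at least two must add some $u_6u_j$ inside $V(C)$. It must also delete $u_6a_1$ or $u_6a_2$, leaving exactly one crossing edge. Processing $D$ first avoids this, since its swap deletes and adds edges inside $V(C)$.

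So what is missing is a rule for choosing which component or leaf-block to process, or an extra repair move. You also need a proof that an admissible move preserving the $6$-cycle condition always exists. The triangle and 4-cycle conditions, by contrast, do follow from your observation that, once the preprocessing is dropped, components are only merged, never split.
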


\begin{proof}
    We start with $S:=F$. We show that if $S$ is not a strongly canonical 2-edge-cover, in polynomial time, we can construct a cycle-restricted 2-edge-cover of the same size that is lexicographically smaller than $S$ with respect to $(\mathrm{comp}(S), \mathrm{br}(S))$, where $\mathrm{comp}(S)$ is the number of components in $S$, and $\mathrm{br}(S)$ is the number of bridges in $S$.
    
    \begin{enumerate}
        \item Suppose $S$ contains a small component $C$ which is not a cycle.
        Then, the number of edges in $C$, which is at most six, is more than the number of vertices in $C$.
        Since $S$ contains no triangle component, the pair $(|V(C)|,|E(C)|)$ is $(4,5), (4,6),$ or $(5,6)$.
        
        If $C$ is a cycle with at least one chord $e$, we can choose one edge $f\in \delta(V(C))$ and replace $S$ with $(S\setminus \{e\})\cup \{f\}$.
        This operation decreases $\mathrm{comp}(S)$ without increasing the size.
        
        Otherwise, $C$ must be either a bowtie or a $K_{2,3}$ (see Figures~\ref{fig:bowtie} and~\ref{fig:K23}).
        Suppose that $C$ is a bowtie that has two triangles $(u, v_1, v_2)$ and $(u, v_3, v_4)$.
        Since $u$ is not a cut vertex of $G$, there is an edge $wz\in E\setminus S$ with $w\in \{v_1, v_2, v_3, v_4\}$ and $z\in V\setminus V(C)$.
        In this case, we can replace $S$ with $(S\setminus\{wu\})\cup\{wz\}$.
        Suppose that $C$ is a $K_{2, 3}$ with two sides $\{u_1, u_2\}$ and $\{v_1, v_2, v_3\}$. 
        Since $\{u_1, u_2\}$ is not a non-isolating 2-vertex cut of $G$, there is an edge $wz\in E\setminus S$ with $w\in \{v_1, v_2, v_3\}$ and $z\in V\setminus V(C)$.
        In this case, we can replace $S$ with $(S\setminus\{wu_1\})\cup\{wz\}$.
        These operations decrease $\mathrm{comp}(S)$ without increasing the size.

        \begin{figure}
            \centering
            \includegraphics[width=100mm]{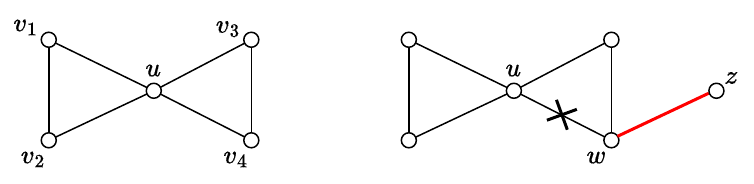}
            \caption{Removing a bowtie.}
            \label{fig:bowtie}
        \end{figure}

        \begin{figure}
            \centering
            \includegraphics[width=100mm]{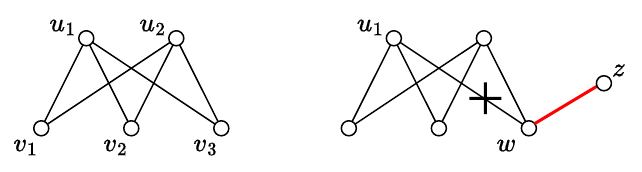}
            \caption{Removing a $K_{2,3}$.}
            \label{fig:K23}
        \end{figure}

        \item Suppose $S$ contains a leaf-block $B$ with at most four vertices in some complex component $C$.
        Then, $B$ is a 3-cycle or a 4-cycle, possibly with chords.
        Let $(v_1,\ldots,v_k)$ be a cycle in $B$, where $k\in\{3,4\}$, and assume without loss of generality that $v_1$ is the unique cut vertex of $B$ in $C$.
        When $k=3$, since $v_1$ is not a cut vertex in $G$, there is an edge $wz\in E\setminus S$ with $w\in \{v_2, v_3\}$ and $z\in V\setminus V(B)$.
        When $k=4$, since $\{v_1, v_3\}$ is not a non-isolating 2-vertex-cut of $G$, there is an edge $wz\in E\setminus S$ with $w\in \{v_2, v_4\}$ and $z\in V\setminus V(B)$.
        In either case, we can replace $S$ with $(S\setminus\{wv_1\})\cup\{wz\}$.
        If $z$ belongs to a different component from $v_1$ in $S$, this operation decreases $\mathrm{comp}(S)$ without increasing the size.
        Otherwise, if $z$ belongs to the same component as $v_1$ in $S$, this operation decreases $\mathrm{br}(S)$ without increasing the size and $\mathrm{comp}(S)$.
    \end{enumerate}
    
    Note that the above operations preserve the cycle-restricted property.
    As long as $S$ does not satisfy Conditions (1) and (2) in Definition~\ref{def:strongly-canonical}, at least one of the above operations is applicable, and each operation can be performed in polynomial time. 
    Since each operation strictly decreases the lexicographical order of $(\mathrm{comp}(S), \mathrm{br}(S))$, we can transform $S$ into a strongly canonical 2-edge-cover of the same size in polynomial time.
\end{proof}

\section{Credit-Based Argument}\label{sec:credit}

By Lemmas~\ref{lem:cycle-restricted} and~\ref{lem:strongly-canonical}, we obtain a strongly canonical 2-edge-cover $S$ whose size is at most $1+\varepsilon$ times the optimal value of 2-VCSS.
In what follows, we will gradually add edges to and sometimes remove edges from $S$ until $S$ becomes 2VC.
To control the increase in the size of $S$, we introduce \emph{credits} for $S$ in a similar way to previous works. 

In \cite{BGJ}, a certain number of credits, $\pcr (S)$, is assigned to a canonical 2-edge-cover $S$ as the sum of the following values: 

\begin{itemize}
    \item To each component $C$ of $S$ with at most five edges, assign $\pcr (C)=\frac{1}{3}|E(C)|$ credits.
    \item To each component $C$ with at least six edges of $S$, assign $\pcr (C)=1$ credit.
    \item To each block $B$ in a component with at least six edges of $S$, assign $\pcr (B)=1$ credit.
    \item To each bridge $b$ in a component with at least six edges of $S$, assign $\pcr (b)=\frac{1}{4}$ credits.
\end{itemize}

\noindent
Then, they define $\pcost (S) = |S| + \pcr (S)$ and prove the following lemmas to obtain a feasible solution of size at most $\frac{4}{3}\opt(G)$.

\begin{lemma}[\mbox{\cite[Lemma 2.14]{BGJ}}]\label{lem:pcost}
    For a canonical 2-edge-cover $S$, $\pcost (S)\le \frac{4}{3}|S|$.
\end{lemma}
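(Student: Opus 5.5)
The plan is a charging argument: distribute the credits $\pcr(S)$ over the edges of $S$ and show that no edge receives more than $\frac{1}{3}$. Since $\pcost(S)=|S|+\pcr(S)$, it suffices to prove $\pcr(S)\le\frac{1}{3}|S|$. Credits are defined componentwise, so I will prove $\pcr(C)\le\frac{1}{3}|E(C)|$ for each component $C$ of $S$ and then sum over components.

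For a component with at most five edges the inequality holds with equality by definition. For a component $C$ with at least six edges, write $b$ for the number of bridges and $\beta$ for the number of blocks. Then $\pcr(C)=1+\beta+\frac{b}{4}$, and every edge of $C$ is either a bridge or lies in exactly one block.

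If $C$ is 2VC, it is a single block, so $\beta=1$ and $b=0$. Hence $\pcr(C)=2\le\frac{1}{3}\cdot 6\le\frac{1}{3}|E(C)|$, as needed.

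If $C$ is complex, I charge each block $B$ its credit using its own edges and use the remaining slack to pay for the component credit and the bridges.
\begin{itemize}
\item Every block has at least three edges, so each block $B$ contributes $\frac{1}{3}|E(B)|\ge 1$, which pays for $\pcr(B)$.
\item Each bridge contributes $\frac{1}{3}$ but needs only $\frac{1}{4}$, leaving $\frac{1}{12}$ of slack per bridge.
\item By canonicity (2), every leaf-block has at least five vertices, hence at least five edges. Such a block contributes at least $\frac{5}{3}$, leaving at least $\frac{2}{3}$ of slack.
\end{itemize}
A complex component has a tree-like block–bridge structure, so it has at least two leaf-blocks (a leaf of that structure cannot be a bridge, since $S$ is a 2-edge-cover and has no degree-one vertices). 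The slack from two leaf-blocks is therefore at least $\frac{4}{3}\ge 1$, which pays for $\pcr(C)=1$.

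The one case needing care is whether the component really has two leaf-blocks. Because minimum degree is at least two, every leaf of the block tree is a block, and a complex component has at least two leaves. If a block has no cut vertex, the component is that block alone and is therefore 2VC, which is the case already handled. This step is the main potential gap, but it is standard, so I expect no real obstacle.

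Summing $\pcr(C)\le\frac{1}{3}|E(C)|$ over all components gives $\pcr(S)\le\frac{1}{3}|S|$, and hence $\pcost(S)\le\frac{4}{3}|S|$.
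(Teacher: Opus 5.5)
Your proof is correct and uses the same componentwise charging argument the paper gives for its analogue, Lemma~\ref{lem:initial-cost}: small components pay exactly, a large 2VC component pays $2\le\frac{1}{3}\cdot 6$, and in a complex component each block pays for itself while the slack of the two leaf-blocks (at least five edges each) covers the component credit. At rate $\frac{1}{3}$ a 3-edge block already yields exactly $1$ credit, so you rightly need no minimality assumption, unlike the $\frac{5}{16}$ version.
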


\begin{lemma}[\mbox{\cite[Lemmas 2.17, 2.18, and 2.21]{BGJ}}]\label{lem:bgjtrans}
    Given a canonical 2-edge-cover $S$ in a structured graph $G$, one can compute in polynomial-time a 2-VCSS solution $S'$ with $\pcost (S')\le \pcost (S)$.
\end{lemma}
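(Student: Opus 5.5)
The plan is an iterative potential argument: keep a current edge set $S$ with the credit assignment $\pcr$ defined above, and repeatedly modify $S$ so that the potential $\pcost(S)=|S|+\pcr(S)$ never increases while a combinatorial measure of how far $S$ is from being 2VC strictly decreases. The invariant maintained throughout is that $S$ stays a canonical 2-edge-cover: small components are cycles and leaf-blocks have at least five vertices. If some operation breaks canonicity, we restore it with local exchanges in the style of Lemma~\ref{lem:strongly-canonical}, which do not increase $|S|$ and do not increase $\pcr$. When $S$ is a single 2VC component, $\pcr(S)=2$ (one credit for the component and one for its unique block), so $|S'|\le \pcost(S)-2$. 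This is exactly the additive $-2$ slack that Lemma~\ref{lem:red} exploits. I would organise the process in three phases, mirroring the three cited lemmas.

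\emph{Phase 1 (bridge covering).} For each complex component $C$, consider its block--bridge tree. Add a path $P$ of $G$ between two distinct leaf-blocks of this tree (or from a leaf-block to another component) that avoids the cut vertex separating them. Such a path exists by 2-vertex-connectivity of $G$. Adding $P$ merges every block and bridge on the tree path into one new block. For the accounting: each leaf-block carries $1$ credit and each bridge $\frac14$. Collapsing $k\ge 2$ blocks and the bridges between them frees at least $k-1$ credits, which pays for the added edges provided the path is chosen short, i.e.\ of one edge or with interior vertices in other components. In the second case the credits of the absorbed components also become available. We repeat until every component is bridgeless and every block is a leaf-block of a 2VC component, so every component is 2VC.

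\emph{Phase 2 (gluing).} Now all components are 2VC: small cycles with $\frac13|E(C)|$ credits, or large 2VC components with $2$ credits. Contract each component and look for a cycle of length at least two in the contracted graph, or for a pair of edges of $G$ joining two components at distinct endpoints. Adding the cycle and deleting a suitable edge inside a small cycle, when the attachment points on it are adjacent, merges the components into one 2VC component. Each merged component releases its component credit. For large components, at least $2$ credits are released and only one of them is needed by the new block. For small cycles we need the extra edge saved by deletion: a $4$- or $5$-cycle has $\frac43$ or $\frac53$ credits. The 3-matching lemma (Lemma~\ref{lem:3match}) guarantees enough disjoint connecting edges when a side has at least four vertices, so these shortcuts exist.

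\emph{Phase 3} covers the remaining degenerate configurations, such as a single 2VC component that is not yet spanning, or triangles adjacent only through a single vertex. These are handled by case analysis using the structured-graph assumptions: no irrelevant edges, no non-isolating 2-vertex-cuts, no removable 5-cycles.

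I expect the main obstacle to be Phase 2 when many small cycles, especially triangles and 4-cycles with only $1$ or $\frac43$ credits, must be merged and all their connecting edges attach at the same vertex or at adjacent vertices. Then no edge can be shortcut, and a naive cycle in the contracted graph costs one edge per component. My first attempt would be to choose the merging cycle to include a large component whenever possible, since its surplus credit absorbs the deficit. When only small cycles are involved, I would use the forbidden structures (removable 5-cycles, non-isolating cuts) to show that two attachment vertices on each cycle are non-adjacent or distinct, allowing an edge of the cycle to be dropped. This delicate case analysis is where the bulk of the work lies.
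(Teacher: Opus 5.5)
The paper does not reprove this lemma; it imports it from \cite{BGJ}. Measured against that argument, your outline leaves the decisive step open. A canonical 2-edge-cover may contain triangles, 4-cycles and 5-cycles, which carry only $1$, $\frac43$, $\frac53$ credits. You postpone them to the gluing phase and concede that the required case analysis is ``where the bulk of the work lies''. Naive gluing genuinely loses here. Attaching a chordless 4-cycle to a large 2VC component by two edges at opposite vertices of the cycle adds two edges but frees only $\frac43+2-2=\frac43$ credits. So you must prove that better attachments always exist and can be combined. The cited proof does this in a separate phase, \emph{before} bridge covering and gluing; Lemma~\ref{lem:trans} and Section~\ref{sec:remove} show the same mechanism for the stronger credit scheme:
\begin{itemize}
\item shortcut pairs (Lemma~\ref{lem:shortcutpath}, Corollary~\ref{lem:distinct_component}) let a 4- or 5-cycle be entered and left at the ends of a Hamiltonian path of its vertex set, so absorbing it costs at most one extra edge;
\item a path-extension argument as in Claim~\ref{clm:merging-path} handles matching edges that lead into further small components;
\item the 3-matching lemma handles cycles adjacent to a single component.
\end{itemize}
Bridge covering and gluing are then run on components that carry a full credit as a component plus one per block. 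Your idea of forcing shortcuts only from the exclusions in the definition of a structured graph cannot suffice on its own either. Structured graphs may contain 4-cycles with an isolated pair, and, as the paper recalls, some 4-cycle components could not be removed in that first stage of \cite{BGJ} and needed more complex operations later.

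Your Phase~1 accounting is also unjustified. Two-vertex-connectivity does not give a path between two leaf-blocks avoiding every cut vertex between them. It only gives, for the single cut vertex $v$ of a leaf-block $B$, a path in $G-v$, and that path may re-enter $C$ right across a bridge. Suppose $b\in V(B)\setminus\{v\}$ is adjacent in $G$ to the far end $w$ of a bridge $vw$, and $w$ is incident only to bridges of $C$. Adding the single edge $bw$ merges just $B$ and $vw$ into a new block. This frees $1+\frac14-1=\frac14$ credits for one new edge, so $\pcost$ rises by $\frac34$. A correct bridge-covering step must choose the covering path to avoid several cut vertices of the block--bridge tree at once, absorbing at least two blocks or a block and enough bridges. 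Justifying that such a choice exists needs more than 2-vertex-connectivity (e.g.\ the absence of non-isolating 2-vertex-cuts), and this is what the cited bridge-covering lemma provides. In Phase~2, a cycle in the contracted component graph may enter and leave some component at the same vertex, leaving a cut vertex. You therefore need to show that a gluing cycle with two distinct attachment vertices in every visited component exists and is affordable. Phase~3 contains no argument at all, and a triangle attached through a single vertex cannot occur anyway, since $G$ is 2VC.
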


\noindent In their algorithm, they first construct a canonical 2-edge-cover $S$ with $|S|\le \opt(G)$, and apply Lemma~\ref{lem:bgjtrans} to obtain a 2-VCSS solution $S'$. Then, by Lemmas~\ref{lem:pcost} and~\ref{lem:bgjtrans}, they evaluate the size of $S'$ as $|S'| < \pcost (S')\le\pcost (S)\le\frac{4}{3}|S|\le\frac{4}{3}\opt(G)$.

Our algorithm follows this approach, with the technical contribution being that a strongly canonical 2-edge-cover allows a smaller amount of credit to suffice.
We assign credits $\cre (S)$ to a 2-edge-cover $S$ as the sum of the following values, and define the cost of $S$ as $\cost (S) = |S| + \cre (S)$:

\begin{itemize}
    \item To each small component $C$ of $S$, assign $\cre (C)=\frac{5}{16}|E(C)|$ credits. 
    \item To each large component $C$ of $S$, assign $\cre (C)=1$ credit.
    \item To each block $B$ in a large component of $S$, assign $\cre (B)=1$ credit.
    \item To each bridge $b$ in a large component of $S$, assign $\cre (b)=\frac{1}{4}$ credits.
\end{itemize}

\noindent
Note that to each large 2VC component $C$ of $S$, we assign one credit in the role of a component, and one additional credit in the role of a block of that component.
Thus, the contribution of a 2VC large component $C$ to the total credit (or to $\cre(S)$) is $2$.
Here, for a set of edges or a component, its \emph{contribution} to the total credit means the sum of the credits assigned to it. 
The difference from~\cite{BGJ} lies in the amount of credit assigned to small components, which is smaller than before, and consequently, $\cost(S)$ becomes smaller. 
Although this makes it more challenging than~\cite{BGJ} to transform $S$ into a 2-VCSS solution without increasing $\cost(S)$, the property of strongly canonical 2-edge-covers ensures that this is possible (see Section~\ref{sec:remove}).

We prove the following lemma, which provides an upper bound on the initial value of $\cost (S)$. 

\begin{lemma}\label{lem:initial-cost}
    For a minimal strongly canonical 2-edge-cover $S$, $\cost (S)\le \frac{21}{16}|S|$.
\end{lemma}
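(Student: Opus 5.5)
Since both $\cost(S)$ and $\frac{21}{16}|S|$ are additive over the components of $S$, I will show $\cost(C)=|E(C)|+\cre(C)\le\frac{21}{16}|E(C)|$ for every component $C$ of $S$, i.e.\ $\cre(C)\le\frac{5}{16}|E(C)|$. For small components this is an equality by definition. The work is in large components, which have at least seven edges. Here I take ``minimal'' to mean inclusion-wise minimal as a 2-edge-cover, so every edge of $S$ has an endpoint of degree exactly two in $S$. I expect to need this only to control blocks with chords; if it is not needed, the argument simplifies.

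For a large 2VC component $C$ the contribution is $2$, and $\frac{5}{16}\cdot 7=\frac{35}{16}\ge 2$, so the bound holds. For a large complex component $C$, write $\mathcal{B}$ for its set of blocks and $b$ for its number of bridges. Then
\[
\cre(C)=1+|\mathcal{B}|+\tfrac14 b,\qquad |E(C)|=b+\sum_{B\in\mathcal{B}}|E(B)|.
\]
Contract every block to a node. The resulting block--bridge structure is a tree. Each leaf of this tree must be a block: a leaf vertex with a single bridge would have degree one in $S$, which is impossible in a 2-edge-cover. By Definition~\ref{def:strongly-canonical}(2), each leaf block has at least $5$ vertices and hence at least $5$ edges. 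Every other block has at least $3$ edges, and a complex component has at least two leaf blocks.

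I then charge as follows. Each bridge needs $\frac14\le\frac{5}{16}$ and so pays for itself, with slack $\frac1{16}$. Each non-leaf block needs $1\le\frac{5}{16}\cdot 3$, with slack $-\frac1{16}$. Each leaf block needs $1\le\frac{5}{16}\cdot5=\frac{25}{16}$, with slack $\frac{9}{16}$. The component credit of $1$ must be covered by the slack $\frac{9}{16}\cdot 2=\frac{18}{16}$ from two leaf blocks, leaving $\frac{2}{16}$.

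The delicate point is the deficit $\frac1{16}$ per non-leaf triangle block. I would handle it by a counting argument on the block tree. Distinct blocks share at most one cut vertex, so a non-leaf block that is a triangle or a 4-cycle attaches to further structure, and the tree must have additional leaves or bridges. Concretely, if the tree has $\ell$ leaf blocks and $t$ internal blocks of size $3$, I would show $\ell\ge 2$ and that the total slack $\frac{9}{16}\ell+\frac1{16}b-\frac1{16}t-1$ is nonnegative. Internal blocks with at least $4$ edges have slack $\frac{4}{16}$ and pose no problem, so only internal triangles need attention. Two blocks meeting directly at a cut vertex need no bridge, so the argument must use leaf counts. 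The deficit is at most $\frac1{16}t$, but every cut vertex of a triangle block leads to a leaf, and a crude bound like this requires care. I would try to show that each internal triangle has at least two cut vertices, each heading toward a distinct subtree. A tree whose internal nodes of degree at least $2$ are triangles then has $\ell\ge 2$, and I would bound $t\le 2\ell-2$ or better by counting degree-$3$ triangle nodes. This gives slack at least $\frac{9}{16}\ell-\frac{2\ell-2}{16}-1=\frac{7\ell-14}{16}\ge 0$ for $\ell\ge2$.

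The main obstacle is making the bound on internal triangles rigorous. A triangle block with only two cut vertices sits on a path between leaves, so a long chain of such triangles with only two leaves would break the $t\le 2\ell-2$ bound. In that case I would instead use the surplus from leaf blocks with more than $5$ edges, or argue via minimality and canonicity that such chains cannot occur. Otherwise the grouping would need to be refined, for example by putting $\frac1{4}$ of the charge on the bridges or cut vertices of the chain. My fallback is to observe that in a chain of triangles each consecutive pair is joined at a cut vertex, and to show that $G$ being structured lets us replace edges so as to merge them. That would contradict minimality or the strongly canonical property.
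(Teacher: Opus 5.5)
\textbf{Gap: non-leaf triangle blocks are never handled.} Your reduction to single components and your slack accounting are correct. Bridges have slack $\frac{1}{16}$, non-leaf blocks with at least four edges have slack $\frac{4}{16}$, and two leaf blocks with at least five edges each supply $\frac{18}{16}\ge 1$ for the component credit. So the whole proof rests on non-leaf triangle blocks, and you leave that case open. It cannot be closed by counting on the block tree. Take two $5$-cycle leaf blocks joined by a chain of three triangle blocks. This gives a large complex component with $|E(C)|=19$ and $\cre(C)=1+5=6>\frac{95}{16}$, and it satisfies conditions (1) and (2) of strong canonicity. So a bound such as $t\le 2\ell-2$ is simply false unless minimality is used. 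Your fallback of merging triangle chains via the structure of $G$ is also unnecessary. What is missing is a direct use of minimality against non-leaf triangle blocks.

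\textbf{The missing step, and a caveat.} This is how the paper proceeds. A non-leaf triangle block $B=(a,b,c)$ has at least two cut vertices, say $a$ and $b$, and each has degree at least three in $S$. Under your own reading of ``minimal'' you are finished immediately, because the edge $ab$ has no endpoint of degree two; you had the tool and did not apply it. Your reading is, however, stronger than what is available. In the proof of Lemma~\ref{lem:main}, $S_1$ is obtained by greedily deleting edges while preserving strong canonicity, so it is only minimal among strongly canonical 2-edge-covers. The paper therefore argues that $S\setminus\{ab\}$ is still strongly canonical. It is a 2-edge-cover, and every component keeps its vertex set, so no new small, triangle or 4-cycle component appears. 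Also, $B$ just becomes two bridges, so the set of leaf blocks is unchanged. Hence every non-leaf block has at least four edges, and your accounting closes with $t=0$.

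If you write this up, note that this justification overlooks the third cycle-restricted condition. Suppose $ab\in\delta(V(D))$ for a 6-cycle $D$ with an isolated triple and $|\delta(V(D))\cap S|=2$. For instance, $D$ may itself be a leaf block of $C$ whose cut vertex $a$ has degree $4$ in $S$, with $c\notin V(D)$ of degree $2$ in $S$. Then no edge of $B$ can be removed. In that situation, however, both $S$-edges leaving $V(D)$ lie in $B$. Taking $a\in V(D)$, the part of $C$ hanging off $a$ away from $B$ lies inside $V(D)$ and contains a leaf block. Two distinct triangle blocks cannot be assigned the same leaf block this way, since that would put at least $7$ vertices into a $6$-set. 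Thus $\ell\ge t$, the total slack is at least $\frac{9}{16}\ell-\frac{1}{16}t-1\ge\frac{\ell}{2}-1\ge 0$, and the lemma still holds.
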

\begin{proof}
    We evaluate the contribution of each component $C$ to $\cre(S)$ in terms of its number of edges, and prove $\cre (S)\le\frac{5}{16}|S|$.

    Each small component $C$ has $\cre (C)=\frac{5}{16}|E(C)|$ credits.
    If $C$ is large and 2VC then $C$ has at least seven edges, the contribution of $C$ to $\cre(S)$ is $2$, which is less than $\frac{5}{16}|E(C)|$.
    
    Suppose that $C$ is a large complex component. 
    We claim that each block $B$ of $C$ has at least four edges. 
    Indeed, if there is a block $B$ with only three edges (hence $B$ is a 3-cycle), $B$ is not a leaf-block, 
    and so there is an edge $e$ in $B$ such that $S \setminus \{e\}$ is a strongly canonical 2-edge-cover, 
    which contradicts the minimality of $S$.
    Hence, for each block $B$, $\cre (B)(=1)$ is less than $4\cdot \frac{5}{16}=\frac{5}{16}|E(B)|$. 
    Since $C$ is complex, $C$ has at least two leaf-blocks $B_1$ and $B_2$, each of which has at least five vertices (and hence at least five edges) by the property of a strongly canonical 2-edge-cover.
    Therefore, $\cre (B_1)+\cre (B_2)+\cre (C)=3<2\cdot5\cdot\frac{5}{16}\le\frac{5}{16}(|E(B_1)|+|E(B_2)|)$ holds.
    Each bridge $b$ of $C$ is assigned $\cre(b)=\frac{1}{4}$ credits, which is less than $\frac{5}{16}$. 
    By combining these bounds, the contribution of $C$ to $\cre (S)$ is less than $\frac{5}{16}|E(C)|$.
    
    From the above, we obtain $\cre (S)\le \frac{5}{16}|S|$, and hence 
    $\cost (S)=|S|+\cre (S)\le \frac{21}{16}|S|$ as desired.  
\end{proof}

The remaining task is to transform $S$ into a 2VC graph without increasing its cost.
Instead of designing an algorithm for this transformation from scratch, we partially rely on the transformation algorithm proposed in \cite{BGJ} (i.e., Lemma~\ref{lem:bgjtrans}). 
A key observation is that $\pcost(S)$ coincides with $\cost(S)$ for a strongly canonical 2-edge-cover $S$ that contains no small components. 
Since Lemma~\ref{lem:bgjtrans} guarantees that $S$ can be transformed into a 2VC graph without increasing $\pcost(S)$, 
it suffices to focus only on removing the small components of $S$, as stated in the following lemma.

\begin{lemma}\label{lem:trans}
    Given a strongly canonical 2-edge-cover $S$, one can compute in polynomial-time a strongly canonical 2-edge-cover $S'$ without small components such that $\cost (S')\le \cost (S)$.
\end{lemma}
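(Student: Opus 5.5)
We proceed iteratively. As long as $S$ has a small component $C$, we modify $S$ into a strongly canonical 2-edge-cover $S^\ast$ with $\cost(S^\ast)\le\cost(S)$ and strictly fewer small components. Whenever a modification breaks canonicity (a new small non-cycle component, or a leaf-block with at most four vertices), we re-apply the operations of Lemma~\ref{lem:strongly-canonical}. These exchange one edge for another, do not increase $|S|$, preserve the cycle-restricted property, and only merge components or reduce bridges. So they do not increase the number of small components or the cost, up to checking that credits of blocks and bridges do not increase. Each round reduces the number of small components, so there are polynomially many rounds.

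By strong canonicity, a small component $C$ is a $k$-cycle with $4\le k\le 6$. It carries $\frac{5k}{16}$ credits, i.e.\ $\frac{5}{4}$, $\frac{25}{16}$ or $\frac{15}{8}$. The basic move merges $C$ with another component $D$. Since $G$ is 2VC, and by Lemma~\ref{lem:3match} whenever $|V(C)|\ge 4$ and the rest has at least four vertices (which holds since $|V|\ge 20$), there are three vertex-disjoint edges between $V(C)$ and $V\setminus V(C)$.

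\textbf{Case A: two edges into one component.} Suppose two of these edges $u_1x_1,u_2x_2$ go to the same component $D$, with $u_1,u_2$ consecutive or at useful distance on $C$. Then we add both edges and delete one edge of $C$ on the $u_1$--$u_2$ path, creating a single 2VC piece through $C$ and $D$. The net change in $|S|$ is $+1$. The freed credits are $\cre(C)\ge\frac54$, plus $\cre(D)$ minus the credits of the merged component.
\begin{itemize}
\item If $D$ is large, the merged component is large. The new block absorbs $C$ and possibly some blocks and bridges of $D$, so the block and component credit does not increase. The budget then gives $\frac54\ge 1$.
\item If $D$ is small, the merged component has at least $8$ edges. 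It needs $2$ credits if 2VC, and we have $\frac54+\frac54\ge 2+1$ only when the sum of the small credits is at least $3$. This holds since $\frac{5}{16}(|E(C)|+|E(D)|)\ge\frac{5}{16}\cdot 8=\frac52$. That is short of $3$, so here we instead delete an edge in each cycle (two $C$--$D$ edges attached at suitable vertices). The change is $0$ edges for a 2VC cycle-like component, and we check $\frac52\ge 2$.
\end{itemize}

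\textbf{Case B: the edges go to different components.} Then we add two edges to distinct components and delete edges so as to create a large complex component, or a cycle through three components. The credit accounting uses the $\frac14$ per bridge and $1$ per block.

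The main obstacle is the tight configurations where the credit of a small component, $\frac{5}{16}|E(C)|$, is barely above what is needed. These are 4-cycles (credit $\frac54$) whose outgoing edges all hit the same pair of vertices, and 6-cycles whose neighborhoods are confined. Here exactly the cycle-restricted property is needed. A 4-cycle whose exits leave from only two nonadjacent vertices would have an isolated pair, which is forbidden. A 6-cycle whose alternate vertices have no outside neighbors has an isolated triple, so by the third condition at least two $S$-edges leave it, contradicting that it is a component. Hence in each tight case we can find two non-consecutive exit vertices, or an exit edge from a vertex allowing a ``rotation'' (delete a cycle edge adjacent to the exit, add the exit edge). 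This gives a merge with net edge change $0$ or even $-1$. For triangles there is nothing to do, since none exist.

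I would first try to push every case into a ``path through $C$'' merge, adding two edges and removing one. I would handle the $4$-cycle with a small partner $D$ by exhaustive subcase analysis on $|E(D)|\in\{4,5,6\}$, using the 3-matching lemma to guarantee the required exit pattern.
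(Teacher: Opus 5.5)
\textbf{There is a genuine gap.} What you give is an outline, not a proof. Case B is not specified at all. The cost checks you do give fail in exactly the tight configurations the lemma is about.

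\textbf{Case A with $D$ large.} Adding $ux,vy$ for a shortcut pair $\{u,v\}$ of $C$ and dropping one edge of $C$ costs one edge and creates a new block. If the $x$--$y$ path in $D$ consists of $t$ bridges only, you recover just $t/4$, so the net change is $2-\frac t4-\frac{5}{16}|E(C)|$. This is $+\frac14$ for a $4$-cycle with $t=2$, and $+\frac12$ (resp.\ $+\frac{3}{16}$) for a $4$-cycle (resp.\ $5$-cycle) with $t=1$. So ``the block and component credit does not increase'' is false. The paper needs two extra ideas here:
\begin{enumerate}
\item For $t=1$ it also deletes the bridge $xy$, so that everything becomes bridges (Lemma~\ref{lem:no6_onead}, Case 3).
\item For a $4$-cycle with $t=2$ it takes a $3$-matching $u_1x_1,u_2x_2,u_3x_3$ into $D$. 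It then either swaps out a shared bridge, or uses that $\{u_2,u_4\}$ is not isolated by $\{u_1,u_3,x_2\}$ to find a shortcut pair whose path in $D$ meets a block or three bridges.
\end{enumerate}
The configuration you attribute to the isolated-pair condition (exits only from two opposite vertices) is already excluded by Lemma~\ref{lem:3match}.

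\textbf{Case A with $D$ small.} The plain merge changes the cost by $3-\frac{5}{16}(|E(C)|+|E(D)|)>0$ for cycle lengths $(4,4)$ and $(4,5)$. Your fallback, opening both cycles into one long cycle, needs $\{x_1,x_2\}$ to be a shortcut pair of $D$, which nothing guarantees. The paper avoids this configuration entirely. Since $|V|\ge 20$, the unique neighbour of a $4$/$5$-cycle must be large; otherwise Corollary~\ref{lem:distinct_component} gives matching edges into two distinct components.

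\textbf{The rotation move is invalid.} Deleting a cycle edge at the exit vertex and adding the exit edge leaves the other endpoint of the deleted edge with degree $1$. In fact, merging a cycle component always costs at least one extra edge unless some edge outside it is removed.

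\textbf{Case B is where the real work lies, and you give no argument.} A path $P=P_{uv}\cup\{uu_L,vu_R\}$ through $C$ joining distinct $C_L,C_R$ changes the cost by $2-\frac{|P|}{16}+\frac{5}{16}+\Delta\cre(C_L)+\Delta\cre(C_R)$. Here $\Delta\cre=-1$ for a large end component but only $1-\frac{5}{16}|E(C_L)|$ for a small one.
\begin{itemize}
\item With $C$ and $C_L$ both $4$-cycles and $C_R$ large, the change is $+\frac34$. The paper closes this gap by extending $P$ through further small components (Claim~\ref{clm:merging-path}, properties P1--P4). It continues until each small end component has a shortcut pair whose second matching edge lands back on $P$ or in the other end component. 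It then re-routes that component along this pair to destroy at least three bridges.
\item For $6$-cycles there is an obstruction your plan does not notice. If $C$, $C_L$, $C_R$ are all $6$-cycles and $|P|=7$, the change is $2-\frac{7}{16}+\frac{5}{16}-\frac78-\frac78=+\frac18$. The isolated-triple condition only yields the existence of a shortcut pair (Lemma~\ref{lem:6-cycle}); it does not pay for the merge.
\end{itemize}
To handle the $6$-cycle case the paper needs two further ingredients. One is property P5 ($|P|\ge 11$ when some neighbour is not a $6$-cycle). The other applies when every $6$-cycle's shortcut pair leads into two distinct $6$-cycles: the cycle-edge chain argument of Lemma~\ref{lem::annoying6cycles} grows a chain of $6$-cycles until an edge closes it back. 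This merges at least three of them into one bridgeless component while adding about one edge per cycle. Without these ingredients, your induction on the number of small components does not go through.
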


We postpone the proof of this lemma to the next section and show that, assuming it, Lemma~\ref{lem:main} follows as below. 
Theorem~\ref{thm:main} then follows immediately from Lemmas~\ref{lem:red} and~\ref{lem:main}.

\begin{proof}[Proof of Lemma~\ref{lem:main}]
    Given a structured graph $G$ with at least $20$ vertices, by using Lemma~\ref{lem:cycle-restricted}, 
    we can construct a cycle-restricted 2-edge-cover $S_0$ of size at most  
    $(1+\varepsilon')\rho_c(G)$
    for any constant $\varepsilon'>0$.
    Note that the size $\rho_c(G)$ of a minimum cycle-restricted 2-edge-cover in $G$ is a lower bound on $\opt (G)$, so $|S_0|\le(1+\varepsilon')\opt (G)$.
    By Lemma~\ref{lem:strongly-canonical} applied to $S_0$, we can construct a strongly canonical 2-edge-cover $S_1$ with $|S_1|\le|S_0|$.
    Here, we may assume that $S_1$ is a minimal strongly canonical 2-edge-cover, since otherwise we could greedily remove edges from $S_1$.  
    Then, $\cost (S_1)\le \frac{21}{16}|S_1|$ by Lemma~\ref{lem:initial-cost}.
    By Lemma~\ref{lem:trans} applied to $S_1$, we can construct a strongly canonical 2-edge-cover $S_2$ without small components such that $\cost (S_2)\le \cost (S_1)$.
    By Lemma~\ref{lem:bgjtrans} applied to $S_2$, we can construct a 2-VCSS solution $S$ such that $\cost (S)=\pcost (S)\le \pcost (S_2)=\cost (S_2)\ (\le \cost (S_1))$.
    The size of $S$ is evaluated as
    \begin{align*}
        |S|&=\cost (S)-2\\
        &\le \cost (S_1)-2 \\
        &\le \frac{21}{16}|S_1|-2\\
        &\le \frac{21}{16}|S_0|-2\\
        &\le \frac{21}{16}(1+\varepsilon')\opt (G)-2.
    \end{align*}
    By setting $\varepsilon'=\frac{16}{21}\varepsilon$, we obtain the desired bound.
\end{proof}

\section{Removing Small Components (Proof of Lemma~\ref{lem:trans})}\label{sec:remove}
In this section, we prove Lemma~\ref{lem:trans} by showing that a strongly canonical 2-edge-cover $S$ can be transformed into one with fewer small components without increasing its cost.
Our algorithm is based on the transformation proposed in the previous work~\cite{BGJ}, but since $S$ receives less credit, 
each transformation becomes more challenging.
Our contribution is to show that the transformation is still possible by leveraging the properties of cycle-restricted 2-edge-covers.
Here, we briefly outline why each property of a cycle-restricted 2-edge-cover is useful in the transformation.
\begin{itemize}
    \item 
    In~\cite{BGJ}, the authors handled 2-edge-covers that could contain triangle components 
    and proposed an algorithm to remove such components without increasing the cost. 
    This step was one of the reasons why the approximation ratio analysis became tight at $4/3$.
    Since cycle-restricted 2-edge-covers do not contain triangle components, we do not need to perform such an operation in our algorithm.
    \item 
    In~\cite{BGJ}, 6-cycles were treated as large components, and two credits were used to merge a 6-cycle with other components. 
    This was also one of the factors that made the approximation ratio tight at $4/3$.
    In a cycle-restricted 2-edge-cover, the most unfavorable type of 6-cycle component (namely, 6-cycles with an isolated triple) does not exist. 
    Therefore, in our algorithm, 6-cycles can be handled using an analysis similar to that for 4- and 5-cycles.
    \item
    In~\cite[Section 3.1]{BGJ}, removing 4-cycle components proved challenging. 
    Indeed, some 4-cycle components were not fully removed at that stage, and underwent more complex operations later. 
    In this paper, leveraging the fact that there are no 4-cycle components with an isolated pair in a cycle-restricted 2-edge-cover, 
    we can remove all 4-cycle components within this stage, resulting in a much clearer analysis.
\end{itemize}

In our algorithm, we remove small components from $S$ according to the following priority order:
(i) 6-cycle components, 
(ii) 4- and 5-cycle components adjacent to at least two components of $S$, and
(iii) 4- and 5-cycle components adjacent to only one component of $S$.
We will handle these removals in the reverse order, that is, we consider the following cases in this order. 
This is because Cases 1 and 2 are relatively easy to handle, while the analysis of Case 3, although more involved, follows essentially the same approach as in Cases 1 and 2.

\begin{description}
    \item[Case 1.] There is no 6-cycle component of $S$, and every 4- and 5-cycle component is adjacent to only one component of $S$ (Section~\ref{sec:no6_onead}).
    \item[Case 2.] There is no 6-cycle component of $S$, and there exists a 4- or 5-cycle component adjacent to at least two components of $S$ (Section~\ref{sec:no6_twoad}).
    \item[Case 3.] There is a 6-cycle component of $S$ (Section~\ref{sec:exist6}). 
\end{description}

Before proceeding to the proof of Lemma~\ref{lem:trans}, we introduce a \emph{shortcut pair}, a key concept in our arguments that was also used in~\cite{BGJ}.
Recall that, as described in Subsection~\ref{sec:structured}, the input graph $G=(V,E)$ is a structured graph with at least $20$ vertices.

\begin{definition}
    Let $C$ be a $k$-cycle with $k\in\{4,5,6\}$.
    We say that a vertex pair $\{u,v\}\subseteq V(C)$ is a \emph{shortcut pair} of $C$ if there exists a matching $M$ of size $2$ between $\{u,v\}$ and $V\setminus V(C)$, and there exists a $u$-$v$ Hamiltonian path $P_{uv}$ in $G[V(C)]$ such that the first and last edge of $P_{uv}$ belong to $E(C)$.
    We call $M$ the \emph{corresponding matching} of $\{u,v\}$, and $P_{uv}$ the \emph{shortcut path} of $\{u,v\}$ (or simply of $C$ if $\{u,v\}$ is clear from the context).
\end{definition}
Note that a \emph{$u$-$v$ Hamiltonian path} is a $u$-$v$ path that traverses every vertex exactly once. 

\begin{lemma}[\mbox{\cite[Lemma 3.4]{BGJ}}]\label{lem:shortcutpath}
    Let $S$ be a canonical 2-edge-cover in a structured graph $G$, $C$ be a 4- or 5-cycle component of $S$, and  $wx$ be an edge of $G$ such that $w\in V(C)$ and $x\notin V(C)$. 
    Then, there exists a shortcut pair $\{u,v\}$ such that $ux$ is an edge of the corresponding matching of $\{u,v\}$.
\end{lemma}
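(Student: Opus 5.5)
The plan is to do a direct case analysis on the length $k\in\{4,5\}$ of $C=(v_1,\dots,v_k)$, using only that $G$ is structured and $|V|\ge 20$. Without loss of generality let $w=v_1$. I will look for a vertex $u\in V(C)$ with $u\neq w$ such that $u$ has a neighbor $y\notin V(C)\cup\{x\}$. Then $\{ux?\}$ is the wrong pairing, so I will instead take the matching $\{wx, uy\}$ and let the shortcut pair be $\{w,u\}$; here the role of ``$u$'' in the statement is played by $w$. The task then reduces to two things. First, choose $u$ so that a $w$-$u$ Hamiltonian path exists in $G[V(C)]$ whose first and last edges lie in $E(C)$. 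Second, make sure that $u$ has a neighbor outside $V(C)$ different from $x$.

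For the Hamiltonian-path condition I use only the cycle edges. If $u$ is a cycle-neighbor of $w$ ($u=v_2$ or $u=v_k$), then going around $C$ the long way gives a $w$-$u$ Hamiltonian path whose first and last edges are cycle edges. So it suffices to show that $v_2$ or $v_k$ has an outside neighbor other than $x$. If this fails, then every neighbor of $v_2$ and of $v_k$ outside $C$ is $x$. For $k=5$ I would also allow $u=v_3$ or $u=v_4$ when chords exist. The path $v_1v_2\,v_5v_4v_3$ needs the chord $v_2v_5$, so chord-based shortcut paths depend on $G$. Without chords, nonadjacent pairs in a plain 5-cycle have no Hamiltonian path between them. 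I therefore expect the main work to be the bad configuration in which $v_2$ and $v_k$ see nothing outside $C$ except possibly $x$.

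The bad configuration should be ruled out with the structured-graph properties. If neither $v_2$ nor $v_k$ has an outside neighbor other than $x$, consider the remaining vertices of $C$. For $k=4$ this leaves $v_3$ alone with $w$. Then $\{v_1,v_3,x\}$ isolates $v_2$ and $v_4$, provided there is no chord $v_2v_4$. If the chord exists, $\{v_1,v_3\}$ together with $x$ still separates, which gives either a 2-vertex-cut with a component of size $\ge2$, or a cut $\{v_1,v_3\}$ when $v_2,v_4$ have no outside neighbors at all. The 2-cut case contradicts the non-isolating 2-cut condition. The isolated-pair case is not forbidden in $G$, so I would handle it by taking $u=v_3$ via the shortcut path $v_1v_2v_4v_3$ if the chord $v_2v_4$ exists, and otherwise use the 3-matching lemma (Lemma~\ref{lem:3match}) applied to $V(C)\cup\{x\}$ versus the rest. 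With $|V|\ge20$, both sides have at least 4 vertices, so there are three disjoint edges leaving. This forces $v_2$ or $v_4$ to have a neighbor outside other than $x$, contradicting the assumption. For $k=5$, the analogous step uses the ``removable 5-cycle'' exclusion together with the 3-matching lemma on $V(C)\cup\{x\}$. This yields three disjoint edges from $V(C)\cup\{x\}$ to the outside. At most one of these uses $x$ and at most one uses $w$, so one starts at some $v_i$ with $i\ne1$, and its endpoint $y\ne x$. If $v_i$ is a cycle-neighbor of $w$ we are done. Otherwise I would take the cycle-neighbor of $v_i$ and redo the argument, or use a chord, possibly exchanging the roles so that $x$ is matched to the cycle-neighbor of the endpoint. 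The main obstacle is this last step: getting the endpoint adjacent to $x$'s partner while keeping the path conditions. For it I would try the explicit enumeration over $i\in\{3,4\}$ together with possible chords.
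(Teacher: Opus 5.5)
There is a genuine gap, and it comes from your first reduction. You require $w$ itself to be the vertex matched to $x$, i.e.\ you look for a shortcut pair $\{w,u\}$ with matching $\{wx,uy\}$. The lemma, which the paper imports from~\cite{BGJ} without proof, only promises \emph{some} $u\in V(C)$ with $ux\in E$. The paper relies on that freedom: the proof of Claim~\ref{clm:merging-path} handles the outcome $u\neq u_L$. Your stronger statement is false. Take a chordless $C=(v_1,v_2,v_3,v_4)$ with $w=v_1$, $N(v_2)=\{v_1,v_3,x\}$ and $N(v_4)=\{v_1,v_3\}$. Let $v_1$ be adjacent to $x$ and $y$, and $v_3$ to $z$, where $x,y,z$ are distinct vertices of a large 3-connected graph. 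This graph is structured (its only 2-vertex-cut, $\{v_1,v_3\}$, isolates $v_4$), and $C$ can be a component of a canonical cover. A partner of $v_1$ would have to be $v_2$ or $v_4$, which see nothing outside $C$ except $x$, or $v_3$, which needs the absent chord $v_2v_4$. Yet $\{v_2,v_1\}$ with matching $\{v_2x,v_1y\}$ is a valid answer. This is exactly where your $k=4$ argument breaks. The three disjoint edges leaving $V(C)\cup\{x\}$ given by Lemma~\ref{lem:3match} may start at $v_1$, $v_3$ and $x$, so they do not force $v_2$ or $v_4$ to have a neighbour other than $x$. The same failure occurs for $k=5$, even for cycle-restricted covers. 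Take a chordless $(v_1,\dots,v_5)$ with $N(v_1)=\{v_2,v_5,x\}$, $N(v_2)=\{v_1,v_3,x\}$, $N(v_5)=\{v_1,v_4,x\}$, and $v_3,v_4$ attached to two distinct further vertices. It admits no shortcut pair containing $v_1$ with $v_1x$ in its matching, so the step you leave open cannot be closed as you pose it.

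The missing idea is to let the vertex matched to $x$ range over $X=N(x)\cap V(C)$ from the start. Your closing remark about exchanging roles points this way, but it has to be built in, not added as a late patch. Let $A$ be the set of vertices of $C$ with a neighbour outside $V(C)\cup\{x\}$. Lemma~\ref{lem:3match} applied to $V(C)\cup\{x\}$ gives $|A|\ge 2$. Any $u\in X$ and $v\in A\setminus\{u\}$ that are consecutive on $C$ finish the proof, so suppose no such pair exists.
\begin{itemize}
\item If $k=4$, one checks that $X\subseteq A=\{u,\bar u\}$ for a diagonal $\{u,\bar u\}$. The other two vertices then have all neighbours in $V(C)$, so $\{u,\bar u\}$ is a non-isolating 2-vertex-cut.
\item If $k=5$, write $C=(c_0,\dots,c_4)$ with $c_0\in X$. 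Either $X=A$ is a non-consecutive pair and the other three vertices see only $C$, again a non-isolating 2-cut. Or $X=\{c_0\}$, and $c_1,c_4$ have all their neighbours in $V(C)$. The removable-5-cycle exclusion then forces one of the chords $c_1c_3$, $c_1c_4$, $c_2c_4$. A short check shows that either some $v\in A\cap\{c_2,c_3\}$ is joined to $c_0$ by a Hamiltonian path with end-edges in $E(C)$, or only $c_0$ and one other vertex of $C$ have neighbours outside $C$, which is once more a non-isolating 2-cut.
\end{itemize}
Your chord enumeration for $5$-cycles is the right tool for this last step once $u$ is allowed to vary.
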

\begin{corollary}[\mbox{\cite[Corollary 3.5]{BGJ}}]\label{lem:distinct_component}
    Let $S$ be a canonical 2-edge-cover in a structured graph $G$, and $C$ be a 4- or 5-cycle component of $S$. 
    If $C$ is adjacent to at least two other components of $S$, then we can find in polynomial time a shortcut pair $\{u,v\}$ of $C$ and its corresponding matching $\{ux_1,vx_2\}$ such that $x_1\in V(C_1)$ and $x_2\in V(C_2)$, where $C$, $C_1$, and $C_2$ are distinct components of $S$.
\end{corollary}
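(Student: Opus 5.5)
The corollary follows directly from Lemma~\ref{lem:shortcutpath}, applied twice, plus a short case analysis to make the two matching endpoints land in different components.

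Since $C$ is adjacent to at least two other components, fix two distinct components $C_1,C_2$ of $S$, both different from $C$, and edges $w_1x_1, w_2x_2\in E$ with $w_1,w_2\in V(C)$, $x_1\in V(C_1)$ and $x_2\in V(C_2)$.

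Apply Lemma~\ref{lem:shortcutpath} to the edge $w_1x_1$. This gives a shortcut pair $\{u,v\}$ whose corresponding matching $\{ux_1, vy\}$ uses $x_1$, with $y\notin V(C)$. Two cases arise.

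\emph{Case A: $y\notin V(C_1)$.} Then $y$ lies in a component $C_2'$ different from $C$ and $C_1$, and we are done.

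\emph{Case B: $y\in V(C_1)$.} Now use the edge $w_2x_2$ to reroute one endpoint of the pair into $C_2$.
\begin{itemize}
\item If $w_2\in\{u,v\}$, say $w_2=v$, then $\{ux_1, vx_2\}$ is still a matching. Indeed $x_2\neq x_1$ because they lie in different components, and the shortcut path is unchanged. So it serves as the corresponding matching. The case $w_2=u$ is symmetric, replacing $ux_1$ by $ux_2$ and keeping $vy$ with $y\in C_1$.
\item If $w_2\notin\{u,v\}$, apply Lemma~\ref{lem:shortcutpath} to $w_2x_2$. This yields a pair $\{u',v'\}$ with matching $\{u'x_2, v'z\}$. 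If $z\notin V(C_2)$, the two endpoints are already in distinct components other than $C$. If $z\in V(C_2)$, then $v'$ has a neighbor in $C_2$. I would then combine this with the fact that $u$ or $v$ has a neighbor in $C_1$ (from the first application). If $u'$ or $v'$ lies in $\{u,v\}$, swapping the corresponding matching edge for its $C_1$-edge gives a valid matching hitting $C_1$ and $C_2$.
\end{itemize}

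The main obstacle is the subcase where $\{u,v\}$ and $\{u',v'\}$ are disjoint, with all external neighbors of $u,v$ in $C_1$ and all external neighbors of $u',v'$ in $C_2$. This forces $|V(C)|\ge 4$. I would rule it out using $|V(C)|\le 5$ together with the structure of $G$. If the vertices of $C$ with neighbors outside $C$ split this way, then in the proof of Lemma~\ref{lem:shortcutpath} (which in \cite{BGJ} selects the pair adjacent along $C$ appropriately), one can pick the pair $\{w_1,w_2\}$ itself. For a 4- or 5-cycle, any two vertices admit a Hamiltonian path in $G[V(C)]$ whose end edges lie in $E(C)$ whenever they are at distance two on the cycle, or are adjacent in the 5-cycle case. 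The remaining configurations, adjacent vertices on a 4-cycle and similar, I would handle via a chord or by invoking Lemma~\ref{lem:shortcutpath} on a different edge.

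All steps are finite case checks on at most five vertices, so the pair and matching can be found in polynomial time, for instance by enumerating all pairs in $V(C)$ and all candidate matchings.
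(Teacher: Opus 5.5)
The paper does not reprove this corollary; it imports it from \cite{BGJ}, so your argument has to stand on its own. Your overall plan is sound: apply Lemma~\ref{lem:shortcutpath} to an edge into $C_1$, then to an edge into $C_2$, and swap matching edges. Case~A, the case $w_2\in\{u,v\}$, and the overlapping subcase $\{u',v'\}\cap\{u,v\}\neq\emptyset$ are all handled correctly.

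The gap is in the subcase you flag as the main obstacle, $\{u,v\}\cap\{u',v'\}=\emptyset$, because the fact you invoke there is false. A pair at distance two on $C$ is \emph{not} automatically a shortcut pair. In a chordless 4-cycle $(c_1,c_2,c_3,c_4)$ there is no $c_1$--$c_3$ Hamiltonian path in $G[V(C)]$ at all, since it would need the chord $c_2c_4$. On a 5-cycle, $\{c_1,c_3\}$ likewise needs a chord at $c_2$. Conversely, the configuration you set aside as problematic, a pair adjacent on a 4-cycle, is always fine. Your suggestion to pick $\{w_1,w_2\}$ itself is also unjustified: $w_1$ and $w_2$ may coincide, or may lie at distance two on a chordless cycle. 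As written, the final case does not close.

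The missing observation is the converse of what you claimed. Any two vertices $a,b$ with $ab\in E(C)$ form a shortcut pair, with shortcut path $E(C)\setminus\{ab\}$; this holds for 4- and 5-cycles alike, and the paper uses exactly this in Lemma~\ref{lem:6-cycle}. With it, the disjoint subcase closes immediately.

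The vertices $u,v,u',v'$ are four distinct vertices of $C$. If $|V(C)|=4$ they are all of $V(C)$; if $|V(C)|=5$ they are the four consecutive vertices left after deleting one vertex of the cycle. Both groups are present: $u,v$ each have a neighbor in $C_1$, and $u',v'$ each have a neighbor in $C_2$. Hence some edge $ab\in E(C)$ has $a\in\{u,v\}$ and $b\in\{u',v'\}$. Take $\{a,b\}$ together with a $C_1$-edge at $a$ and a $C_2$-edge at $b$; the two outer endpoints differ because they lie in different components. This gives the required pair and matching.

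This needs no chords, no further structure of $G$, and no appeal to the internals of the proof of Lemma~\ref{lem:shortcutpath}. In fact, the second application of that lemma is only needed when $C$ is a 5-cycle, $uv\in E(C)$, and $w_2$ is the vertex opposite that edge.
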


\subsection{4- or 5-Cycles Adjacent to Only One Component}
\label{sec:no6_onead}

We first consider the case where there are no 6-cycle components in $S$, and every 4- or 5-cycle component of $S$ is adjacent to only one component of $S$.

\begin{lemma}\label{lem:no6_onead}
    Let $S$ be a strongly canonical 2-edge-cover that contains 4- or 5-cycle components but no 6-cycle components.
    If every 4- and 5-cycle component of $S$ is adjacent to only one component of $S$, 
    then one can compute in polynomial time a strongly canonical 2-edge-cover $S'$ with fewer components than $S$ and with $\cost (S')\le \cost (S)$. 
\end{lemma}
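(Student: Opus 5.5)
Fix a 4- or 5-cycle component $C$ of $S$. By assumption, every edge of $G$ leaving $V(C)$ ends in one single other component $D$ of $S$. We will merge $C$ into $D$ using a shortcut pair, charging the extra edges to the credits of $C$. The credit available from $C$ is $\frac{5}{16}|E(C)|$, that is, $\frac{5}{4}$ for a 4-cycle and $\frac{25}{16}$ for a 5-cycle.

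\textbf{Constructing the shortcut pair.} Since $G$ is 2VC, some edge $wx$ has $w\in V(C)$ and $x\in V(D)$. By Lemma~\ref{lem:shortcutpath}, there is a shortcut pair $\{u,v\}$ with corresponding matching $\{ux_1,vx_2\}$, and both $x_1,x_2$ lie in $V(D)$. We then set
\[
S'=(S\setminus E(C))\cup P_{uv}\cup\{ux_1,vx_2\}.
\]
This removes $|E(C)|$ edges and adds $(|V(C)|-1)+2=|E(C)|+1$ edges, so $|S'|=|S|+1$. Every vertex of $C$ keeps degree at least two, since the interior vertices of $P_{uv}$ have degree $2$ and $u,v$ each gain a matching edge. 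The new component $D'$ consists of $D$ together with a new ``handle'' from $x_1$ through $V(C)$ to $x_2$.

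\textbf{Accounting.} We must show that $\cre(D')\le \cre(D)+\cre(C)-1$, i.e. that the credits drop by at least one. There are two cases.
\begin{enumerate}
\item If $D$ is large, the handle closes a cycle through $x_1$ and $x_2$. This cycle merges into one block every block of $D$ on the $x_1$--$x_2$ path of the block tree, and absorbs every bridge on that path. If $x_1$ and $x_2$ already lie in a common block $B$, the handle enlarges $B$ without changing the number of blocks or bridges. Hence $\cre(D')\le\cre(D)$, and the saving is $\cre(C)\ge\frac54\ge1$.
\item If $D$ is small, then $D$ is a 4- or 5-cycle, because there are no 6-cycles and no triangles. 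The merged component is 2VC. If it is large, its credit is $2$, while $\cre(C)+\cre(D)\ge \frac{5}{4}+\frac54=\frac52$, so the saving is only $\frac12$, which is insufficient. If it is small (at most $6$ edges), this is impossible, since $|E(D')|\ge 4+4+1$.
\end{enumerate}

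\textbf{Main obstacle: $D$ small.} Here I would exploit that $C$ and $D$ are adjacent only to each other. We choose the shortcut so that the merged graph on $V(C)\cup V(D)$ needs fewer edges: replace both cycles by a single Hamiltonian cycle of $G[V(C)\cup V(D)]$ when possible. This uses $|V(C)|+|V(D)|=|E(C)|+|E(D)|$ edges, so $|S'|=|S|$ and the credit changes from at least $\frac52$ to $2$.

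Otherwise, note that $V(C)\cup V(D)$ has between $8$ and $10$ vertices. Since $G$ has at least $20$ vertices and is 2VC, the set $V(C)\cup V(D)$ must have edges leaving it, toward some third component, so this situation cannot persist for everything. I would apply Lemma~\ref{lem:3match} to the partition $\{V(C)\cup V(D),\,V\setminus(V(C)\cup V(D))\}$ to obtain three matching edges. All of them leave from $D$, since $C$ is adjacent only to $D$. Then I would perform a two-step merge: $C$ into $D$ via a shortcut, and $D$ into a third component via a shortcut pair of $D$ (Lemma~\ref{lem:shortcutpath} for $D$), paying total extra edges $2$ against credits $\frac{5}{4}\cdot 2$ plus the block and bridge savings.

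Finally, the resulting components must remain strongly canonical. Leaf-blocks need at least five vertices, and the new block contains all of $V(C)$ together with its attachment path. Each operation reduces the number of components.

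The delicate part is exactly the $D$-small subcase with a deficit of $\frac12$ credit. My first attempt would be to prove that $G[V(C)\cup V(D)]$ contains a Hamiltonian cycle using the shortcut structure of both cycles: take a shortcut pair of $C$ whose matching ends $x_1,x_2$ are consecutive on $D$, and replace the edge $x_1x_2$ by the shortcut path. When $x_1,x_2$ are not consecutive, I would invoke the no-isolated-pair condition and the structured-graph conditions (no non-isolating 2-vertex-cuts) to find a better matching.
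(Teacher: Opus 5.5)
Your proposal has a genuine gap: the claim $\cre(D')\le\cre(D)$ in your case ``$D$ large'' is false. The cycle closed by the handle is a \emph{new} block carrying one credit. It pays for itself only if it absorbs an existing block of $D$ or at least four bridges. If the $x_1$--$x_2$ path in $D$ consists of $t$ bridges and no block edge, the credit of $D$ rises by $1-\frac t4>0$ for every $t\le 3$.

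For $t=3$ the credits of $C$ still suffice, but for smaller $t$ the bound fails outright:
\begin{itemize}
\item If $x_1x_2$ is a bridge, your $S'$ has $\cost(S')-\cost(S)=1+\frac34-\frac{5}{16}|E(C)|$. This is $+\frac12$ for a 4-cycle and $+\frac{3}{16}$ for a 5-cycle.
\item If the path is two bridges and $C$ is a 4-cycle, the increase is $+\frac14$.
\end{itemize}

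These bridge-only attachments are the real substance of the lemma, and the paper treats them separately:
\begin{itemize}
\item For a single bridge $xy$, it deletes $xy$ and routes through $P_{uv}$ instead. Then $|S'|=|S|$ and only $|E(C)|$ net new bridges are created.
\item Two bridges with a 5-cycle are fine, since the required saving is $\frac32<\frac{25}{16}$.
\item For a 4-cycle all of whose shortcut pairs land on two-bridge paths, it takes a 3-matching $u_1x_1,u_2x_2,u_3x_3$ from Lemma~\ref{lem:3match}. If the two-bridge paths for $\{u_1,u_2\}$ and $\{u_2,u_3\}$ share an edge, it performs an edge swap with no size increase. Otherwise it uses that $\{u_2,u_4\}$ is not an isolated pair, which is the cycle-restricted property. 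This yields either the chord $u_2u_4$ or a further neighbour $y\in V(D)$, and hence a shortcut pair whose attachment path contains a block edge or at least three bridges.
\end{itemize}
Your proposal never uses the no-isolated-pair condition. Without some such argument, the 4-cycle case cannot be closed with only $\frac54$ credits.

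Conversely, the case you single out as the main obstacle cannot occur. If $D$ is a 4- or 5-cycle, the hypothesis of the lemma applies to $D$ as well, so $D$ is adjacent only to $C$. Then no edge of $G$ leaves $V(C)\cup V(D)$, and since $G$ is connected, $|V|\le 10<20$, a contradiction. Your remark that the three matching edges from Lemma~\ref{lem:3match} would all have to leave from $D$ is already this contradiction. The Hamiltonian-cycle and two-step-merge ideas are therefore unnecessary; the effort should go into the bridge cases above.
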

\begin{proof}
    Fix a 4- or 5-cycle component $C$.
    Let $C'$ be the only component adjacent to $C$.
    If $C'$ were a 4- or 5-cycle, it would be adjacent only to $C$ by the assumption in this lemma.
    In that case, the size of $V = V(C) \cup V(C')$ would be at most $10$, which contradicts the assumption on the size of the input graph $G$.
    Therefore, $C'$ must be a large component.
    
    By Lemma~\ref{lem:shortcutpath} applied to $C$ and a vertex $x\in V(C')$ adjacent to $C$, there exists a shortcut pair $\{u,v\}\subseteq V(C)$ with a corresponding matching $\{ux,vy\}$ ($y\in V(C')$) and a shortcut path $P_{uv}$.
    Let $P$ be a shortest $x$-$y$ path in $C'$.
    
    \begin{enumerate}
        \item If $P$ includes an edge of some block $B$ or at least three bridges, then set $S':=(S\setminus E(C))\cup (P_{uv}\cup\{ux,vy\})$.
        We observe that $S'$ contains a new block $B'$ that includes both $P_{uv}$ and $P$, and $S'$ has one more edge than $S$.
        Moreover, $S'$ loses one credit assigned to block $B$, or at least $\frac{3}{4}$ credits corresponding to the three bridges.
        In addition, $S'$ loses the credits assigned to $C$, which is at least $4\cdot\frac{5}{16}$.
        Thus, we obtain 
        \begin{align*}
            \cost (S')&\le \cost (S)+|S'|-|S|+\cre (B')-\frac{3}{4}-\cre (C)\\
            &\le \cost (S)+1+1-\frac{3}{4}-4\cdot \frac{5}{16}\\
            &=\cost (S).
        \end{align*}
        \item If $P$ consists of exactly two bridges, and $C$ is a 5-cycle, then set $S':=(S\setminus E(C))\cup (P_{uv}\cup\{ux,vy\})$.
        Taking into account that $S'$ loses $\frac{1}{2}$ credits assigned to the two bridges and $\cre (C)=5\cdot\frac{5}{16}$ credits, we obtain 
        \begin{align*}
            \cost (S')&=\cost (S)+|S'|-|S|+\cre (B')-\frac{1}{2}-\cre (C)\\&
            =\cost (S)+1+1-\frac{1}{2}-5\cdot \frac{5}{16}\\&<\cost (S), 
        \end{align*}
        where $B'$ is the new block in $S'$ containing $P_{uv}$ and $P$.
        \item If $P$ consists of only one bridge $xy\in E(C')$, then set $S':=(S\setminus (E(C)\cup\{xy\}))\cup (P_{uv}\cup\{ux,vy\})$.
        Then, the sizes of $S$ and $S'$ are the same.
        Since $P_{uv}\cup\{ux,vy\}$ is the set of new bridges in $S'$, and the bridge $xy$ is removed from $S$, $S'$ contains $|P_{uv}\cup\{ux,vy\}|-1=|E(C)|$ more bridges than $S$.
        Thus, we obtain
        \begin{align*}
            \cost (S')&= \cost (S)+\frac{1}{4}|E(C)|-\cre (C)
            \\&=\cost (S)+\frac{1}{4}|E(C)|-\frac{5}{16}|E(C)|\\&<\cost (S).
        \end{align*}
        \item It remains to consider the case where $C$ is a 4-cycle, and there is no shortcut pair of $C$ satisfying the condition in the above cases. 
        That is, for every shortcut pair $\{u,v\}$ of $C$ with a corresponding matching $\{ux, vy\}$, where $\{x,y\} \subseteq V(C')$, every shortest $x$–$y$ path in $C'$ consists of exactly two bridges of $C'$.
        Note that in this case, the $x$–$y$ path in $C'$ is unique.
        By Lemma~\ref{lem:3match} applied to $\{V(C),V\setminus V(C)\}$, there exists a matching $M$ of size $3$ between $V(C)$ and $V(C')$.
        Without loss of generality, let $C=(u_1,u_2,u_3,u_4)$ and $M=\{u_1x_1,u_2x_2,u_3x_3\}$, where $\{x_1,x_2,x_3\}\subseteq V(C')$.
        Since $\{u_1,u_2\}$ and $\{u_2,u_3\}$ are shortcut pairs with corresponding matchings $\{u_1x_1,u_2x_2\}$ and $\{u_2x_2,u_3x_3\}$, respectively, both $x_1$-$x_2$ path $P$ and $x_2$-$x_3$ path $P'$ in $C'$ consist of exactly two bridges of $C'$.
        
        \begin{enumerate}
            \item Suppose that $P$ and $P'$ share an edge.
            Since $x_1\ne x_3$, there exists a vertex $y\in V(C')$ such that $P=\{x_2y,yx_1\}$ and $P'=\{x_2y, yx_3\}$; see Figure~\ref{fig:4cycle_ndis}.
            \begin{figure}
                \centering
                \includegraphics[width=110mm]{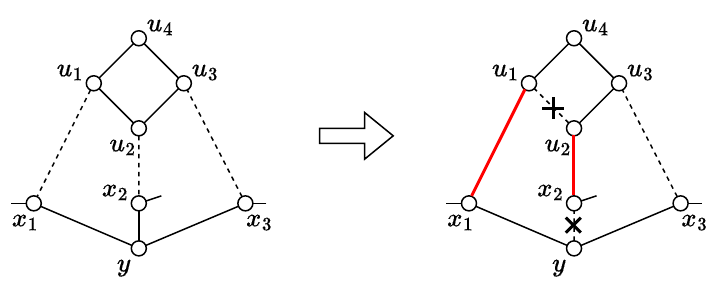}
                \caption{The case when $x_1$-$x_2$ path $P$ and $x_2$-$x_3$ path $P'$ share the edge $x_2y$.}
                \label{fig:4cycle_ndis}
            \end{figure}
            Set $S':=(S\setminus\{u_1u_2,x_2y\})\cup\{u_1x_1,u_2x_2\}$.
            Then, the sizes of $S$ and $S'$ are the same.
            Since $x_1 u_1, u_1u_4, u_4 u_3, u_3 u_2$, and $u_2x_2$ are new bridges in $S'$, and the bridge $x_2y$ is removed from $S$, $S'$ contains exactly four more bridges than $S$.
            Thus, we obtain
            \begin{align*}
                \cost (S')&\le \cost (S)+4\cdot\frac{1}{4}-\cre (C)
                \\&=\cost (S)+1-4\cdot \frac{5}{16}\\&<\cost (S).
            \end{align*}
            \item Suppose that $P$ and $P'$ are edge-disjoint (Figure~\ref{fig:4cycle_disjoint}).
            \begin{figure}
                \centering
                \includegraphics[width=110mm]{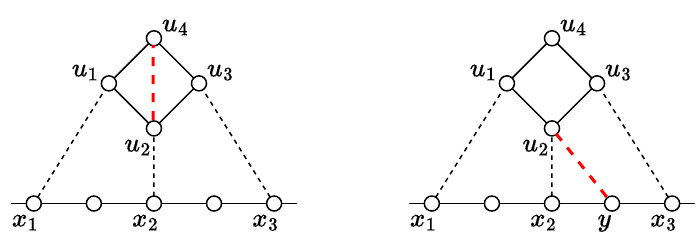}
                \caption{The case when $x_1$-$x_2$ path $P$ and $x_2$-$x_3$ path $P'$ are edge-disjoint.}
                \label{fig:4cycle_disjoint}
            \end{figure}
            Since $\{u_2,u_4\}$ is not an isolated pair, $u_2$ or $u_4$ is adjacent to a vertex other than $u_1,u_3$ and $x_2$.
            Thus, $u_2$ and $u_4$ are adjacent in $G$, or $u_2$ or $u_4$ is adjacent to a vertex $y\in V(C')\setminus \{x_2\}$.
            \begin{itemize}
                \item If $u_2u_4\in E$ (Figure~\ref{fig:4cycle_disjoint} left), then 
                $\{u_1,u_3\}$ is a shortcut pair with the corresponding matching $\{u_1x_1,u_3x_3\}$ and the shortcut path $\{u_1u_2,u_2u_4,u_4u_3\}$.
                Since the $x_1$-$x_3$ path in $C'$ consists of four bridges, $\{u_1,u_3\}$ satisfies the condition of the first case in the proof of this lemma.
                \item Suppose that $u_2$ is adjacent to a vertex $y\in V(C')\setminus \{x_2\}$ (the same applies to $u_4$); see Figure~\ref{fig:4cycle_disjoint} right.
                Since $P\cup P'$ is the only $x_1$-$x_3$ path in $C'$, $y$ is at distance at least $3$ from $x_1$ or $x_3$.
                Thus, either $\{u_1,u_2\}$ with the corresponding matching $\{u_1x_1,u_2y\}$ or $\{u_2,u_3\}$ with the corresponding matching $\{u_2y,u_3x_3\}$ forms a shortcut pair satisfying the condition of the first case in the proof of this lemma.
            \end{itemize}
        \end{enumerate}
    \end{enumerate}
    
    Note that after the above operations, $S'$ preserves the properties of a strongly canonical 2-edge-cover.
    Therefore, we can construct a strongly canonical 2-edge-cover $S'$ with strictly fewer components than $S$ and with $\cost (S')\le \cost (S)$.
\end{proof}

\subsection{4- or 5-Cycles Adjacent to Two Components}
\label{sec:no6_twoad}

We next consider the case where $S$ contains no 6-cycle components but contains a 4- or 5-cycle component that is adjacent to at least two components.

\begin{lemma}\label{lem:no6_twoad}
    Let $S$ be a strongly canonical 2-edge-cover without 6-cycle components.
    If there exists a 4- or 5-cycle component that is adjacent to at least two components of $S$, then one can compute in polynomial time a strongly canonical 2-edge-cover $S'$ with fewer components than $S$ and with $\cost (S')\le \cost (S)$. 
\end{lemma}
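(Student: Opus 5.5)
We will invoke Corollary~\ref{lem:distinct_component} to merge the 4- or 5-cycle $C$ with two distinct other components. Since $S$ is strongly canonical, hence canonical, and $C$ is adjacent to at least two other components, the corollary gives in polynomial time a shortcut pair $\{u,v\}$ of $C$, with shortcut path $P_{uv}$ and corresponding matching $\{ux_1,vx_2\}$, where $x_1\in V(C_1)$, $x_2\in V(C_2)$, and $C,C_1,C_2$ are pairwise distinct components. Let $P_{uv}$ be a Hamiltonian path of $G[V(C)]$ with $|E(C)|-1$ edges. The basic move is
\[
S':=(S\setminus E(C))\cup P_{uv}\cup\{ux_1,vx_2\}.
\]
This adds exactly one edge and merges $C,C_1,C_2$ into one component. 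The edges $P_{uv}\cup\{ux_1,vx_2\}$, numbering $|E(C)|+1$, become bridges of the merged component. We then compare credits. We lose $\cre(C)=\frac{5}{16}|E(C)|$, plus the component credits of $C_1$ and $C_2$: $1$ each if large, $\frac{5}{16}|E(C_i)|\ge \frac{5}{4}$ if small. The merged component is large and receives one credit. Bridge credits rise by $\frac14(|E(C)|+1)$. Block credits change as follows: a large $C_i$ keeps its blocks, and a small $C_i$, being a cycle, becomes one block in the merged component and gains one credit.

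Next comes the accounting, split by whether $C_1,C_2$ are large or small. If both are large, the change in cost is
\[
1+\tfrac14(|E(C)|+1)+1-1-1-\tfrac{5}{16}|E(C)| \;=\; \tfrac14-\tfrac1{16}|E(C)|\le 0
\]
for $|E(C)|\ge4$. Here the terms are the added edge, the new bridges, the new component credit, and the loss of the two old component credits. If $C_i$ is small, it contributes $\frac{5}{16}|E(C_i)|\ge\frac54$ removed and $1$ new block credit, a net gain of at least $\frac14$ per small neighbour. So these cases are only easier, and the bound holds in all subcases. Because this lemma assumes no 6-cycle components, small neighbours are 4- or 5-cycles.

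Finally, we verify that $S'$ is strongly canonical and cycle-restricted.
\begin{itemize}
\item Strong canonicity: all small components of $S'$ are old small cycles. The new component is complex, and its leaf-blocks are old leaf-blocks of $C_1$ or $C_2$, or whole former cycle components $C_i$ that now hang as blocks.
\item Leaf-block size: such a block has only four vertices if $C_i$ was a 4-cycle, which violates condition (2).
\item Cycle-restricted property: no new triangle or 4-cycle component arises. For a 6-cycle with an isolated triple, the cut condition must be checked only where edges of $C$ were removed.
\end{itemize}

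The main obstacle is the leaf-block violation when a neighbour $C_i$ is a 4-cycle. I would first try to avoid it using the slack of at least $\frac14$ gained from each small neighbour. Either choose the matching so that it attaches to a large component, or fix the configuration with the operation in the proof of Lemma~\ref{lem:strongly-canonical}, item 2. That operation exchanges an edge of the 4-block for an edge leaving it, guaranteed by the structured graph, and does not increase size or cost: comp and bridges only decrease, and the slack covers any bridge credit.

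If both neighbours are 4-cycles, the merged graph has exactly $1+(|E(C)|+1)-2\cdot\tfrac54-\tfrac{5}{16}|E(C)|+\dots$ to spare. Here I would instead use Lemma~\ref{lem:shortcutpath} to also shortcut the neighbouring 4-cycle, so that it becomes a path of bridges rather than a small leaf-block. A careful case check will be needed to confirm that the cost still does not increase.
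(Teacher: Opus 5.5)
\textbf{There is a genuine gap: the accounting for small neighbours has the wrong sign, and this is where the real difficulty of the lemma lies.}

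If $C_i$ is large, the merge removes its component credit, a saving of $1$. If $C_i$ is a $4$- or $5$-cycle, you remove $\frac{5}{16}|E(C_i)|$ but pay $1$ for the new block, a saving of only $\frac14$ or $\frac{9}{16}$. You measured this saving against $0$, but your base computation already used a saving of $1$ per neighbour. So a small neighbour is \emph{worse} than a large one by $\frac34$ or $\frac{7}{16}$, not easier.

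Concretely, take $C$ and $C_1$ to be $4$-cycles and $C_2$ large. Your basic move changes the cost by $1+\frac54+1-\frac54-(\frac54-1)-1=+\frac34$. So the inequality holds only when both neighbours are large. The ``slack'' you plan to spend on the leaf-block repair does not exist. The exchange from item~2 of Lemma~\ref{lem:strongly-canonical} is only size-neutral and gives no guarantee of recovering $\frac34$. Attaching only to large components is not always possible either.

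\textbf{What is missing is the paper's second shortcut, together with the path-extension argument that makes it available.}

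When an end component $C_L$ is a $4$- or $5$-cycle, the paper also opens $C_L$ along a shortcut pair $\{u_L,v_L\}$ with matching $\{u_Lu_1,v_Lw_L\}$. Crucially, $w_L$ lies back inside the merged structure: on the merging path ($u_3,\dots,u_k$) or in $C_R$. This closes a cycle through at least the bridges $u_Lu_1,u_1u_2,u_2u_3$, recovering at least $\frac34$, which is exactly what is needed. When both ends are small, shortcutting both destroys either a block and five bridges or six bridges, recovering at least $\frac32$. The case $w_L=u_2$ is handled separately, by forming a single cycle from $C_L$ and the small component containing $u_1u_2$.

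Lemma~\ref{lem:shortcutpath} alone does not give such a $w_L$. The second matching edge may go to a fresh component, in which case the shortcut merges yet another component instead of closing a cycle. The paper handles this with Claim~\ref{clm:merging-path}: while P3 or P4 fails, the path is extended through further small components. This is credit-safe because each absorbed small cycle pays $\frac{5}{16}$ per edge while its edges become bridges at $\frac14$ each, so $\frac14|P|-\frac{5}{16}(|P|-1)\le 0$ for $|P|\ge 5$.

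Your last paragraph gestures at shortcutting a neighbouring $4$-cycle, but only in the both-$4$-cycle case. It omits the return condition on $w_L$ and the iterative extension, and it leaves the decisive cost check undone. Note that the second shortcut also dissolves the $4$-vertex leaf-block, so the strong-canonicity problem you identified disappears as a by-product rather than needing a separate repair.
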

\begin{proof}
We follow the method of Bosch-Calvo et al.~\cite[Lemma 3.2]{BGJ} to build $S'$, but the absence of triangles makes the treatment somewhat simpler and improves the cost analysis.
 
First, we construct 
a path $P=\{u_Lu_1, u_1u_2, \dots, u_{k-1}u_k, u_ku_R\}$ with $|P|\ge5$ satisfying the following properties:

\begin{enumerate}[label=P\arabic*.]
    \item The set $\{u_1, u_2, \dots, u_k\}$ of internal vertices of $P$ is the union of the vertex sets of some small components $C_1,C_2,\dots,C_q$ of $S$. The vertices $u_L$ and $u_R$ belong to distinct components $C_L$ and $C_R$ of $S$, respectively.
    \item The edges $u_1u_2$ and $u_{k-1}u_k$ belong to $E(C_i)$ and $E(C_j)$, respectively, for some components $C_i$ and $C_j$ in $S$ (not necessarily distinct).
    \item If $C_L$ is a 4- or 5-cycle, then there is a shortcut pair $\{u_L,v_L\}$ of $C_L$ such that its corresponding matching is $\{u_Lu_1, v_Lw_L\}$ with $w_L\in \{u_2,\dots, u_k\}\cup V(C_R)$.
    \item If $C_R$ is a 4- or 5-cycle, then there is a shortcut pair $\{u_R,v_R\}$ of $C_R$ such that its corresponding matching is $\{u_Ru_k, v_Rw_R\}$ with $w_R\in \{u_1,\dots, u_{k-1}\}\cup V(C_L)$.
\end{enumerate}

To construct such a path, we prove the following claim, which states that the path $P$ can be extended while maintaining properties P1 and P2, provided that it does not satisfy P3 or P4. 

\begin{claim}\label{clm:merging-path}
    Given a path $P=\{u_Lu_1, u_1u_2, \dots, u_{k-1}u_k, u_ku_R\}$ satisfying P1 and P2, if $P$ does not satisfy P3 or P4, then one can construct, in polynomial time, a path $P'$ that either satisfies P1 and P2 with $|P'|>|P|$, or satisfies P1--P4 with $|P'|=|P|$.
\end{claim}
\begin{proof}
    By symmetry, we may assume that $P$ does not satisfy P3.
    Then, $C_L$ is a 4- or 5-cycle, and Lemma~\ref{lem:shortcutpath} shows that there exists a shortcut pair $\{u,v\}$ of $C_L$ with a shortcut path $P_{uv}$ and a corresponding matching $\{uu_1, vw_L\}$. 
    Since P3 is violated, we obtain $w_L\notin\{u_2,\dots, u_k\}\cup V(C_R)$ or $u\ne u_L$.
    If $w_L\notin\{u_2,\dots, u_k\}\cup V(C_R)$, then $P':=(P\setminus\{u_Lu_1\})\cup P_{uv}\cup\{uu_1, vw_L\}$ satisfies P1 and P2, and has $|P'|>|P|$.
    If $w_L\in\{u_2,\dots, u_k\}\cup V(C_R)$ and $u\ne u_L$, then $P':=(P\setminus\{u_Lu_1\})\cup \{uu_1\}$ satisfies P1--P3, and has $|P'|=|P|$.
    In the latter case, if $P'$ does not satisfy P4, then apply the same operation as for P3 to $P'$. Then, one can construct a path $P''$ that either satisfies P1 and P2 with $|P''|>|P|$, or satisfies P1--P4 with $|P''|=|P|$.
\end{proof}

Let $C$ be a 4- or 5-cycle component that is adjacent to at least two other components of $S$.
By using Corollary~\ref{lem:distinct_component}, we can find a shortcut pair $\{u,v\}$ of $C$ with a shortcut path $P_{uv}$ and a corresponding matching $\{uu_L,vu_R\}$, where $u_L\in V(C_L)$, $u_R\in V(C_R)$, and $C_L$ and $C_R$ are distinct components of $S$.
Then, $P:=P_{uv}\cup\{uu_L,vu_R\}$ satisfies P1 and P2, and has $|P|\ge 5$, where we note that P2 follows from the definition of a shortcut path. 
By repeatedly applying Claim~\ref{clm:merging-path} to replace $P$ with a new path $P'$ whenever it does not satisfy P3 or P4, 
we can find a path $P$ with $|P| \ge 5$ satisfying P1--P4 in polynomial time.

By using the obtained path $P$, we set $S':=(S\setminus\bigcup_{i=1}^{q}E(C_i))\cup P$, which is a cycle-restricted 2-edge-cover.
Note that $S'$ is strongly canonical unless $C_L$ or $C_R$ is a 4-cycle.
Replacing $E(C_1)\cup\dots\cup E(C_q)$ with $P$ to construct $S'$ allows us to merge $C_L$ and $C_R$ while increasing the number of edges by only one.
Let $C'$ be a new large component spanning vertices of $C_L,C_R$, and $P$, which is assigned one credit.
Instead of gaining $\cre (C')=1$, $S'$ loses the credits assigned to $C_L$ and $C_R$.
We now define $\Delta \cre (C_L)$ and $\Delta \cre (C_R)$ as changes in the contributions of $C_L$ and $C_R$, respectively, to the total credit.
If $C_L$ is large, then $C_L$ loses one credit that had been assigned to it as a large component, and hence $\Delta \cre (C_L)=-1$.
Note that the credits assigned for blocks in $C_L$ remain unchanged.
If $C_L$ is small, then $C_L$ loses credits that had been assigned to it as a small component, but $C_L$ is newly assigned one credit as a block.
Hence, $\Delta \cre (C_L)=-\frac{5}{16}|E(C_L)|+1$.
The definition of $\Delta \cre (C_R)$ is analogous.
Credits of $\frac{5}{16}\sum_{i=1}^{q} |E(C_i)| = \frac{5}{16}(|P|-1)$ are lost for $C_1,\dots, C_q$, and each edge in $P$ is assigned $\frac{1}{4}$ credits as a new bridge.
Hence, the change in the contribution of this part to the total credit is $\frac{1}{4}|P|-\frac{5}{16}(|P|-1)=-\frac{1}{16}|P|+\frac{5}{16}\le 0$, where we use $|P|\ge 5$.
Taking the sum of these changes, $\cost (S')$ can be evaluated as
\begin{align}
\begin{aligned}\label{eq:cost}
    \cost (S')&\le \cost (S)+|S'|-|S|+\cre (C') +\Delta \cre (C_L)+\Delta \cre (C_R)\\&=\cost (S)+2+\Delta \cre (C_L)+\Delta \cre (C_R).
\end{aligned}
\end{align}
Thus, if $C_L$ and $C_R$ are large, it holds that
\begin{align*}
    \cost (S')&\le\cost (S)+2+\Delta \cre (C_L)+\Delta \cre (C_R)\\
    &=\cost (S)+2-1-1\\
    &=\cost (S).
\end{align*}
Suppose that $C_L$ or $C_R$ (w.l.o.g.~$C_L$) is small.
Let $\{u_L,v_L\}$ be the shortcut pair of $C_L$ given by P3, with the shortcut path $P_{u_Lv_L}$ and the corresponding matching $\{u_Lu_1,v_Lw_L\}$.
If $w_L=u_2$ (Figure~\ref{fig:wl=u2}), then since $u_1u_2$ belongs to some small component $C_i$ in $S$ by P2, $(E(C_i)\setminus\{u_1u_2\})\cup P_{u_Lv_L}$ forms a single cycle $C'$, whose contribution to the total credit is $2$.
\begin{figure}
    \centering
    \includegraphics[width=130mm]{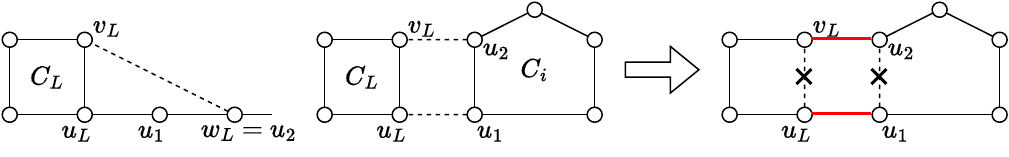}
    \caption{The case of $w_L=u_2$.}
    \label{fig:wl=u2}
\end{figure}
In this case, we set $S'':=(S\setminus (E(C_i)\cup E(C_L)))\cup E(C')$, which is a strongly-canonical 2-edge-cover with
\begin{align*}
    \cost (S'')&=\cost (S)-\cre (C_i)-\cre (C_L)+2\\&\le \cost (S)-4\cdot\frac{5}{16}-4\cdot\frac{5}{16}+2\\&<\cost (S).
\end{align*}
Therefore, assume $w_L\in\{u_3,\dots,u_k\}\cup V(C_R)$ in the following.
After replacing $E(C_1)\cup\dots\cup E(C_q)$ with $P$ to construct $S'$, we define $S''$ as $S'':=(S'\setminus E(C_L))\cup P_{uv}\cup\{v_Lw_L\}$.
Then, $S''$ loses at least three bridges $u_Lu_1$, $u_1u_2$, and $u_2u_3$ while $|S''|=|S'|$. 
We perform a case analysis based on the type of $C_R$.

\begin{enumerate}
    \item Assume that $C_R$ is large.
    By Inequality~\eqref{eq:cost}, $\cost(S'')$ can be evaluated as
    \begin{align*}
        \cost (S'')&\le\cost(S')-\frac{3}{4}
        \\&\le \cost (S)+2+\Delta \cre (C_L)+\Delta \cre (C_R)-\frac{3}{4}
        \\&\le \cost (S)+2-4\cdot\frac{5}{16}+1-1-\frac{3}{4}
        \\&= \cost (S).
    \end{align*}
    Note that even if $C_L$ is a 4-cycle, the resulting $S''$ is strongly-canonical after the operation.
    \item If $C_R$ is small, symmetrically, let $\{u_R,v_R\}$ be the shortcut pair of $C_R$ given by P4, with the shortcut path $P_{u_Rv_R}$ and the corresponding matching $\{u_Ru_k,v_Rw_R\}$. 
    By the same argument as in the case of $w_L = u_2$, we can assume $w_R \in \{u_1,\dots,u_{k-2}\}\cup V(C_L)$.
    Then, we define $S'''$ as $S''':=(S''\setminus E(C_R))\cup P_{u_Rv_R}\cup\{v_Rw_R\}$.
    In this case, compared to $S'$, $S'''$ loses either one block and at least five bridges, or at least six bridges (see Figure~\ref{fig:merging}).
    This shows that $S''$ loses at least $1 + 5 \cdot \frac{1}{4}=\frac{9}{4}$ credits in the former case, and it loses at least $6 \cdot \frac{1}{4}=\frac{3}{2}$ credits in the latter case.
    \begin{figure}
        \centering
        \includegraphics[width=130mm]{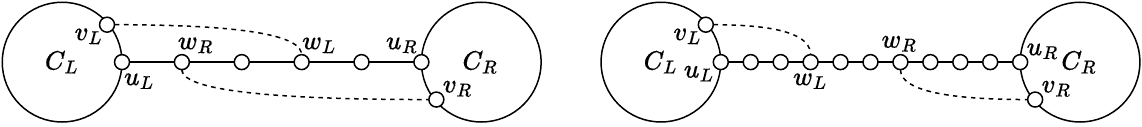}
        \caption{Cases where $S''$ loses one block and five bridges (left) or seven bridges (right).}
        \label{fig:merging}
    \end{figure}
    Thus, in either case we obtain
    \begin{align*}
        \cost (S''')&\le\cost(S')-\frac{3}{2}\\
        &\le \cost (S)+2+\Delta \cre (C_L)+\Delta \cre (C_R)-\frac{3}{2}\\
        &\le \cost (S)+2-4\cdot\frac{5}{16}+1-4\cdot\frac{5}{16}+1-\frac{3}{2}\\
        &=\cost (S).
    \end{align*}
    Note that even if $C_L$ or $C_R$ is a 4-cycle, the resulting $S'''$ is strongly canonical after the operation.
\end{enumerate}

    From the above case analysis, we can construct a strongly canonical 2-edge-cover with fewer components than $S$, without increasing the cost.
\end{proof}

\subsection{6-Cycles}
\label{sec:exist6}

Finally, we consider the removal of 6-cycle components.
While Lemma~\ref{lem:shortcutpath} plays a key role in removing 4- or 5-cycle components in the previous subsections, this lemma cannot be directly extended to 6-cycle components, which constitutes the main difficulty in removing them.
Indeed, in~\cite{BGJ}, two edges are added to merge a 6-cycle component with the rest of the graph, which leads to a tight bound for the $\frac{4}{3}$-approximation.
In contrast, in a strongly canonical 2-edge-cover, 6-cycle components have a property similar to that in Lemma~\ref{lem:shortcutpath}, because there are no 6-cycle components with an isolated triple.
\begin{lemma}\label{lem:6-cycle}
    Let $S$ be a strongly canonical 2-edge-cover of a structured graph $G$, $C$ be a 6-cycle component of $S$.
    Then, there is a shortcut pair of $C$, which can be found in polynomial time.
\end{lemma}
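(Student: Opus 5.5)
The plan is to look for a shortcut pair of consecutive vertices of $C$ first, and fall back on a chord otherwise. Write $C=(u_1,\dots,u_6)$. For two consecutive vertices $u_i,u_{i+1}$, the path $E(C)\setminus\{u_iu_{i+1}\}$ is a $u_i$-$u_{i+1}$ Hamiltonian path in $G[V(C)]$. Its first and last edges lie in $E(C)$. So $\{u_i,u_{i+1}\}$ is a shortcut pair as soon as $u_i$ and $u_{i+1}$ have distinct neighbours outside $V(C)$, i.e.\ as soon as a matching of size $2$ exists between $\{u_i,u_{i+1}\}$ and $V\setminus V(C)$. The proof will show that such a pair exists, or else that a chord of $C$ gives another shortcut pair, or else that $C$ is a 6-cycle with an isolated triple. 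The last outcome contradicts the cycle-restricted property.

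\textbf{Step 1: a matching of size 3.} Apply Lemma~\ref{lem:3match} to the partition $\{V\setminus V(C),V(C)\}$. This is allowed since $|V(C)|=6\ge 4$ and $|V\setminus V(C)|\ge 14\ge 6$. It gives a matching $\{a x_a, b x_b, c x_c\}$ with $a,b,c\in V(C)$ and distinct $x_a,x_b,x_c\notin V(C)$. If two of $a,b,c$ are consecutive on $C$, we are done. Otherwise, after relabelling, $\{a,b,c\}=\{u_1,u_3,u_5\}$, with matching edges $u_1x_1,u_3x_3,u_5x_5$.

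\textbf{Step 2: outside neighbours of $u_2,u_4,u_6$.} Suppose some $u_{2j}$ has a neighbour $y\notin V(C)$. Its two cycle-neighbours $u_{2j-1},u_{2j+1}$ are matched to distinct vertices $x_{2j-1}\ne x_{2j+1}$. So $y$ differs from at least one of them, say $x_{2j-1}$. Then $\{u_{2j-1}x_{2j-1},u_{2j}y\}$ is a matching, and the consecutive pair $\{u_{2j-1},u_{2j}\}$ is a shortcut pair. Hence we may assume $N(u_2),N(u_4),N(u_6)\subseteq V(C)$.

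\textbf{Step 3: chords, and the main obstacle.} The remaining difficulty is the configuration where the even vertices have no outside neighbours. Here we must use that $S$ is cycle-restricted.
\begin{itemize}
\item If no edge of $G$ joins two of $u_2,u_4,u_6$, then each even vertex has all its neighbours in $\{u_1,u_3,u_5\}$. So $\{u_2,u_4,u_6\}$ is a triple isolated by $\{u_1,u_3,u_5\}$, and $C$ is a 6-cycle with an isolated triple. Since $C$ is a component of $S$, we have $|\delta(V(C))\cap S|=0<2$, contradicting the cycle-restricted property.
\item Otherwise there is a chord between two even vertices. By symmetry we may take it to be $u_2u_4$, since all three even pairs $u_2u_4,u_4u_6,u_6u_2$ are equivalent under the rotations of $C$ by two positions. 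Then $\{u_1,u_3\}$ is a shortcut pair. The path $u_1u_6,u_6u_5,u_5u_4,u_4u_2,u_2u_3$ is a $u_1$-$u_3$ Hamiltonian path in $G[V(C)]$ whose first and last edges $u_1u_6,u_2u_3$ are in $E(C)$, and the corresponding matching is $\{u_1x_1,u_3x_3\}$.
\end{itemize}

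\textbf{Polynomial time.} A shortcut pair can be found by brute force. There are $15$ vertex pairs in $V(C)$ and boundedly many Hamiltonian paths in $G[V(C)]$. For each pair, the existence of a matching of size $2$ to $V\setminus V(C)$ can be checked directly.
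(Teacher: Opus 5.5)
Your proof is correct and follows essentially the same route as the paper's. Both arguments obtain a size-3 matching from the 3-matching lemma and finish if two matched vertices are consecutive. Otherwise they use an outside neighbour of an unmatched vertex to form a consecutive shortcut pair, and if there is none, they use a chord between two unmatched vertices, which yields the same Hamiltonian path $u_1u_6u_5u_4u_2u_3$. The cycle-restricted property (no 6-cycle with an isolated triple as a component) rules out the remaining configuration in both.
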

\begin{proof}
    By Lemma~\ref{lem:3match} applied to $\{V(C),V\setminus V(C)\}$, there is a matching $M=\{u_1x_1, u_2x_2,u_3x_3\}$ with $\{u_1,u_2,u_3\}\subseteq V(C)$ and $\{x_1,x_2,x_3\}\subseteq V\setminus V(C)$.
    If $u_i$ and $u_j$ are adjacent in $C$ for some $i\ne j$, then $\{u_i,u_j\}$ is a shortcut pair with the corresponding matching $\{u_ix_i, u_jx_j\}$ and the shortcut path $P=E(C)\setminus\{u_iu_j\}$.
    Assume that no pair among the vertices $u_1$, $u_2$, and $u_3$ are adjacent in $C$.
    Then, $C$ can be expressed as $C=(u_1,a,u_2,b,u_3,c)$.
    Since $C$ is not a 6-cycle with an isolated triple, as $S$ is cycle-restricted, we may assume without loss of generality that $a$ is adjacent to a vertex $y$ other than $u_1$, $u_2$, and $u_3$.
    If $y\in V\setminus V(C)$, then we choose $i\in \{1,2\}$ so that $x_i\ne y$.
    Then, $\{u_i,a\}$ is a shortcut pair with the corresponding matching $\{u_ix_i, ay\}$ and the shortcut path $P=E(C)\setminus\{u_ia\}$.
    Otherwise, we can assume without loss of generality that $y=b$.
    In this case, $\{u_1,u_2\}$ is a shortcut pair with the corresponding matching $\{u_1x_1, u_2x_2\}$ and the shortcut path $P=\{u_1c,cu_3,u_3b,ba,au_2\}$.
\end{proof}

By using this property, we can handle a 6-cycle in a similar way to Lemmas~\ref{lem:no6_onead} and~\ref{lem:no6_twoad}.
We first consider the case where every $6$-cycle component has a shortcut pair with a certain condition, which is stated as follows.

\begin{lemma}\label{lem::annoying6cycles}
    Let $S$ be a strongly canonical 2-edge-cover of a structured graph $G$ with a 6-cycle component. 
    Assume that, for every 6-cycle component, there exists a shortcut pair $\{u,v\}$ with the corresponding matching $\{ux,vy\}$ such that $x$ and $y$ belong to distinct $6$-cycle components. Then we can find in polynomial time a canonical $2$-edge-cover $S'$ with fewer components than $S$ such that $\cost(S')\le \cost(S)$. 
\end{lemma}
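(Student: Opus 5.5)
Under the hypothesis, I will build an auxiliary directed graph $H$ on the set of 6-cycle components of $S$. For each 6-cycle component $C$, fix one shortcut pair $\{u_C,v_C\}$ with shortcut path $P_C$ and corresponding matching $\{u_Cx_C, v_Cy_C\}$, where $x_C\in V(D)$ and $y_C\in V(D')$ for distinct 6-cycle components $D\neq D'$. Note that $C$ itself may be one of $D,D'$ only if the matching edge leaves $V(C)$, which it does, so $D,D'\neq C$.

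The natural structure is then a chain of 6-cycles, each joined to two others by a path through its shortcut path. I would find a cycle $C_1,C_2,\dots,C_q$ ($q\ge 2$) of 6-cycle components in which each $C_i$ is linked to $C_{i-1}$ and $C_{i+1}$ by matching edges. To do this, view each $C$ as an "edge" joining $D$ and $D'$ in an auxiliary graph whose vertices are the 6-cycle components. Since every vertex of this graph emits such an edge, the number of edges equals the number of vertices. Hence some component of the auxiliary graph contains a cycle (possibly of length 2 via two different $C$'s, or the degenerate case where a structure closes on itself).

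More simply, I plan to do the following, in the spirit of the path construction of Lemma~\ref{lem:no6_twoad}. Start at a 6-cycle $C$ and follow the $y$-side repeatedly: $C\to D'\to\cdots$. Since the components are finite, this walk eventually revisits a component. The portion between the revisits gives a closed sequence $C_1,\dots,C_q$ in which each $C_i$ uses its shortcut path plus one matching edge going into $C_{i+1}$ (indices mod $q$). I will enter $C_{i+1}$ at an arbitrary vertex, since each $C_{i+1}$ will still contain its own Hamiltonian shortcut path covering all of its vertices.

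Define $S'$ by replacing $\bigcup_i E(C_i)$ with $\bigcup_i (P_{C_i}\cup\{v_{C_i}y_{C_i}\})$. The result is connected on $\bigcup V(C_i)$ and has exactly the same number of edges, $6q$. Every vertex keeps degree $\ge 2$: the endpoints $u_{C_i}$ lose one cycle edge, so they need an incoming edge. If the incoming edge lands elsewhere, I add the $u$-side matching edge instead, at a cost of one extra edge.

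To keep things clean, I will use both matching edges for each $C_i$: $S'$ adds $P_{C_i}\cup\{u_{C_i}x_{C_i},v_{C_i}y_{C_i}\}$, i.e.\ 7 edges per cycle, minus duplicates. For the cost comparison, the old credit is $q\cdot 6\cdot\frac{5}{16}=\frac{15q}{8}$. The new structure is one large component with $+1$ credit, and since the union contains a cycle through all the $C_i$, it contains at least one block, giving $+1$ more. Using just one edge per cycle, the edge change is $0$, and the new cost is $\le 2+(\text{remaining bridges})\cdot\frac14$. A configuration in which all merged edges lie on a single big cycle is 2VC and gives $\Delta\le 2-\frac{15q}{8}<0$ for $q\ge2$. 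Where the closing edge lands at a vertex other than $u_{C_{i+1}}$, I add that $u$ edge, costing $+1$ per such cycle, but each such addition still lies in one 2VC block. That gives a total of $\le q+2-\frac{15}{8}q$, which is $\le 0$ for $q\ge 3$. For $q=2$, I will check directly that $2+2-\frac{15}{4}>0$ fails only marginally and must be handled by choosing the two matching edges between the pair so that only one extra edge is needed.

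The main obstacle is making this edge accounting tight: ensuring the added matching edges per 6-cycle average fewer than $\frac{7}{8}$ each, and handling short cycles ($q=2$) and the possibility that the two chosen shortcut matchings hit the same vertex. The canonical (rather than strongly canonical) output, as the statement promises, tolerates leftover 4-cycle-like leaf blocks.
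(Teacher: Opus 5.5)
Your proposal has a genuine gap. Your closed sequence comes from following the fixed $y$-edges, so it gives you no control over where each $C_i$ is entered relative to where it is left. As a result, the bound you need (at most one new edge per 6-cycle and credit $2$ for the merged component) does not follow.

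\textbf{Entry can equal exit.} Nothing prevents $y_{C_{i-1}}=v_{C_i}$. For instance, for $q\ge 3$ the vertices $v_{C_i}$ may themselves form a cycle of $G$ with $y_{C_i}=v_{C_{i+1}}$ for all $i$. If you keep the 6-cycles and add the $q$ ring edges, you get $q$ pendant 6-cycle blocks plus a ring block. The cost then changes by $q+(q+2)-\frac{15}{8}q=2+\frac{q}{8}>0$. Even a single such coincidence with $q=3$ gives $3+3-\frac{45}{8}=\frac38>0$.

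\textbf{The shortcut-path variant needs an uncontrolled patch.} If you replace $C_i$ by $P_{C_i}$, then $u_{C_i}$ is left with degree one whenever the entry vertex is not $u_{C_i}$. Your patch $u_{C_i}x_{C_i}$ goes into a 6-cycle that your walk never controlled. If that cycle is not among $C_1,\dots,C_q$, it hangs off the ring through a pendant path of bridges and forms its own block. So the claim that ``each such addition still lies in one 2VC block'' is false, and the credit bound of $2$ fails.

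\textbf{The case $q=2$ is left open.} Here the naive bound is $+\frac14$, and the two $y$-edges may even be the same edge. Your remedy of choosing ``the two matching edges between the pair'' is not available from your data: each shortcut matching you fixed sends its two edges into \emph{distinct} 6-cycles.

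What is missing is the paper's adaptive construction. It grows a \emph{cycle-edge chain} $\langle C_1,e_1,\dots,e_{k-1},C_k\rangle$ of distinct 6-cycles whose connecting edges form a matching.
\begin{itemize}
\item At the current end $C_k$, entered at $v_k$, it takes a shortcut pair $\{x_1,x_2\}$ of $C_k$ whose two matching edges go to distinct 6-cycles. It leaves through a vertex $x_2\ne v_k$ whose edge avoids $C_{k-1}$, possibly after swapping which edge plays the role of $e_{k-1}$.
\item If that normalization fails, then $C_{k-1}$ and $C_k$ are merged directly via $P_{x_1x_2}$, using one extra edge and getting credit $2$, for a cost change of $1+2-\frac{15}{4}<0$. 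This is the correct treatment of your $q=2$ situation.
\item When $x_2y_2$ lands in an earlier $C_j$, every intermediate cycle is entered and left at distinct vertices. The merged component is therefore bridgeless with at most two blocks. Merging $m=k-j+1$ cycles with $m$ new edges changes the cost by at most $3-\frac78m$, which is $\le 0$ for $m\ge 4$.
\item The short closure $m=3$ is handled in one of two ways. Either an edge is saved through a shortcut path; this is why the paper tracks \emph{nice} chains, whose penultimate cycle is entered and left through a shortcut pair. Or the three cycles are shown to form one 2VC component of credit $2$, with a reordering of the chain when $y_2=u_{k-2}$.
\end{itemize}
Some mechanism of this kind, guaranteeing distinct entry and exit vertices and treating short closures separately, is needed for your ring argument to go through.
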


\begin{proof}
    We say that a sequence $\langle C_1,e_1,C_2,e_2,\ldots,e_{k-1},C_k \rangle$ is a \emph{cycle-edge chain} if
    \begin{itemize}
        \item $C_i$'s are distinct 6-cycle components of $S$,
        \item $e_i=u_iv_{i+1}\in E(G)\setminus S$, where $u_i\in C_i$ and $v_{i+1}\in C_{i+1}$, 
        \item $\{e_1,\ldots,e_{k-1}\}$ is a matching (in particular, $u_j\neq v_j$ for every $1<j<k$).
    \end{itemize}
    Moreover, we say that such a cycle-edge chain is \emph{nice} if $\{u_{k-1},v_{k-1}\}$ is a shortcut pair of $C_{k-1}$. 

    In order to prove the lemma, we start by constructing a cycle-edge chain with $k=3$. 
    Let $C_2$ be an arbitrary $6$-cycle component and let $\{x_1,x_2\}$ be its shortcut pair with the matching $\{x_1y_1, x_2y_2\}$, where $y_1$ and $y_2$ belong to distinct $6$-cycle components $C_1$ and $C_3$, respectively. In this case, $\langle C_1,x_1y_1,C_2,x_2y_2,C_3 \rangle$ is a nice cycle-edge chain.

    We now use the following claims either to extend the cycle-edge chain or to obtain the desired $S'$.
    
    \begin{claim}\label{claim:VeryNice}
        Given a nice cycle-edge chain $\langle C_1,e_1,C_2,\ldots,e_{k-1},C_k \rangle$, in polynomial time we can compute either a longer (not necessarily nice) cycle-edge chain or a strongly canonical $2$-edge-cover $S'$ with fewer connected components than $S$ such that $\cost(S')\le \cost(S)$.
    \end{claim}
    \begin{proof}
        Let $\{x_1y_1, x_2y_2\}$ be the matching corresponding to the shortcut pair $\{x_1,x_2\}$ in $C_k$ such that $y_1$ and $y_2$ belong to distinct $6$-cycles. Let $P_{x_1 x_2}$ denote the corresponding shortcut path.
        
        We may assume that $v_k\notin \{x_1,x_2\}$ or $e_{k-1}\in \{x_1y_1,x_2y_2\}$. Indeed, otherwise, without loss of generality, $v_k=x_1$ and $y_1\neq u_{k-1}$. If $y_2\notin V(C_{k-1})$, we simply redefine $y_1:=u_{k-1}$. If $y_2=u_{k-1}$, we redefine $v_k:=x_2$. Finally, if $y_2\in V(C_{k-1})\setminus \{u_{k-1}\}$, then $S':=(S\setminus E(C_k))\cup \{e_{k-1},x_2y_2\}\cup P_{x_1x_2}$ satisfies that 
        \begin{align*}
        \cost(S') &=\cost(S) + |S'|-|S| + 2 - \cre(C_{k-1}) - \cre(C_{k}) \\
        &= \cost(S) + 3 - 2 \cdot 6 \cdot \frac{5}{16} \\ 
        &< \cost(S), 
        \end{align*}
        and hence it is the desired strongly canonical $2$-edge-cover. 

        Since $y_1$ and $y_2$ are assumed to belong to different 6-cycles, at least one of them, say $y_2$, is not contained in $V(C_{k-1})$. 
        It follows that $x_2\neq v_k$, for otherwise, the above argument would imply that $e_{k-1}=x_2y_2$, contradicting $y_2\notin V(C_{k-1})$.
        
        We consider the following cases separately. 
        
        \begin{enumerate}
        \item 
        Suppose that $y_2 \not\in \bigcup_{i=1}^{k-1} V(C_{i})$, that is, $y_2 \in V(C_{k+1})$, where $C_{k+1}$ is a 6-cycle component of $S$ distinct from $C_1,\ldots,C_k$. 
        In this case, we obtain a longer cycle-edge chain $\langle C_1,e_1,\ldots,C_k,e_k,C_{k+1} \rangle$, 
        where $e_k := x_2y_2$.
        \item 
            Suppose that $y_2 \in V(C_j)$ for some $j \in \{1, 2, \dots , k-2\}$. 
            Let $P$ be the shortcut path in $C_{k-1}$ corresponding to the shortcut pair $\{u_{k-1},v_{k-1}\}$, and 
            set 
            $S':=(S\setminus E(C_{k-1}))\cup P \cup \{e_j, \dots , e_{k-1}\} \cup \{x_2 y_2\}$; see Figure~\ref{fig:verynice_case2}. 
            One can see that $S'$ is a strongly canonical $2$-edge-cover with fewer connected components than $S$. In $S'$, there is a large connected component $C'$ that contains all the vertices in $C_j,\ldots,C_k$. 
            Note that $C'$ is a large bridgeless connected component that has at most two blocks:
            it has two blocks if $y_2 = u_j$, and one block otherwise, and hence its contribution to the credit is at most $3$. Therefore, 
            \begin{align*}
              \cost(S') &\le \cost(S)+|S'|-|S|+3-\sum_{i=j}^k\cre(C_i)  \\
                        &= \cost(S)+(k-j)+3-6\cdot\frac{5}{16}(k-j+1) \\
                        &= \cost(S) + \frac{9}{8} - \frac{7}{8}(k-j) \\
                        &\le \cost(S) + \frac{9}{8} - \frac{7}{8} \cdot 2 \\
                        &<\cost(S), 
            \end{align*}
        which shows that $S'$ is the desired strongly canonical $2$-edge-cover. 
        \end{enumerate}
        Since $y_2\notin V(C_{k-1})$, only the above two cases need to be considered, completing the proof of the claim.
    \end{proof}

\begin{figure}
    \centering
    \includegraphics[width=140mm]{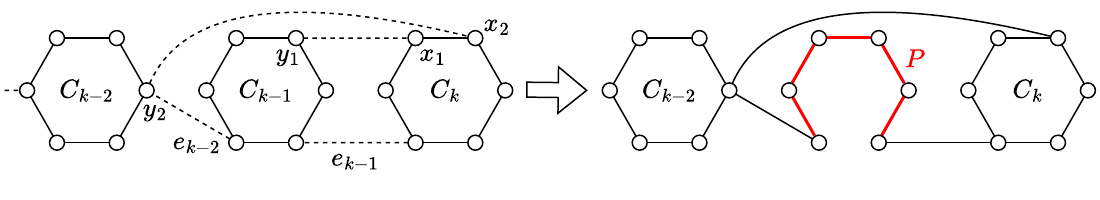}
    \caption{Construction of $S'$ when $y_2 \in V(C_{k-2})$ in a nice cycle-edge chain.}
    \label{fig:verynice_case2}
\end{figure}
    
    \begin{claim}\label{claim:Nice}
        Given a (not necessarily nice) cycle-edge chain $\langle C_1,e_1,C_2,\ldots,e_{k-1},C_k \rangle$, in polynomial time we can compute either a longer cycle-edge chain or a strongly canonical $2$-edge-cover $S'$ with fewer connected components than $S$ such that $\cost(S')\le \cost(S)$.
    \end{claim}
    \begin{proof}
        Let $\{x_1y_1, x_2y_2\}$ be the matching corresponding to the shortcut pair $\{x_1,x_2\}$ in $C_k$ such that $y_1$ and $y_2$ belong to distinct $6$-cycles. Let $P_{x_1 x_2}$ denote the corresponding shortcut path. By the same argument as in Claim~\ref{claim:VeryNice}, 
        we may assume that $v_k\notin \{x_1,x_2\}$ or $e_{k-1}\in \{x_1y_1,x_2y_2\}$.
        Then, without loss of generality, $y_2$ is not contained in $V(C_{k-1})$, and 
        it follows that $x_2\neq v_k$. 

        We consider the following cases separately. 
        
        \begin{enumerate}
        \item 
        Suppose that $y_2 \not\in \bigcup_{i=1}^{k-1} V(C_{i})$, that is, $y_2 \in V(C_{k+1})$, where $C_{k+1}$ is a 6-cycle component of $S$ distinct from $C_1,\ldots,C_k$. 
        In this case, we obtain a longer cycle-edge chain $\langle C_1,e_1,\ldots,C_k,e_k,C_{k+1} \rangle$, 
        where $e_k := x_2y_2$.
        \item 
        Suppose that $y_2 \in V(C_j)$ for some $j \in \{1, 2, \dots , k-3\}$. 
        We set $S':=S\cup \{e_j, \dots , e_{k-1}\} \cup \{x_2 y_2\}$, which is 
        a strongly canonical $2$-edge-cover with fewer connected components than $S$; see Figure~\ref{fig:nice_case2}. 
        In $S'$, there is a large connected component $C'$ that contains all the vertices in $C_j,\ldots,C_k$. 
        Note that $C'$ is a bridgeless connected component that has one or two blocks, 
        and hence its contribution to the credit is at most $3$. Therefore, 
        \begin{align*}
          \cost(S') &\le\cost(S)+|S'|-|S|+3-\sum_{i=j}^k\cre(C_i)  \\
                    &= \cost(S)+(k-j+1)+3-6\cdot\frac{5}{16}(k-j+1) \\
                    &= \cost(S)+ \frac{17}{8} -\frac{7}{8}(k-j) \\
                    &\le \cost(S)+ \frac{17}{8} -\frac{7}{8} \cdot 3 \\ 
                    &< \cost(S), 
        \end{align*}
        which shows that $S'$ is the desired strongly canonical $2$-edge-cover. 

\begin{figure}
    \centering
    \includegraphics[width=100mm]{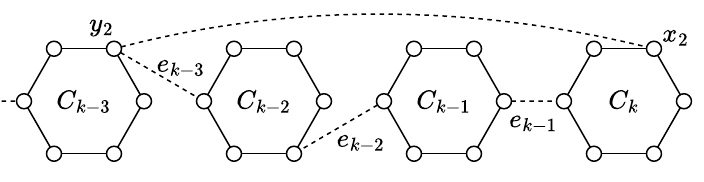}
    \caption{Construction of $S'$ when $y_2 \in V(C_{k-3})$.}
    \label{fig:nice_case2}
\end{figure}

        \item 
        Suppose that $y_2 \in V(C_{k-2})$. 
        We divide the argument into the following two subcases. 
        \begin{enumerate}
            \item Suppose that $x_1=v_k$. In this case, $y_1=u_{k-1}$ holds by the assumption. 
            We set $S':=(S\setminus E(C_k))\cup P_{x_1 x_2} \cup \{e_{k-1},e_{k-2}, x_2 y_2\}$, 
            which is a strongly canonical 2-edge-cover with fewer connected components than $S$. 
            In $S'$, three 6-cycle components $C_{k-2}$, $C_{k-1}$, and $C_k$ are merged into a single component $C'$. Note that $C'$ is a bridgeless connected component that has one or two blocks, 
            and hence its contribution to the credit is at most $3$. Therefore, 
            \begin{align*}
                \cost(S') &\le \cost(S)+|S'|-|S| + 3 -\cre(C_{k-2})-\cre(C_{k-1})-\cre(C_k) \\
                          &= \cost(S)+2+3- 3 \cdot 6 \cdot \frac{5}{16} \\
                          &< \cost(S),
            \end{align*} 
            which shows that $S'$ is the desired strongly canonical $2$-edge-cover. 
            
            \item Suppose next that $x_1 \neq v_k$. If $y_1 \not\in \bigcup_{i=1}^{k-1} V(C_{i})$, then we can obtain a longer cycle-edge chain in the same way as the first case. If $y_1 \in V(C_j)$ for some $j \in \{1, 2, \dots , k-3\}$, then we can construct the desired strongly canonical $2$-edge-cover $S'$ in the same way as in the second case, using $x_1 y_1$ instead of $x_2 y_2$. Therefore, as $y_1$ and $y_2$ belong to distinct $6$-cycles, $y_1\in V(C_{k-1})$ and $y_2 \in V(C_{k-2})$. 

            If $y_2=u_{k-2}$, then $\langle C_1,\ldots,C_{k-2},y_2x_2,C_k,x_1y_1,C_{k-1} \rangle$ is a nice cycle-edge chain; see Figure~\ref{fig:nice_case3b1}. Therefore, by applying Claim~\ref{claim:VeryNice}, we can obtain either a longer cycle-edge chain or the desired strongly canonical $2$-edge-cover.

\begin{figure}
    \centering
    \includegraphics[width=70mm]{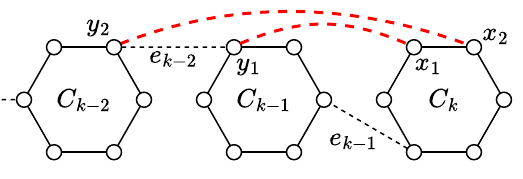}
    \caption{Construction of a nice cycle-edge chain.}
    \label{fig:nice_case3b1}
\end{figure}

\begin{figure}
    \centering
    \includegraphics[width=140mm]{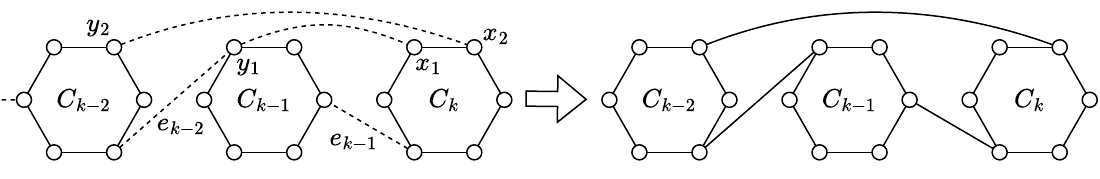}
    \caption{Construction of $S'$ when $y_2 \in V(C_{k-2})\setminus \{u_{k-2}\}$.}
    \label{fig:nice_case3b2}
\end{figure}

            If $y_2 \in V(C_{k-2})\setminus \{u_{k-2}\}$, then we set $S'=S\cup \{e_{k-2},e_{k-1},x_2y_2\}$; see Figure~\ref{fig:nice_case3b2}. 
            In $S'$, the vertices of $C_{k-2}$, $C_{k-1}$, and $C_k$ form a single $2$VC connected component $C'$, 
            whose contribution to the credit is $2$.
            Therefore, 
            \begin{align*}
                \cost(S') &=\cost(S)+3+2-\cre(C_{k-2})-\cre(C_{k-1})-\cre(C_k) \\
                          &= \cost(S)+3+2- 3 \cdot 6 \cdot \frac{5}{16} \\
                          &< \cost(S),
            \end{align*} 
            which shows that $S'$ is the desired strongly canonical $2$-edge-cover. 
        \end{enumerate}
       \end{enumerate}
        Since $y_2\notin V(C_{k-1})$, only the above three cases need to be considered, completing the proof of the claim.
    \end{proof}
    Since the length of a cycle-edge chain is $O(n)$, by applying Claim~\ref{claim:Nice} repeatedly, 
    we obtain the desired strongly canonical $2$-edge-cover in polynomial time. 
\end{proof}

We next deal with the general case with $6$-cycle components using Lemma~\ref{lem::annoying6cycles}.

\begin{lemma}\label{lem:exist6}
    Let $S$ be a strongly canonical 2-edge-cover.
    If there exists a 6-cycle component in $S$, then one can compute in polynomial time a strongly canonical 2-edge-cover $S'$ with fewer components than $S$ and with $\cost (S')\le \cost (S)$.
\end{lemma}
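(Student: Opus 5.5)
We split according to whether the hypothesis of Lemma~\ref{lem::annoying6cycles} holds. If, for every 6-cycle component of $S$, some shortcut pair $\{u,v\}$ has a corresponding matching $\{ux,vy\}$ with $x$ and $y$ in two distinct 6-cycle components, we apply Lemma~\ref{lem::annoying6cycles} directly. Otherwise, we fix a 6-cycle component $C$ for which no shortcut pair has this property. By Lemma~\ref{lem:6-cycle}, $C$ has a shortcut pair $\{u,v\}$ with shortcut path $P_{uv}$ and corresponding matching $\{ux,vy\}$. From this point we imitate the proofs of Lemmas~\ref{lem:no6_onead} and~\ref{lem:no6_twoad}. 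The key saving is that $\cre(C)=6\cdot\frac{5}{16}=\frac{15}{8}$, which exceeds the $\frac54$ or $\frac{25}{16}$ available for 4- or 5-cycles. This surplus should absorb most of the losses in those arguments.

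\emph{Subcase (a): $x$ and $y$ lie in the same component $C'$.} If $C'$ is large, we replace $E(C)$ by $P_{uv}\cup\{ux,vy\}$, which adds one edge. If the $x$–$y$ path in $C'$ uses a block or at least two bridges, a new block is formed, and the cost change is at most $1+1-\frac12-\frac{15}{8}<0$. If that path is a single bridge $xy$, we also delete $xy$, so the cost change is $\frac14\cdot 6-\frac{15}{8}<0$, as in case~3 of Lemma~\ref{lem:no6_onead}. If $C'$ is small (a 4-, 5- or 6-cycle), we merge $C$ and $C'$ into one 2VC component using $P_{uv}\cup\{ux,vy\}$ together with $C'$. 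This adds one edge and gains $2$ credits while losing at least $\frac{15}{8}+\frac54$, so the cost decreases.

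\emph{Subcase (b): $x\in V(C_L)$ and $y\in V(C_R)$ with $C_L\ne C_R$.} We start the path $P:=P_{uv}\cup\{ux,vy\}$, which has $|P|=7$, and extend it as in Claim~\ref{clm:merging-path} until P1–P4 hold. Extension at a 4- or 5-cycle endpoint uses Lemma~\ref{lem:shortcutpath}. At a 6-cycle endpoint we use Lemma~\ref{lem:6-cycle}. By the choice of $C$, whenever a 6-cycle endpoint is reached, we can either extend through it or close the path back into the current structure. Then, as in Lemma~\ref{lem:no6_twoad}, we set $S':=(S\setminus\bigcup E(C_i))\cup P$ and use \eqref{eq:cost}. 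When an endpoint is small, we additionally close it via its shortcut pair, as in the construction of $S''$ and $S'''$ there. The bound $-\frac1{16}|P|+\frac5{16}\le 0$ still holds, and an endpoint that is a 6-cycle contributes $\Delta\cre\le 1-\frac{15}{8}$, which is better than for 4-cycles. So every inequality of Lemma~\ref{lem:no6_twoad} goes through with slack.

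The main obstacle is the interaction between this extension process and 6-cycle components. Lemma~\ref{lem:6-cycle} provides a shortcut pair, but unlike Lemma~\ref{lem:shortcutpath} it does not let us prescribe the partner $x$ of a given vertex. So the analogue of P3/P4 at a 6-cycle endpoint must be obtained by a separate argument. We first try the 3-matching from Lemma~\ref{lem:3match} together with the case analysis in the proof of Lemma~\ref{lem:6-cycle}, showing that a suitable pair exists unless all three matching partners lie in one other component. That remaining configuration is then handled by the large slack $\frac{15}{8}$ through a direct merge as in subcase (a). Finally, we check that every operation preserves strong canonicity: no triangle or leaf-block with at most four vertices is created, and cycle-restriction holds because components are only merged.
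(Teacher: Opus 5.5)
Subcase (a) is fine and is essentially the paper's argument. Splitting off the single-bridge case is unnecessary, since losing even one bridge's $\frac14$ already gives $1+1-\frac14-\frac{15}{8}<0$.

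\textbf{The gap is in subcase (b), where you try to extend through, or close at, every 6-cycle endpoint.}
\begin{itemize}
\item Your justification ``by the choice of $C$'' does not apply. That choice only says that no shortcut pair of $C$ itself has its partners in two distinct 6-cycle components. It says nothing about other 6-cycle components that later become endpoints.
\item There is no prescribed-partner version of Lemma~\ref{lem:6-cycle} to fall back on. Take a chordless 6-cycle $(a_1,\dots,a_6)$ in which $a_1$ is the only neighbour of $u_1$, and $a_2,a_6$ have no neighbours outside the cycle. It has no shortcut pair whose matching uses $u_1$, even when the 3-matching partners of $a_1,a_3,a_4$ lie in different components. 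So your dichotomy (``a suitable pair exists unless all three partners lie in one other component'') is false as stated.
\item Without the closing step, your accounting also fails. In Lemma~\ref{lem:no6_twoad}, a 4- or 5-cycle endpoint is affordable only because closing it saves a further $\frac34$. Its effective change is then at most $-1$, as for a large endpoint. An unclosed 6-cycle endpoint gives only $-\frac78$, so there is no slack. Two 6-cycle endpoints with $|P|=7$ increase the cost by $2-\frac18-\frac78-\frac78=\frac18$.
\end{itemize}

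\textbf{The missing idea is that nothing needs to be done at 6-cycle endpoints.} P3 and P4 only concern 4- and 5-cycles, so Claim~\ref{clm:merging-path} applies verbatim. The choice of $C$ is used exactly once, as follows.
\begin{itemize}
\item Claim~\ref{clm:merging-path} changes an endpoint component only when it lengthens $P$.
\item Hence, if the final path still has $|P|=7$, its internal vertices are exactly $V(C)$, and its endpoints lie in the components of $x$ and $y$. By the choice of $C$, these are not both 6-cycles.
\item Consequently, if both final endpoints are 6-cycles, $P$ has absorbed another small component, so $|P|\ge 6+4+1=11$. This is property P5 in the paper, and it gives $2-\frac38-\frac78-\frac78<0$.
\item All other combinations are handled with just $|P|\ge 7$. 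The cases large/6-cycle and closed 4- or 5-cycle/6-cycle come out exactly at $0$.
\end{itemize}
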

\begin{proof}
    If every 6-cycle component satisfies the condition in Lemma~\ref{lem::annoying6cycles} (i.e., 
    there exists a shortcut pair $\{u,v\}$ with the corresponding matching $\{ux,vy\}$ such that $x$ and $y$ belong to distinct $6$-cycle components), then we can obtain the desired strongly canonical 2-edge-cover $S'$ by using Lemma~\ref{lem::annoying6cycles}. Otherwise, let $C$ be a $6$-cycle component violating this condition.     
    By Lemma~\ref{lem:6-cycle}, there exists a shortcut pair $\{u,v\}$ of $C$.
    Let $P_{uv}$ be the shortcut path and $\{ux,vy\}$ be the corresponding matching.
    Then, we have that $x$ and $y$ belong to the same component or at least one of the components is not a $6$-cycle. 
    
    Suppose that $x$ and $y$ are contained in the same component $C'$.
    Similarly to part of the proof of Lemma~\ref{lem:no6_onead}, we set $S':=(S\setminus E(C))\cup P_{uv}\cup\{ux,vy\}$.
    
    \begin{enumerate}
        \item If $C'$ is large, then the shortest $x$-$y$ path $P$ in $C'$ includes some edges of a block $B$ or at least one bridge.
        Thus, $S'$ loses $1$ credit assigned to block $B$, or at least $\frac{1}{4}$ credits corresponding to bridges in $P$.
        Since $S'$ contains a new block $B'$ that includes both $P_{uv}$ and $P$, and $S'$ has one more edge than $S$, 
        we obtain
        \begin{align*}
            \cost (S')&\le \cost (S)+|S'|-|S|-\cre (C)-\frac{1}{4}+\cre (B')\\
            &=\cost (S)+1-6\cdot \frac{5}{16} -\frac{1}{4}+1\\
            &<\cost (S).
        \end{align*}
        \item If $C'$ is small, $S'$ has a new 2VC component $C''$ spanning the vertices of $C$ and $C'$.
        Since the contribution of $C''$ to the total credit is $2$, we obtain
        \begin{align*}
            \cost (S)&=\cost (S)+|S'|-|S|-\cre (C)-\cre (C')+2\\
            &\le \cost (S)+1-6\cdot\frac{5}{16}-4\cdot\frac{5}{16}+2\\
            &<\cost (S).
        \end{align*}
    \end{enumerate}
    
    Suppose next that $x$ and $y$ are contained in different components. 
    In this case, $P:=P_{uv}\cup\{ux,vy\}$ satisfies properties P1 and P2 in the proof of Lemma~\ref{lem:no6_twoad}, and  $|P|\ge 7$ holds. Here, we restate the properties for convenience. 
    
\begin{enumerate}[label=P\arabic*.]
    \item The set $\{u_1, u_2, \dots, u_k\}$ of internal vertices of $P$ is the union of the vertex sets of some small components $C_1,C_2,\dots,C_q$ of $S$. The vertices $u_L$ and $u_R$ belong to distinct components $C_L$ and $C_R$ of $S$, respectively.
    \item The edges $u_1u_2$ and $u_{k-1}u_k$ belong to $E(C_i)$ and $E(C_j)$, respectively, for some components $C_i$ and $C_j$ in $S$ (not necessarily distinct).
    \item If $C_L$ is a 4- or 5-cycle, then there is a shortcut pair $\{u_L,v_L\}$ of $C_L$ such that its corresponding matching is $\{u_Lu_1, v_Lw_L\}$ with $w_L\in \{u_2,\dots, u_k\}\cup V(C_R)$.
    \item If $C_R$ is a 4- or 5-cycle, then there is a shortcut pair $\{u_R,v_R\}$ of $C_R$ such that its corresponding matching is $\{u_Ru_{k}, v_Rw_R\}$ with $w_R\in \{u_1,\dots, u_{k-1}\}\cup V(C_L)$.
\end{enumerate}

    \noindent
    Similarly to the proof of Lemma~\ref{lem:no6_twoad}, 
    by repeatedly applying Claim~\ref{clm:merging-path} to replace $P$ with a new path $P'$ whenever it does not satisfy P3 or P4, 
    we can find a path $P$ with $|P| \ge 7$ satisfying P1--P4 in polynomial time.
    Here, we note that Claim~\ref{clm:merging-path} holds even when there exist 6-cycle components in $S$. 
    
    By construction and Property P1, the internal vertices of $P$ form the union of the vertex sets of $C$ and some other small components. Hence, either $|P|\ge 11$, or the internal vertices of $P$ are exactly the vertex set of $C$, in which case $|P|=7$. Since at least one of $x$ and $y$ is not contained in a $6$-cycle, this implies the following property.
    
    \begin{enumerate}
    \item[P5.]
    If $C_L$ and $C_R$ are 6-cycles, then $|P| \ge 11$. 
    \end{enumerate}

By using  the obtained path $P$, we set $S':=(S\setminus\bigcup_{i=1}^{q}E(C_i))\cup P$, which is a cycle-restricted 2-edge-cover.
Note that $S'$ is strongly canonical unless $C_L$ or $C_R$ is a 4-cycle.
When constructing $S'$ from $S$, the sum of the loss of $\cre(C_1),\dots,\cre(C_q)$ and the gain in credits from the new bridges in $P$ is 
$\frac{1}{4}|P|-\frac{5}{16}(|P|-1)=-\frac{1}{16}|P|+\frac{5}{16}$. 
Hence, by the same argument as for Inequality~\eqref{eq:cost}, $\cost (S')$ can be evaluated as 
\begin{align} \label{eq:cost2}
    \cost (S')&= \cost (S)+2-\frac{1}{16}|P|+\frac{5}{16}+\Delta \cre (C_L)+\Delta \cre (C_R).
\end{align}
Recall that $\Delta \cre (C_L)$ and $\Delta \cre (C_R)$ are changes in the contributions of $C_L$ and $C_R$ to the total credit.
In particular, $\Delta \cre (C_L)=-2+1=-1$ if $C_L$ is large, and $\Delta \cre (C_L)=-6 \cdot \frac{5}{16}+1=-\frac{7}{8}$ if $C_L$ is a 6-cycle; the same holds for $C_R$.

If one of $C_L$ and $C_R$ is large and the other is either large or a 6-cycle, then $S'$ is a strongly canonical 2-edge-cover with
\begin{align*}
  \cost (S')&= \cost (S)+2-\frac{1}{16}|P|+\frac{5}{16}+\Delta \cre (C_L)+\Delta \cre (C_R)\\
    &\le\cost (S) +2-\frac{1}{8}-\frac{7}{8}-1 \\
    &= \cost (S), 
\end{align*}
where we used $|P| \ge 7$ in the inequality. 
If both $C_L$ and $C_R$ are $6$-cycles, then $|P|\ge 11$ holds by P5, and hence 
\begin{align*}
  \cost (S')&= \cost (S)+2-\frac{1}{16}|P|+\frac{5}{16}+\Delta \cre (C_L)+\Delta \cre (C_R)\\
    &\le\cost (S) +2-\frac{3}{8} -\frac{7}{8}-\frac{7}{8}\\
    &< \cost (S).
\end{align*}
Therefore, $S'$ is the desired strongly canonical 2-edge-cover. 

Suppose that $C_L$ or $C_R$ (w.l.o.g.~$C_L$) is a 4- or 5-cycle.
Let $\{u_L,v_L\}$ be the shortcut pair of $C_L$ given by P3, with the shortcut path $P_{u_Lv_L}$ and the corresponding matching $\{u_Lu_1,v_Lw_L\}$.
If $w_L=u_2$, then, by the same argument as in the proof of Lemma~\ref{lem:no6_twoad}, 
we can construct a strongly canonical 2-edge-cover $S''$ with fewer components than $S$, without increasing the cost.
Thus, we can assume $w_L\in\{u_3,\dots,u_k\}\cup V(C_R)$.
We replace $S'$ with $S'':=(S'\setminus E(C_L))\cup P_{u_Lv_L}\cup\{v_Lw_L\}$ to decrease at least three bridges. 
We perform a case analysis based on the type of $C_R$ and evaluate $\cost (S')$.

\begin{enumerate}
    \item Assume that $C_R$ is large or a 6-cycle.
    By Inequality~\eqref{eq:cost2}, $\cost(S'')$ can be evaluated as
    \begin{align*}
        \cost (S'')&\le\cost(S')-\frac{3}{4}
        \\&= \cost (S)+2-\frac{1}{16}|P|+\frac{5}{16}+\Delta \cre (C_L)+\Delta \cre (C_R)-\frac{3}{4}
        \\&\le \cost (S)+2-\frac{1}{8}-4\cdot\frac{5}{16}+1-\frac{7}{8}-\frac{3}{4}
        \\&= \cost (S), 
    \end{align*}
    where we used $|P| \ge 7$. 
    Note that even if $C_L$ is a 4-cycle, the resulting $S''$ is strongly-canonical after the operation.
    \item If $C_R$ is a 4- or 5-cycle, then let $\{u_R,v_R\}$ be the shortcut pair of $C_R$ given by P4, with the shortcut path $P_{u_Rv_R}$ and the corresponding matching $\{u_Ru_k,v_Rw_R\}$. 
    Similarly to $w_L\in\{u_3,\dots,u_k\}\cup V(C_R)$, we may assume that $w_R\in\{u_1,\dots,u_{k-2}\}\cup V(C_L)$.
    Then, we define $S'''$ as $S''':=(S''\setminus E(C_R))\cup P_{u_Rv_R}\cup\{v_Rw_R\}$.
    In this case, compared to $S'$, $S'''$ loses either one block and at least five bridges, or at least six bridges (see Figure~\ref{fig:merging}), 
    which implies that $S'''$ loses at least $\frac{3}{2}$ credits.
    Thus, we obtain
    \begin{align*}
        \cost (S''')&\le \cost(S')-\frac{3}{2}
        \\&= \cost (S)+2-\frac{1}{16}|P|+\frac{5}{16}+\Delta \cre (C_L)+\Delta \cre (C_R)-\frac{3}{2}\\
        &\le \cost (S)+2-\frac{1}{8}-4\cdot \frac{5}{16} +1-4\cdot \frac{5}{16} +1-\frac{3}{2}\\
        &<\cost (S), 
    \end{align*}
    where we used $|P| \ge 7$. 
    Note that even if $C_L$ or $C_R$ is a 4-cycle, the resulting $S'''$ is strongly canonical after the operation.
\end{enumerate}

    From the above case analysis, we can construct a strongly canonical 2-edge-cover with fewer components than $S$, without increasing the cost.
\end{proof}

As long as a strongly canonical 2-edge-cover $S$ contains small components, Lemmas~\ref{lem:no6_onead}, \ref{lem:no6_twoad}, and \ref{lem:exist6} allow us to construct another strongly canonical 2-edge-cover with fewer components than $S$, without increasing the cost.
Therefore, we can eventually construct a strongly canonical 2-edge-cover $S'$ with no small components such that $\cost(S')\le\cost(S)$, and hence Lemma~\ref{lem:trans} follows.

\section{Concluding Remarks}
In this paper, we presented an approximation algorithm for 2-VCSS that improves upon the previous best approximation ratio of $\frac43$ to $\frac{21}{16}+\varepsilon$ for any $\varepsilon>0$.
The $\varepsilon$ factor in the approximation ratio arises from the use of a PTAS, rather than an exact algorithm, for computing a maximum $\mathcal{T}$-free 2-matching. Therefore, if a maximum $\mathcal{T}$-free 2-matching can be computed exactly in polynomial time, this factor could be eliminated. It is not clear whether the polynomial-time algorithms for the triangle-free 2-matching problem~\cite{HartD,R-HartD} can be extended to $\mathcal{T}$-free 2-matchings. 

\bibliographystyle{abbrv}
\bibliography{ref}

\appendix
\section{Remarks on Lemma~\ref{lem:Tfree-2match}}
\label{sec:remarkparallel}

In the original statement of Lemma~\ref{lem:Tfree-2match}, i.e., \cite[Corollary 4.1]{KN2025}, it is assumed that $G$ has no parallel edges. 
In particular, in its proof, this assumption is used to establish \cite[Observation 1]{KN2025}. 
However, the argument also applies to graphs with parallel edges, provided that no edge of any triangle in $\mathcal{T}$ has a parallel edge. 
To complement this, we show that \cite[Observation 1]{KN2025} holds under this assumption as well. 

\begin{lemma}[\mbox{\cite[Observation 1]{KN2025} under a weaker assumption}]
\label{lem:T(T)}
    Let $G$ be a graph that may contain self-loops and parallel edges, and 
    let $\mathcal{T}$ be a subset of triangles in $G$ such that no edge of any triangle in $\mathcal{T}$ has a parallel edge.  
    Let $M$ be a 2-matching, and $T$ be a triangle (possibly, $T\notin \mathcal{T}$) in $G$ such that $|M\cap E(T)|=2$.
    Then, $M$ contains no triangle $T' \in \mathcal{T}$ such that $E(T') \cap E(T) \neq \emptyset$.
\end{lemma}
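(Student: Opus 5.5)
Suppose, for contradiction, that $M$ contains a triangle $T'\in\mathcal{T}$ with $E(T')\cap E(T)\neq\emptyset$. I will show that $T'$ must be equal to $T$ as an edge set, and then the count $|M\cap E(T)|=2$ contradicts $E(T')\subseteq M$. The proof uses only the degree bound of a 2-matching and the hypothesis that edges of triangles in $\mathcal{T}$ have no parallel copies.

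Write $T=(a,b,c)$ and let $M\cap E(T)=\{ab,bc\}$, so the edge $ac$ of $T$ is not in $M$. Let $e\in E(T')\cap E(T)$. Then $e\in M$, so $e$ is $ab$ or $bc$; by symmetry take $e=ab$. The vertex $b$ is incident to both $ab$ and $bc$ in $M$. Since $M$ is a 2-matching, these are the only $M$-edges at $b$, where a self-loop would count twice and so cannot occur here.

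The triangle $T'$ contains $b$ and has two edges at $b$, both in $M$. These two edges must therefore be exactly $ab$ and $bc$, so $V(T')=\{a,b,c\}$ and the third edge of $T'$ joins $a$ and $c$. That third edge may a priori be a parallel copy of the edge $ac$ of $T$. This is the only delicate step. Because $T'\in\mathcal{T}$, none of its edges has a parallel edge in $G$, so the unique $a$–$c$ edge of $G$ is $ac$ itself.

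It follows that $E(T')=E(T)$, and hence $ac\in M$. This gives $|M\cap E(T)|=3$, contradicting the assumption that it equals $2$. The main obstacle is ruling out parallel copies, and it is handled exactly by the hypothesis on $\mathcal{T}$. The hypothesis is needed even for the edges $ab$ and $bc$, but the degree argument at $b$ already pins those down as the specific $M$-edges.
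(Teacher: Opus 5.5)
Your proof is correct and follows the paper's argument essentially step for step. The degree bound at the shared vertex $b$ forces $ab, bc \in E(T')$, and the no-parallel-edge hypothesis on $T' \in \mathcal{T}$ forces its third edge to be $ac \notin M$, which is a contradiction. (Your closing remark is slightly misphrased: the hypothesis is not actually needed for $ab$ and $bc$, since the degree argument already identifies them.)
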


\begin{proof}
    Let $T = (u, v, w)$ be a triangle with edges $e_1=uv$, $e_2=vw$, and $e_3=wu$ 
    such that $\{e_1, e_2\} \subseteq M$ and $e_3 \not\in M$. 
    Assume to the contrary that $M$ contains a triangle $T' \in \mathcal{T}$ such that $E(T') \cap E(T) \neq \emptyset$.  
    Since $M$ contains at most two edges incident to $v$, we have that $M \cap \delta(\{v\}) = \{e_1, e_2\}$, and hence  
    $\{e_1, e_2\} \subseteq E(T')$. 
    Therefore, $T'$ contains an edge $e'_3$ between $w$ and $u$. 
    By the assumption, $e'_3$ has no parallel edge, implying that $e'_3 = e_3$. 
    This contradicts the fact that $e_3 \not\in M$.   
\end{proof}

\end{document}